\newcommand{\real}{\ensuremath{\mathbb{R}}}
\newcommand{\smat}[1]{\ensuremath{\left[\begin{smallmatrix}#1\end{smallmatrix}\right]}}
\newcommand{\bmat}[1]{\ensuremath{\begin{bmatrix}#1\end{bmatrix}}}
\newcommand{\tu}[1]{{\textup{#1}}}
\newcommand{\mb}[1]{\mathbf{#1}}
\newcommand{\dVx}{\ensuremath{\frac{\partial V}{\partial x}(x)}}
\newcommand{\dVxS}{\ensuremath{\tfrac{\partial V}{\partial x}(x)}}
\newcommand{\Zc}{\ensuremath{Z_{\tu{c}}}}
\DeclareMathSymbol{\cdoT}{\mathord}{symbols}{"01}
\DeclareMathOperator{\diag}{diag}
\newtheorem{remark}{Remark}
\newtheorem{lemma}{Lemma}
\newenvironment{proof}{\textit{Proof.}}{\hfill \hspace*{0.01pt}  \hfill$\square$}
\newenvironment{proofof}[1]{\textit{Proof of #1.}}{\hfill \hspace*{0.01pt}  \hfill$\square$}
\newtheorem{corollary}{Corollary}
\newtheorem{theorem}{Theorem}
\newtheorem{proposition}{Proposition}
\newtheorem{assumption}{Assumption}
\newcommand\thisis{arxiv}
\newcommand{\submission}[1]{\ifthenelse{\equal{\thisis}{submission}}{#1}{}}
\newcommand{\arxiv}[1]{\ifthenelse{\equal{\thisis}{arxiv}}{#1}{}}
\begin{document}

\begin{frontmatter}
\title{
Data-driven control via Petersen's lemma
\thanksref{footnoteinfo}}%
\thanks[footnoteinfo]{This paper was not presented at any IFAC 
meeting. Corresponding author A.~Bisoffi. This research is partially supported by a Marie Skłodowska-Curie
COFUND grant, no. 754315 and by NWO, project no. 15472.}
\author[gron]{Andrea Bisoffi}\ead{a.bisoffi@rug.nl},
\author[gron]{Claudio De Persis}\ead{c.de.persis@rug.nl},
\author[fi]{Pietro Tesi}\ead{pietro.tesi@unifi.it} 
\address[gron]{ENTEG and J.C. Willems Center for Systems and Control, University of Groningen, 9747 AG Groningen, The Netherlands}
\address[fi]{DINFO, University of Florence, 50139 Florence, Italy}       
\begin{keyword}
data-based control, optimization-based controller synthesis, analysis of systems with uncertainty, robust control of nonlinear systems, linear matrix inequalities, sum-of-squares
\end{keyword}

\begin{abstract}
We address the problem of designing a stabilizing closed-loop control law directly from input and state measurements collected in an open-loop experiment. 
In the presence of noise in data, we have that a set of dynamics could have generated the collected data and we need the designed controller to stabilize such set of data-consistent dynamics robustly.
For this problem of data-driven control with noisy data, we advocate the use of a popular tool from robust control, Petersen's lemma. 
In the cases of data generated by linear and polynomial systems, we conveniently express the uncertainty captured in the set of data-consistent dynamics through a matrix ellipsoid, and we show that a specific form of this matrix ellipsoid makes it possible to apply Petersen's lemma to all of the mentioned cases. 
In this way, we obtain necessary and sufficient conditions for data-driven stabilization of linear systems through a linear matrix inequality. 
The matrix ellipsoid representation enables insights and interpretations of the designed control laws. 
In the same way, we also obtain sufficient conditions for data-driven stabilization of polynomial systems through (convex) sum-of-squares programs. 
The findings are illustrated numerically.
\end{abstract}
\end{frontmatter}

\section{Introduction}
\label{sec:intro}

\subsection*{Motivation and Petersen's lemma}

Data-driven control design is a relevant methodology to tune controllers whenever modelling from first principles is challenging, the model parameters are possibly redundant and cannot be unambiguously identified through suitable experiments, while (possibly large) datasets can be obtained from the process to be controlled. 
Thanks to the technological trend that measurements are increasingly easier to access and retrieve, using data to directly design controllers has witnessed a renewed surge in interest in recent years \cite{Dai2018cdc,Coulson2018,recht_annual,Baggio2019,depersis2020tac,berberich2019robust,vanwaarde2020noisy}.

These recent developments have been drawing results from classical areas of control theory such as behavioural theory \cite{Coulson2018,depersis2020tac,dorfler2021bridging}, set-membership system identification, and robust control \cite{berberich2019robust,Dai2018cdc}. 
A pivotal role in many of these developments has been played by the so-called fundamental lemma by Willems et al.~\cite[Thm.~1]{willems2005note}; qualitatively speaking, this result shows that for a linear system, controllability and persistence of excitation ensure that its representation through dynamical matrices $(A,B)$ is equivalent to a representation through a finite-length trajectory; however, such trajectory is assumed not to be affected by noise. 
The inevitable presence of noise in data, then, prevents from representing equivalently the actual system and induces rather a set of systems that could have generated the noisy data for a given bound on the noise, i.e., the set of systems consistent with data. 
This set, which we call $\mathcal{C}$, plays a central role since
control design must therefore target all systems in $\mathcal{C}$, which are indistinguishable from each other based on data.

A natural way to address this uncertainty induced by noisy data is via robust control tools: e.g., system level synthesis \cite{Dean2018journ,Xue2020datadriven,anderson2019system},
Young's inequality for matrices \cite{depersis2020tac}, matrix generalizations of the S-procedure \cite{vanwaarde2020noisy,ferizbegovic2019learning}, Farkas's lemma \cite{Dai2018cdc,dai2021semi}, linear fractional transformations \cite{berberich2019robust,berberich2019combining}.
We advocate here the use of another robust control tool for data-driven control, Petersen's lemma \cite{petersen1987stabilization,petersen1986riccati}. 
This lemma, whose strict and nonstrict versions we report later in Facts~\ref{fact:petersen-ext} and \ref{fact:petersen-nonstrict}, can be seen as a matrix elimination method since, instead of verifying for all matrices bounded in norm a certain inequality, one can \emph{equivalently} verify another inequality where such matrices do not appear.
The utility of Petersen's lemma in the realm of robust control has been featured in \cite{shcherbakov2008extensions,khlebnikov2008petersen,ji2016extension}. Petersen's lemma underpins the data-based results of this work, as detailed next.

\subsection*{Contributions}

Our main contributions are the following.
\emph{(C1)} We bring Petersen's lemma to the attention as a powerful 
tool for data-driven control.
\emph{(C2)} For linear systems, we provide by it necessary and sufficient conditions for quadratic stabilization, which are
alternative to those in~\cite{vanwaarde2020noisy}. 
These conditions take the convenient form of linear matrix inequalities.
\emph{(C3)} We give several insights on the design conditions and, in particular, establish connections with certainty equivalence and robust indirect control, which have been extensively investigated for stochastic noise models, e.g., \cite{Dean2018journ,ferizbegovic2019learning,treven2021learning}.
\emph{(C4)} For polynomial systems, we obtain new sufficient conditions for data-driven control with respect to \cite{dai2021semi,Guo2020}.
These conditions are tractably relaxed into alternate (convex) sum-of-squares programs.

\subsection*{Relations with the literature}

We assume an upper bound on the norm of the noise sequence, 
which is the so-called \emph{unknown-but-bounded} noise paradigm \cite{hjalmarsson1993discussion}.
This makes our approach different from those considering stochastic noise descriptions \cite{Dean2018journ,recht_annual,ferizbegovic2019learning,treven2021learning} 
and similar in nature to set-membership identification and control \cite{fogel1979system,milanese2004set,tanaskovic2017data}.
The use of robust control tools to counteract the uncertainty induced by unknown-but-bounded noise is quite natural and has been pursued in 
\cite{berberich2019robust,berberich2019combining,dai2021semi,depersis2020tac,Guo2020,vanwaarde2020noisy}.
Next, we compare with these works referring to our aforementioned contributions \emph{(C1)}-\emph{(C4)}.\newline
\emph{(C1)} The use of Petersen's lemma differentiates our approach from those in \cite{depersis2020tac,berberich2019robust,berberich2019combining,vanwaarde2020noisy}, which also address data-driven stabilization of linear systems (besides $\mathcal{H}_2$, $\mathcal{H}_\infty$ or quadratic performance).
In \cite{Bisoffi2020bilinear}, we used Petersen's lemma only as a sufficient condition \cite[Fact~1]{Bisoffi2020bilinear} to obtain a data-driven controller for structurally different bilinear systems. \newline
\emph{(C2)} For linear systems in discrete time, \cite{vanwaarde2020noisy} provided necessary and sufficient conditions for data-based stabilization as we do here. 
The differences are illustrated in detail in Section~\ref{sec:comparison}. 
In a nutshell, here we operate under an easy-to-enforce condition stemming from persistence of excitation instead of under a generalized Slater condition, and the former (but not the latter) 
can be seamlessly satisfied also in the relevant special case of ideal data (i.e., without noise).\newline
\emph{(C3)} For the considered noise setting, the uncertainty set $\mathcal C$ consists in a \emph{matrix ellipsoid}, whose
center is the (ordinary) least-squares estimate of the system dynamics, and whose
size depends on the noise bound. This justifies why \emph{certainty-equivalence} control can be expected to work well
in regimes of small uncertainty (small noise), which agrees with recent works on performance of certainty-equivalence control for linear quadratic control
\cite{mania2019certainty,dorfler2021certainty}. 
On the other hand, this also explains why robust design 
is generally needed to have stability guarantees, which is 
also the main idea behind the \emph{robust indirect} control approaches \cite{Dean2018journ,ferizbegovic2019learning,treven2021learning}
under a stochastic noise description. 
On a related note, we introduced the notion of matrix ellipsoid in \cite{Bisoffi2021tradeoffs}, which had however a quite different focus and research question.
\newline
\emph{(C4)} Data-driven control of polynomial systems was proposed also in~\cite{Guo2020,dai2021semi}. 
As in~\cite{Guo2020}, we use Lyapunov methods to obtain sufficient conditions for data-based global asymptotic stabilization. 
Whereas \cite{Guo2020} parametrizes the Lyapunov function in a specific way, the present data-based conditions parallel naturally the classical model-based ones in \cite{khalil2002nonlinear} since they correspond to enforcing those model-based conditions (through Petersen's lemma) for all systems consistent with data, which leads to succinct derivations.
Due to this natural parallel, the present approach appears to be extendible with appropriate modifications to other cases where Lyapunov(-like) conditions occur, as we do for \emph{local} asymptotic stabilization in Corollary~\ref{cor:extens to LAS}.
On the other hand, \cite{dai2021semi} follows a radically different approach. 
Instead of Lyapunov functions, it uses \emph{density functions} by Rantzer \cite{rantzer2001dual} to give a necessary and sufficient condition for data-based stabilization, which however needs to be relaxed into a quadratically-constrained quadratic program through sum of squares and then into a semidefinite program through moment-based techniques for tractability.

\subsection*{Structure}

In Section~\ref{sec:start}, we recall Petersen's lemma, formulate the problem and derive some properties of the set $\mathcal{C}$. 
In Section~\ref{sec:data-driv lin:main results} we provide our main results for linear systems and comment the results in Section~\ref{sec:lin discussion}. 
In Section~\ref{sec:data-driven polynomial} we provide our main result for polynomial systems. 
All results are exemplified numerically in Section~\ref{sec:num ex}.

\section{Preliminaries and problem setting}
\label{sec:start}

\subsection{Notation and Petersen's lemma}
\label{sec:notation and petersen}

For a vector $a$, $|a|$ denotes its 2-norm.
For a matrix $A$, $\|A\|$ denotes its induced 2-norm, which is equivalent to the largest singular value of $A$; moreover, for a scalar $a>0$, $\| A \| \le a$ if and only if $A^\top A \preceq a^2 I$ where $I$ is the identity matrix.
For matrices $A$, $B$ and $C$ of compatible dimensions, we abbreviate $A B  C (AB)^\top$ to $A B \cdoT C[\star]^\top$, where the dot in the second expression clarifies unambiguously that $AB$ are the terms to be transposed. 
For matrices $A=A^\top$, $B$, $C=C^\top$, we also abbreviate the symmetric matrix $\smat{A & B \\ B^\top & C}$ as $\smat{A & B \\ \star & C}$ or $\smat{A & \star \\ B^\top & C}$. 
For a positive semidefinite matrix $A$, $A^{1/2}$ denotes the unique positive semidefinite root of $A$.
For a matrix $A$, $A^\dagger$ denotes the Moore-Penrose generalized inverse of $A$, which is uniquely determined by certain axioms \cite[p.~453, 7.3.P7]{horn2013matrix}.
For positive integers $n$, $r$ and the set $\mathcal{P}$ of polynomials $p \colon \real^n \to \real$ (resp., the set $\mathcal{P}_{\tu{m}}$ of matrix polynomials $P \colon \real^n \to \real^{r\times r}$), the set $\mathcal{S} \subset \mathcal{P}$ (resp. $\mathcal{S}_{\tu{m}} \subset \mathcal{P}_{\tu{m}}$) denotes the set of sum-of-squares polynomials (resp., the set of sum-of-squares matrix polynomials) in the variable $x \in \real^n$; see \cite{chesi2010lmi} and references therein for more details on these and other sum-of-squares notions.

Petersen's lemma is the essential tool we use to address data-driven control design. 
First, we present in the next fact a version where inequalities are strict.

\begin{fact}[Strict Petersen's lemma]
\label{fact:petersen-ext}
Consider matrices $\mb{C} \in \real^{n \times n}$, $\mb{E} \in \real^{n \times p}$, $\overline{\mb{F}}  \in \real^{q \times q}$, $\mb{G} \in \real^{q \times n}$ with $\mb{C}=\mb{C}^\top$ and $\overline{\mb{F}} = \overline{\mb{F}}^\top \succeq 0$, and let $\mathcal{F}$ be
\begin{equation}
\label{set cal F}
\mathcal{F}:=\{\mb{F} \in \real^{p \times q}\colon \mb{F}^\top \mb{F} \preceq \overline{\mb{F}} \}. 
\end{equation}
Then,
\begin{subequations}
\begin{equation}
\label{petersen-ext:for all}
\mb{C}+\mb{E} \mb{F} \mb{G} + \mb{G}^\top \mb{F}^\top \mb{E}^\top \prec 0 \text { for all } \mb{F} \in \mathcal{F}
\end{equation}
if and only if there exists $\lambda>0$ such that
\begin{equation}
\label{petersen-ext:multipl}
\mb{C}+\lambda \mb{E} \mb{E}^\top +\lambda^{-1} \mb{G}^\top  \overline{\mb{F}} \mb{G} \prec 0.
\end{equation}
\end{subequations}
\end{fact}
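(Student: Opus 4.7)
The plan is to treat the two implications separately: sufficiency is a direct completion of squares, while necessity requires a convex-analytic step that elevates a pointwise AM--GM bound into a single uniform multiplier $\lambda$. For sufficiency, expanding $(\sqrt{\lambda}\,\mb{E}-\lambda^{-1/2}\mb{G}^\top\mb{F}^\top)(\sqrt{\lambda}\,\mb{E}-\lambda^{-1/2}\mb{G}^\top\mb{F}^\top)^\top\succeq 0$ and using $\mb{F}^\top\mb{F}\preceq\overline{\mb{F}}$ yields $\mb{E}\mb{F}\mb{G}+\mb{G}^\top\mb{F}^\top\mb{E}^\top\preceq\lambda\mb{E}\mb{E}^\top+\lambda^{-1}\mb{G}^\top\overline{\mb{F}}\mb{G}$, so \eqref{petersen-ext:multipl} immediately implies \eqref{petersen-ext:for all}.

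For necessity I would first reduce to $\overline{\mb{F}}=I$ via $\mb{F}=\tilde{\mb{F}}\,\overline{\mb{F}}^{1/2}$ and $\tilde{\mb{G}}:=\overline{\mb{F}}^{1/2}\mb{G}$, under which $\mathcal{F}$ becomes $\{\tilde{\mb{F}}:\|\tilde{\mb{F}}\|\le 1\}$ (a small argument is needed when $\overline{\mb{F}}$ is singular, but the key identity still holds on the range of $\overline{\mb{F}}^{1/2}$). For each fixed $x$, the rank-one choice $\tilde{\mb{F}}=\mb{E}^\top x\,(\tilde{\mb{G}}x)^\top/(|\mb{E}^\top x|\,|\tilde{\mb{G}}x|)$ attains $\sup_{\|\tilde{\mb{F}}\|\le 1}2x^\top\mb{E}\tilde{\mb{F}}\tilde{\mb{G}}x=2|\mb{E}^\top x|\,|\tilde{\mb{G}}x|$, so \eqref{petersen-ext:for all} is equivalent to the scalar condition $x^\top\mb{C}x+2|\mb{E}^\top x|\,|\tilde{\mb{G}}x|<0$ for all $x\neq 0$.

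The crucial step operates on $\psi(\lambda):=\max_{|x|=1}x^\top\Phi(\lambda)x$ with $\Phi(\lambda):=\mb{C}+\lambda\mb{E}\mb{E}^\top+\lambda^{-1}\tilde{\mb{G}}^\top\tilde{\mb{G}}$. This $\psi$ is convex on $(0,\infty)$ as a pointwise supremum of convex functions of $\lambda$ and, when $\mb{E}\ne 0\ne\tilde{\mb{G}}$, coercive at both endpoints; hence it attains its minimum at some $\lambda_\star>0$ (the degenerate cases being dispatched by $\lambda\to 0^+$ or $\lambda\to\infty$). Suppose \eqref{petersen-ext:multipl} fails for every $\lambda>0$, so that $\psi(\lambda_\star)\ge 0$ and $0\in\partial\psi(\lambda_\star)$. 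By Danskin's theorem, there exist unit maximizers $x_1,\ldots,x_k$ of $x\mapsto x^\top\Phi(\lambda_\star)x$ and weights $\mu_i\ge 0$ summing to one with $\sum_i\mu_i|\mb{E}^\top x_i|^2=\lambda_\star^{-2}\sum_i\mu_i|\tilde{\mb{G}}x_i|^2$. Writing $A:=\sum_i\mu_i|\mb{E}^\top x_i|^2$ and $B:=\sum_i\mu_i|\tilde{\mb{G}}x_i|^2$, this yields $\lambda_\star A+\lambda_\star^{-1}B=2\sqrt{AB}$, and combining with Cauchy--Schwarz $\sqrt{AB}\ge\sum_i\mu_i|\mb{E}^\top x_i|\,|\tilde{\mb{G}}x_i|$ gives
\begin{equation*}
\psi(\lambda_\star)=\sum_i\mu_i\,x_i^\top\mb{C}x_i+2\sqrt{AB}\ge\sum_i\mu_i\bigl[x_i^\top\mb{C}x_i+2|\mb{E}^\top x_i|\,|\tilde{\mb{G}}x_i|\bigr],
\end{equation*}
whose right-hand side is strictly negative by the scalar condition, contradicting $\psi(\lambda_\star)\ge 0$.

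The main obstacle is precisely this last step: the pointwise AM--GM bound suggests a different candidate multiplier $\lambda(x)=|\tilde{\mb{G}}x|/|\mb{E}^\top x|$ for each $x$, and promoting these direction-dependent multipliers to a single uniform one is not automatic. The ingredient that resolves this is the convexity of $\psi$ in $\lambda$ together with the Danskin description of its subdifferential, which aligns the optimal $\lambda_\star$ with a convex combination of active maximizers through the first-order condition $\lambda_\star^2 A=B$, bringing the pointwise AM--GM estimate into play at exactly the minimizer.
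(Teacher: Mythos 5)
Your sufficiency direction and your reduction of \eqref{petersen-ext:for all} to the scalar condition $x^\top\mb{C}x+2|\mb{E}^\top x|\,|\tilde{\mb{G}}x|<0$ for all $x\neq 0$ are both correct and essentially identical to the paper's (the rank-one choice of $\tilde{\mb{F}}$ is the paper's Lemma~\ref{lemma:var elim}). The genuine gap is in the final contradiction step of the necessity direction: your inequality points the wrong way. From $\lambda_\star^2A=B$ you correctly get $\psi(\lambda_\star)=\sum_i\mu_i x_i^\top\mb{C}x_i+2\sqrt{AB}$, and Cauchy--Schwarz indeed gives $\sqrt{AB}\ge\sum_i\mu_i|\mb{E}^\top x_i|\,|\tilde{\mb{G}}x_i|$; but substituting this yields $\psi(\lambda_\star)\ge\sum_i\mu_i\bigl[x_i^\top\mb{C}x_i+2|\mb{E}^\top x_i|\,|\tilde{\mb{G}}x_i|\bigr]$, i.e.\ a \emph{lower} bound on $\psi(\lambda_\star)$ by a strictly negative quantity. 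That is perfectly consistent with $\psi(\lambda_\star)\ge 0$ and produces no contradiction. To close the argument you would need the reverse inequality $2\sqrt{AB}\le 2\sum_i\mu_i|\mb{E}^\top x_i|\,|\tilde{\mb{G}}x_i|$, which fails in general: e.g.\ two maximizers with $(|\mb{E}^\top x_1|,|\tilde{\mb{G}}x_1|)=(1,0)$ and $(|\mb{E}^\top x_2|,|\tilde{\mb{G}}x_2|)=(0,1)$ and $\mu_1=\mu_2=1/2$ give $\sqrt{AB}=1/2$ while the right-hand sum is $0$. Your Danskin bookkeeping is fine; it is the promotion of the pointwise AM--GM bound to the convex combination of active maximizers that breaks, and this is exactly the minimax-versus-maximin gap that makes the lemma nontrivial.

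The paper avoids this by a different route: after establishing the scalar condition it observes $\mb{C}\prec 0$, squares to obtain the discriminant condition $(x^\top\mb{C}x)^2>4\,|\mb{E}^\top x|^2\,|\Phi\mb{G}x|^2$ for all $x\neq 0$, and then invokes Petersen's quadratic-pencil lemma (Fact~\ref{fact:existence of multiplier}, \cite[Lemma~A.4]{petersen1986riccati}), which asserts that this pointwise discriminant condition forces $\lambda^2\mb{E}\mb{E}^\top+\lambda\mb{C}+\mb{G}^\top\overline{\mb{F}}\mb{G}\prec 0$ for a single $\lambda>0$. That lemma is precisely the nontrivial ingredient your argument was meant to replace; as written, the replacement does not go through, so you should either prove an analogue of Fact~\ref{fact:existence of multiplier} or supply a correct argument (e.g.\ via the S-procedure and the hidden convexity of the joint range of two quadratic forms) for why the minimum of $\psi$ is negative.
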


For $\overline{\mb{F}} =I$, one obtains the original version by I.~R. Petersen in \cite{petersen1986riccati,petersen1987stabilization}, and the version in Fact~\ref{fact:petersen-ext} proposes a slight extension where the bound $\overline{\mb{F}}$ is any positive semidefinite matrix.
For this version, then, we give the proof in the appendix for completeness. 
Although one could prove Fact~\ref{fact:petersen-ext} with S-procedure arguments as some authors do for \emph{nonstrict} versions \cite{shcherbakov2008extensions,khlebnikov2008petersen}, we follow the original proof strategy of~\cite{petersen1986riccati,petersen1987stabilization}.

Second, we present in the next fact a version of Petersen's lemma where inequalities are nonstrict.
\begin{fact}[\hspace{0.5pt}Nonstrict\hspace{0.6pt}Petersen's\hspace{0.6pt}lemma]
\label{fact:petersen-nonstrict}%
Consider matrices $\mb{C} \in \real^{n \times n}$, $\mb{E} \in \real^{n \times p}$, $\overline{\mb{F}} \in \real^{q \times q}$, $\mb{G} \in \real^{q \times n}$ with $\mb{C}=\mb{C}^\top$ and $\overline{\mb{F}}  = \overline{\mb{F}}^\top \succeq 0$, and let $\mathcal{F}$ be defined as in~\eqref{set cal F}. 
Suppose additionally $\mb{E} \neq 0$, $\overline{\mb{F}} \succ 0$ and $\mb{G} \neq 0$. Then,
\begin{subequations}
\begin{equation}
\label{petersen-nonstrict:for all}
\mb{C}+\mb{E} \mb{F} \mb{G} + \mb{G}^\top \mb{F}^\top \mb{E}^\top \preceq 0 \text { for all } \mb{F} \in \mathcal{F}
\end{equation}
if and only if there exists $\lambda>0$ such that
\begin{equation}
\label{petersen-nonstrict:multipl}
\mb{C}+\lambda \mb{E} \mb{E}^\top +\lambda^{-1} \mb{G}^\top \overline{\mb{F}} \mb{G} \preceq 0.
\end{equation}
\end{subequations}
Moreover, \eqref{petersen-nonstrict:multipl} implies \eqref{petersen-nonstrict:for all} \emph{without} the assumption $\mb{E} \neq 0$, $\overline{\mb{F}} \succ 0$ and $\mb{G} \neq 0$.
\end{fact}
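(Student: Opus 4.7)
The plan is to deduce Fact~\ref{fact:petersen-nonstrict} from the strict version Fact~\ref{fact:petersen-ext} via a perturbation argument, treating the two implications separately.

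For the easy direction, \eqref{petersen-nonstrict:multipl} $\Rightarrow$ \eqref{petersen-nonstrict:for all} holds without any of the three non-triviality hypotheses. I would start from the nonnegativity of the symmetric square
\[
(\lambda^{1/2}\mb{E}-\lambda^{-1/2}\mb{G}^\top\mb{F}^\top)(\lambda^{1/2}\mb{E}-\lambda^{-1/2}\mb{G}^\top\mb{F}^\top)^\top \succeq 0,
\]
which upon expansion yields the Young-type bound $\mb{E}\mb{F}\mb{G}+\mb{G}^\top\mb{F}^\top\mb{E}^\top\preceq \lambda\mb{E}\mb{E}^\top+\lambda^{-1}\mb{G}^\top\mb{F}^\top\mb{F}\mb{G}$. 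Combining with $\mb{F}^\top\mb{F}\preceq\overline{\mb{F}}$ (the defining condition of $\mathcal{F}$) and adding $\mb{C}$ on both sides transforms \eqref{petersen-nonstrict:multipl} into \eqref{petersen-nonstrict:for all}.

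For the converse, I would perturb $\mb{C}$ by $-\epsilon I$ with $\epsilon>0$, promoting the nonstrict hypothesis into a strict one: $(\mb{C}-\epsilon I)+\mb{E}\mb{F}\mb{G}+\mb{G}^\top\mb{F}^\top\mb{E}^\top\prec 0$ for every $\mb{F}\in\mathcal{F}$. Fact~\ref{fact:petersen-ext} applied to the symmetric matrix $\mb{C}-\epsilon I$ then delivers a multiplier $\lambda_\epsilon>0$ satisfying
\[
\mb{C}+\lambda_\epsilon \mb{E}\mb{E}^\top +\lambda_\epsilon^{-1}\mb{G}^\top\overline{\mb{F}}\mb{G}\prec\epsilon I.
\]
The conclusion \eqref{petersen-nonstrict:multipl} will follow by passing to the limit $\epsilon\to 0^+$ along a subsequence on which $\lambda_\epsilon\to\lambda^\star\in(0,\infty)$.

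The delicate step, and the sole reason the extra hypotheses $\mb{E}\neq 0$, $\overline{\mb{F}}\succ 0$, $\mb{G}\neq 0$ enter, is excluding the degenerate limits $\lambda_\epsilon\to 0^+$ and $\lambda_\epsilon\to+\infty$. To upper-bound $\lambda_\epsilon$, I would pick a unit vector $v$ with $\mb{E}^\top v\neq 0$ (which exists since $\mb{E}\neq 0$), test the displayed strict LMI at $v$, and drop the nonnegative $\lambda_\epsilon^{-1}|\overline{\mb{F}}^{1/2}\mb{G} v|^2$ term to obtain $\lambda_\epsilon|\mb{E}^\top v|^2<\epsilon-v^\top\mb{C} v\leq \epsilon+\|\mb{C}\|$, which is $\epsilon$-uniform as $\epsilon\to 0$. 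A symmetric argument using a unit $w$ with $\overline{\mb{F}}^{1/2}\mb{G} w\neq 0$ (such $w$ exists because $\overline{\mb{F}}\succ 0$ and $\mb{G}\neq 0$) yields an $\epsilon$-uniform upper bound on $\lambda_\epsilon^{-1}$. Bolzano--Weierstrass then extracts a subsequence with $\lambda_\epsilon\to\lambda^\star\in(0,\infty)$, and taking the limit of the strict LMI along this subsequence recovers the nonstrict \eqref{petersen-nonstrict:multipl}, as required.
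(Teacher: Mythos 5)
Your proposal is correct, but the hard direction follows a genuinely different route from the paper. The paper proves \eqref{petersen-nonstrict:for all}$\implies$\eqref{petersen-nonstrict:multipl} by first showing that \eqref{petersen-nonstrict:for all} is equivalent to a sign condition on the quadratic form $z^\top\smat{\mb{C} & \mb{E}\\ \mb{E}^\top & 0}z$ over the cone $\{z=(x,y)\colon y^\top y\le x^\top\mb{G}^\top\overline{\mb{F}}\mb{G}x\}$ (by explicitly constructing an $\mb{F}\in\mathcal{F}$ realizing any such $y$ from $x$), and then invoking the nonstrict S-procedure; there, $\overline{\mb{F}}\succ 0$ and $\mb{G}\neq 0$ supply the Slater point needed by the S-procedure, and $\mb{E}\neq 0$ rules out a zero multiplier. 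You instead perturb $\mb{C}$ to $\mb{C}-\epsilon I$, apply the already-established strict version (Fact~\ref{fact:petersen-ext}), and pass to the limit $\epsilon\to 0^+$ after establishing $\epsilon$-uniform bounds on $\lambda_\epsilon$ and $\lambda_\epsilon^{-1}$; the three non-triviality hypotheses enter at exactly the analogous place, namely to prevent the multiplier from degenerating to $0$ or $+\infty$. Your compactness argument is sound: the test-vector bounds $\lambda_\epsilon\le(\epsilon+\|\mb{C}\|)/|\mb{E}^\top v|^2$ and $\lambda_\epsilon^{-1}\le(\epsilon+\|\mb{C}\|)/|\overline{\mb{F}}^{1/2}\mb{G}w|^2$ are valid, Bolzano--Weierstrass applies, and the closedness of the negative semidefinite cone gives \eqref{petersen-nonstrict:multipl} in the limit. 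What your route buys is economy --- it reuses the strict lemma as a black box and needs no S-procedure --- at the cost of a non-constructive limit multiplier; the paper's route is constructive in spirit and makes explicit the connection to the S-procedure that other authors (e.g., \cite{shcherbakov2008extensions,khlebnikov2008petersen}) exploit. Your treatment of the easy direction via the square $(\lambda^{1/2}\mb{E}-\lambda^{-1/2}\mb{G}^\top\mb{F}^\top)[\star]^\top\succeq 0$ is essentially the same computation the paper performs (there packaged through a Schur complement), and you correctly note it needs none of the extra hypotheses.
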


For $\overline{\mb{F}}=I$, one obtains precisely the nonstrict versions of Petersen's lemma in \cite[\S 2]{shcherbakov2008extensions} and \cite[\S 2]{khlebnikov2008petersen};
for completeness we then report the proof of Fact~\ref{fact:petersen-nonstrict} \submission{in \cite{arXivVersion}}\arxiv{in the appendix}.
The additional assumption with respect to Fact~\ref{fact:petersen-ext} (i.e., $\mb{E} \neq 0$, $\overline{\mb{F}} \succ 0$ and $\mb{G} \neq 0$) is due to having \emph{nonstrict} inequalities and is needed to obtain the specific form \eqref{petersen-nonstrict:multipl}\submission{, see \cite{arXivVersion}.}\arxiv{. Indeed, with $\overline{\mb{F}} \succ 0$, Fact~\ref{fact:petersen-nonstrict} would hold also for either $\mb{E} \neq 0$, $\mb{G} = 0$ or $\mb{E} = 0$, $\mb{G} \neq 0$ if $\lambda \ge 0$ is allowed (instead of $\lambda >0$) and \eqref{petersen-nonstrict:multipl} is replaced by, respectively, either $\smat{\mb{C} +\lambda \mb{E} \mb{E}^\top & \mb{G}^\top \overline{\mb{F}}^{1/2}\\
\overline{\mb{F}}^{1/2}\mb{G} & -\lambda I} \preceq 0$ or $\smat{\mb{C} +\lambda \mb{G}^\top \overline{\mb{F}} \mb{G} & \mb{E}\\
 \mb{E}^\top & -\lambda I} \preceq 0$;
Fact~\ref{fact:petersen-nonstrict} is trivially true if $\mb{E} = 0$ and $\mb{G} = 0$.}

\subsection{Problem formulation}
\label{sec:probl form}

Consider a discrete-time linear time-invariant system
\begin{equation}
\label{sys}
x^+ =  A_\star x + B_\star u + d
\end{equation}
where $x \in \real^n$ is the state, $u \in \real^m$ is the input, $d \in \real^n$ is a disturbance, and the matrices $A_\star$ and $B_\star$ are \emph{unknown} to us.
At the same time and with the same meaning for the quantities $x$, $u$ and $d$, consider the continuous-time linear time-invariant system
\begin{equation}
\label{sys-ct}
\dot x =  A_\star x + B_\star u + d.
\end{equation}
The modifications required for the continuous-time case are limited, and this allows us to treat it in parallel to the discrete-time case.
Instead of relying on model knowledge given by $A_\star$ and $B_\star$, we perform an experiment on the system by applying an input sequence $u(t_0)$, $u(t_1)$, \dots, $u(t_{T-1})$ of $T$ samples, so that by \eqref{sys}/\eqref{sys-ct}
\[
x(t_{i+1})/\dot{x}(t_i) = A_\star x(t_i) + B_\star u(t_i) + d(t_i) .
\]
We measure the state response $x(t_0)$, $x(t_1)$, \dots, $x(t_{T-1})$, and, in discrete time, the shifted state response $x(t_1)$, $x(t_2)$, \dots, $x(t_{T})$ or, in continuous time, the state-derivative response $\dot x(t_0)$, $\dot x(t_1)$, \dots, $\dot x(t_{T-1})$.
The disturbance sequence $d(t_0)$, $d(t_1)$, \dots, $d(t_{T-1})$ affects the evolution of the system and is \emph{unknown} to us, hence data are \emph{noisy}. 
We collect the noisy data in the matrices
\begin{subequations}
\label{data}
\begin{align}
\hspace*{-2.2mm} U_0 & :=\bmat{u(t_0) & u(t_1) & \cdots & u(t_{T-1})}\\
\hspace*{-2.2mm} X_0 & :=\bmat{x(t_0) & x(t_1) & \cdots & x(t_{T-1})} \\
\hspace*{-2.2mm} X_1 & :=\bmat{x(t_1) & x(t_2) & \cdots & x(t_T)} \text{ in discrete time, or}\\
\hspace*{-2.2mm} X_1 & :=\bmat{\dot x(t_0) & \dot x(t_1) & \cdots & \dot x(t_{T-1})} \text{ in continuous time} .
\end{align}
\end{subequations}%
We can also arrange the unknown disturbance sequence as
$D_0 :=\bmat{d(t_0) & d(t_1) & \cdots & d(t_{T-1})}$, so that
$D_0$ and data in~\eqref{data} satisfy
\begin{equation}
\label{data rel D0}
X_1 = A_\star X_0 + B_\star U_0 + D_0
\end{equation}
since \eqref{sys} (in discrete time) or \eqref{sys-ct} (in continuous time) is the underlying data generation mechanism. 
In the former case, we have $t_0$, $t_1$, \dots, $t_T$ equal to, respectively, $0$, $1$, \dots, $T$; in the latter case, $t_0$, $t_1$, \dots, $t_{T-1}$ are sampled periodically at $0$, $T_{\tu{s}}$, \dots, $(T-1) \cdot T_{\tu{s}}$ for some sampling time $T_{\tu{s}}$, although this is not necessary. 

We operate under a certain disturbance model. 
Specifically, we assume that the disturbance sequence $D_0$ has bounded energy, i.e., $D_0 \in \mathcal D$ where, for some matrix $\Delta$,
\begin{equation}
\label{set D}
\mathcal D := \{ 
D \in \real^{n \times T}: 
DD^\top \preceq \Delta \Delta^\top \}.
\end{equation}
As we said, $D_0$ is unknown to us and the only a-priori knowledge on it is given by the set $\mathcal{D}$, and in particular the knowledge of the positive semidefinite bound $\Delta \Delta^\top$.
This disturbance model enforces an energy bound on the disturbance since it constrains the whole disturbance sequence, unlike an instantaneous disturbance bound \cite{Bisoffi2021tradeoffs}. 
Energy bounds are used in \cite{depersis2020tac,berberich2019robust,vanwaarde2020noisy,berberich2019combining} and many other works.
In fact, model \eqref{set D} is quite general as it can capture signal-to-noise-ratio conditions \cite{depersis2020tac}, over-approximate instantaneous bounds \cite{Bisoffi2021tradeoffs}, and can also be used to have probabilistic bounds for Gaussian noise \cite{dpt2021Aut}.

With data \eqref{data} and set $\mathcal{D}$ in~\eqref{set D}, we  introduce the set $\mathcal{C}$ of matrices \emph{consistent with data}
\begin{equation}
\label{set C}
\mathcal C := \{ 
(A,B):  X_1 = A X_0 + B U_0 + D,  D \in \mathcal D \},
\end{equation}
i.e., the set of all pairs $(A,B)$ of dynamical matrices that could  generate data $X_1$, $X_0$ and $U_0$ based on \eqref{sys} or \eqref{sys-ct} while keeping the disturbance sequence in the set $\mathcal{D}$. 
This is elucidated by comparing \eqref{set C} with the similar \eqref{data rel D0}. 
We note that $D_0 \in \mathcal{D}$ is precisely equivalent to $(A_\star,B_\star) \in \mathcal{C}$. 

\begin{remark}
In the language of set-membership identification 
\cite{milanese2004set}, we have two \emph{prior assumptions}, the first one on the class of dynamical systems \eqref{sys} or \eqref{sys-ct} and the second one on the noise \eqref{set D}. 
The set $\mathcal{C}$ in \eqref{set C} corresponds to the \emph{feasible systems set} \cite[Def.~1]{milanese2004set}. 
We noted that $(A_\star,B_\star) \in \mathcal{C}$. 
This corresponds to \emph{validation of prior assumptions} \cite[Def.~2]{milanese2004set}.
\end{remark}

Our objective is to design a state feedback controller
\[
u = K x
\]
that makes the closed-loop matrix $A_\star + B_\star K$ Schur stable (i.e., all its eigenvalues have magnitude less than 1) in discrete time, or Hurwitz stable (i.e., all its eigenvalues have real part less than 0) in continuous time.
However, we lack the knowledge of $(A_\star, B_\star)$ and the disturbance $d$ induces uncertainty in data, which results into a set $\mathcal{C}$ of matrices consistent with data. 
Our objective becomes then to stabilize robustly all matrices $A + B K$ for $(A,B) \in \mathcal{C}$; in other words, in discrete time,
\begin{subequations}
\label{probl}
\begingroup%
\thinmuskip=0mu plus 4mu
\medmuskip=0mu plus 8mu
\thickmuskip=1mu plus 12mu
\begin{align}
\hspace*{-1mm}& \text{find} & & \hspace*{-2mm}K,~P=P^\top\succ 0 \label{probl:find} \\
\hspace*{-2mm}& \text{s.~t.} & &  \notag\\  
\hspace*{-2mm}& & &  
\hspace*{-7mm}(A+BK) P (A+BK)^\top - P \prec 0 \text{ for all }(A,B) \in \mathcal C. \label{probl:for all}
\end{align}
\endgroup
\end{subequations}
or, in continuous time,
\begin{subequations}
\label{probl-ct}
\begingroup%
\thinmuskip=0mu plus 4mu
\medmuskip=0mu plus 8mu
\thickmuskip=1mu plus 12mu
\begin{align}
\hspace*{-1mm}& \text{find} & & \hspace*{-2mm}K,~P=P^\top\succ 0 \label{probl-ct:find} \\
\hspace*{-1mm}& \text{s.~t.} & &  \notag\\  
\hspace*{-1mm}& & &  
\hspace*{-7mm}(A+BK) P + P (A+BK)^\top \prec 0 \text{ for all }(A,B) \in \mathcal C. \label{probl-ct:for all}
\end{align}
\endgroup
\end{subequations}
Both~\eqref{probl} and \eqref{probl-ct} are quadratic stabilization problems. 
Achieving the objective of robust stabilization of all matrices $A + B K$ for $(A,B) \in \mathcal{C}$ (hence, also of $A_\star + B_\star K$) guarantees bounded-input bounded-state stability of $x^+ =  (A_\star + B_\star K) x + d$ or $\dot x =  (A_\star + B_\star K) x + d$ by \cite[Thm.~9.5]{antsaklis2006linear}.

\subsection{Reformulations of set $\mathcal C$ and properties}
\label{sec:reform set C}

We perform some rearrangements of $\mathcal{C}$.
We substitute in \eqref{set C} the definition of set $\mathcal{D}$ in~\eqref{set D} and obtain
\begin{align*}
& \mathcal{C} = \Big\{ (A,B):  X_1 = A X_0 + B U_0 + D,D \in \real^{n \times T},\\
& \hspace*{38mm} \smat{I & D}\smat{-\Delta \Delta^\top & 0\\ 0 & I}\smat{I \\ D^\top} \preceq 0\Big\}. 
\end{align*}
In this expression we substitute $D = X_1 - A X_0 - B U_0$ in the matrix inequality and collect $\big[I~~A~~B\big]$ on the left and its transpose on the right of the matrix inequality; then, $\mathcal{C}$ rewrites equivalently as
\begin{align}
\label{set C:with A,B,C}
 \mathcal{C} & = \Big\{
(A,B)\colon
\begin{bmatrix}
I &  A &  B
\end{bmatrix}
\cdoT
\left[ \begin{array}{c|c}
\mb{C} & \mb{B}^\top \\ 
\hline
\mb{B} & \mb{A}
\end{array} \right]
[\star]^\top
\preceq 0
\Big\} \\
& = \big\{
\big[A~~B\big] = Z^\top \!\colon 
\mb{C} + \mb{B}^\top\! Z + Z^\top\! \mb{B} + Z^\top\! \mb{A} Z \preceq 0 \big\}\label{set C: with A,B,C expand}
\end{align}
where we rearrange $(A,B)$ as the matrix $\big[A~~B\big]$ from \eqref{set C:with A,B,C} to \eqref{set C: with A,B,C expand} and define
\begin{equation}
\label{A B C consist ellips}
\left[ \begin{array}{c|c}
\mb{C} & \mb{B}^\top \\ 
\hline
\mb{B} & \mb{A}
\end{array} \right]
:=
\left[ \begin{array}{c|c}
X_1X_1^\top-\Delta\Delta^\top & -X_1 \bmat{X_0\\U_0}^\top \\
\hline
\phantom{\rule{0.1pt}{25pt}} 
-\bmat{X_0\\U_0}X_1^\top & \bmat{X_0\\ U_0} \bmat{X_0\\ U_0}^\top
\end{array} \right].
\end{equation}

\begin{remark}
\label{remark:comparison dist model}
For given matrices $R=R^\top$, $Q=Q^\top \succ 0$, $S$, one can consider a disturbance model $\mathcal D^\prime := \{ D \in \real^{n \times T} \colon \smat{I & D} \smat{R & S^\top\\ S & Q} \smat{I \\ D^\top} \preceq 0 \}$ more general than $\mathcal{D}$ in~\eqref{set D}, as in~\cite{berberich2019robust,vanwaarde2020noisy,berberich2019combining}. 
With $\mathcal{D}^\prime$, one can still carry out the derivations for a set $\mathcal{C}^\prime$ similar to~\eqref{set C:with A,B,C}, with slightly different expressions of $\mb{A}^\prime$, $\mb{B}^\prime$, $\mb{C}^\prime$. 
Nonetheless, \eqref{set D} is general enough to capture interesting classes of noise, see the discussion after \eqref{set D}.
\end{remark}

We make the next assumption on matrix $\smat{X_0\\ U_0}$ in~\eqref{A B C consist ellips}.
\begin{assumption}
\label{ass:pers exc}
Matrix $\smat{X_0\\ U_0}$ has full row rank.
\end{assumption}

Assumption 1 is related to persistence of excitation as we illustrate in Section~\ref{sec:pers of exc}, and can be checked directly from data.
If its condition does not hold, it can typically be enforced by collecting more data points (i.e., adding more columns to $\smat{X_0\\ U_0}$).
An immediate consequence of Assumption~\ref{ass:pers exc} for $\mb{A}$ in~\eqref{A B C consist ellips} is $\mb{A} \succ 0$.

The set $\mathcal{C}$ in~\eqref{set C: with A,B,C expand} can be regarded as a \emph{matrix} ellipsoid, i.e., a natural extension of the standard (vector) ellipsoid \cite[p.~42]{boyd1994linear} with parameters $\mb{c} \in \real$, $\mb{b} \in \real^p$, $\mb{a} \in \real^{p \times p}$:
\begin{equation*}
\{
z \in \real^p \colon 
\mb{c} + \mb{b}^\top z + z^\top \mb{b} + z^\top \mb{a} z \le 0
\}.
\end{equation*}
In fact, if a scalar system with $n=m=1$ is considered, $\mathcal{C}$ reduces to a standard ellipsoid with $Z \in \real^2$. 
The interpretation of $\mathcal{C}$ as a matrix ellipsoid (introduced in \cite{Bisoffi2021tradeoffs} to compute a size for this set) proves useful here since it enables a final simple reformulation of $\mathcal{C}$ as
\begin{equation}
\label{set C: with A,Q,Zc}
\mathcal{C} = \big\{ \big[A~~B\big] = Z^\top \colon 
(Z - \Zc)^\top \mb{A} (Z - \Zc) \preceq \mb{Q}
\big\}
\end{equation}
where, by $\mb{A} \succ 0$ from Assumption~\ref{ass:pers exc}, we define
\begin{equation}
\label{Zc,Q}
\Zc := - \mb{A}^{-1} \mb{B}, \quad \mb{Q}:= \mb{B}^\top \mb{A}^{-1}\mb{B} - \mb{C},
\end{equation}
as can be verified by substituting \eqref{Zc,Q} into \eqref{set C: with A,Q,Zc} and expanding all products to obtain \eqref{set C: with A,B,C expand}.
We will further discuss later in Section~\ref{sec:least squares} the interpretation of some of the parameters $\mb{A}$, $\mb{B}$, $\mb{C}$, $\Zc$, $\mb{Q}$ of $\mathcal{C}$. 
The matrix-ellipsoid parametrizations of $\mathcal{C}$ in \eqref{set C: with A,B,C expand}, \eqref{set C: with A,Q,Zc} and \eqref{set E} (given below) are analogous to the parametrizations of a standard ellipsoid as, respectively, a quadratic form \cite[Eq.~(3.8)]{boyd1994linear}, as a center and shape matrix and as a linear transformation of a unit ball \cite[Eq.~(3.9)]{boyd1994linear}. 
We report the sign definiteness of $\mb{A}$ and $\mb{Q}$ in the next lemma.
\begin{lemma}
\label{lemma:sign of A,Q}
Under Assumption~\ref{ass:pers exc}, $\mb{A} \succ 0$ and $\mb{Q} \succeq 0$.
\end{lemma}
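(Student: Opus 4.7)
The plan has two parts, one for each sign-definiteness claim, and both are short.

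For $\mathbf{A} \succ 0$, I would just invoke the standard fact that $M M^\top$ is positive definite if and only if $M$ has full row rank, applied to $M = \smat{X_0 \\ U_0}$, which is exactly the content of Assumption~\ref{ass:pers exc}. This gives $\mathbf{A} \succ 0$ directly from the definition in \eqref{A B C consist ellips}, and in particular $\mathbf{A}^{-1}$ exists so that $\Zc$ and $\mathbf{Q}$ are well-defined via \eqref{Zc,Q}.

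For $\mathbf{Q} \succeq 0$, the key idea is to exploit the (already established) fact that the set $\mathcal{C}$ is nonempty, since $(A_\star,B_\star) \in \mathcal{C}$ by $D_0 \in \mathcal{D}$. Set $Z_\star^\top := [A_\star~~B_\star]$. By \eqref{set C: with A,B,C expand}, membership in $\mathcal{C}$ means
\begin{equation*}
\mathbf{C} + \mathbf{B}^\top Z_\star + Z_\star^\top \mathbf{B} + Z_\star^\top \mathbf{A} Z_\star \preceq 0.
\end{equation*}
Completing the square in $Z_\star$ using $\mathbf{A}\succ 0$ and the definitions $\Zc = -\mathbf{A}^{-1}\mathbf{B}$, $\mathbf{Q} = \mathbf{B}^\top \mathbf{A}^{-1}\mathbf{B} - \mathbf{C}$, this rewrites as
\begin{equation*}
(Z_\star - \Zc)^\top \mathbf{A} (Z_\star - \Zc) \preceq \mathbf{Q}.
\end{equation*}
The left-hand side is positive semidefinite because $\mathbf{A}\succ 0$, so $\mathbf{Q}$ is bounded below (in the Löwner order) by a positive semidefinite matrix, hence $\mathbf{Q} \succeq 0$.

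The ``main obstacle'' is really just bookkeeping: one must be careful about the $D_0 D_0^\top \preceq \Delta\Delta^\top$ inequality turning into the matrix inequality for $Z_\star$ with the correct signs, but this is essentially what the authors already did in passing from \eqref{set C} to \eqref{set C:with A,B,C}--\eqref{set C: with A,B,C expand}, so one can simply cite those rearrangements rather than redo them. No further machinery (Schur complement, Petersen) is needed.
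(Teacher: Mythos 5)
Your proof is correct, and the argument for $\mb{Q}\succeq 0$ takes a genuinely different route from the paper's. For $\mb{A}\succ 0$ both you and the paper invoke the same standard fact about $MM^\top$ under full row rank of $M=\smat{X_0\\U_0}$. For $\mb{Q}\succeq 0$, the paper computes $\mb{Q}$ explicitly: it introduces the projection matrix $\mb{Q}_{\tu{p}}:=\smat{X_0\\U_0}^\top\big(\smat{X_0\\U_0}\smat{X_0\\U_0}^\top\big)^{-1}\smat{X_0\\U_0}$, substitutes $X_1=\smat{A_\star & B_\star}\smat{X_0\\U_0}+D_0$, and arrives at the identity $\mb{Q}=D_0(\mb{Q}_{\tu{p}}-I)D_0^\top+\Delta\Delta^\top\succeq -D_0D_0^\top+\Delta\Delta^\top\succeq 0$, the last step using $D_0\in\mathcal D$. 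You instead argue abstractly: $\mathcal{C}$ is nonempty (it contains $Z_\star^\top=\big[A_\star~~B_\star\big]$, again because $D_0\in\mathcal D$), and in the center--shape form \eqref{set C: with A,Q,Zc} membership of any point forces $\mb{Q}\succeq(Z_\star-\Zc)^\top\mb{A}(Z_\star-\Zc)\succeq 0$. Both arguments ultimately rest on the same two ingredients (the data relation and $D_0D_0^\top\preceq\Delta\Delta^\top$), and there is no circularity in yours since the rewriting \eqref{set C: with A,Q,Zc} is established before the lemma and only needs $\mb{A}\succ 0$, which you prove first. Your route is shorter and makes the conceptual point that nonemptiness of a matrix ellipsoid forces its shape bound to be positive semidefinite; the paper's computation buys the explicit formula \eqref{Q simplest} for $\mb{Q}$ in terms of $D_0$ and $\Delta$, which is reused later (e.g., to conclude $\mb{Q}=0$ for ideal data and to discuss how the uncertainty size grows with the noise bound), so the extra bookkeeping is not wasted there.
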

\begin{proof}
$\mb{A} \succ 0$ from Assumption~\ref{ass:pers exc} by \cite[Thm.~7.2.7(c)]{horn2013matrix}. 
As for $\mb{Q}$, $\mb{A} \succ 0$ allows defining
\[
\mb{Q}_{\tu{p}}
:= \bmat{X_0\\ U_0}^\top \Bigg( \bmat{X_0\\ U_0} \bmat{X_0\\ U_0}^\top \Bigg)^{-1} \bmat{X_0\\ U_0}.
\]
$\mb{Q}_{\tu{p}}$ is a projection matrix, i.e., $\mb{Q}_{\tu{p}}^2 = \mb{Q}_{\tu{p}}$ as one verifies immediately. 
Then, \eqref{Zc,Q} and \eqref{A B C consist ellips} yield
\begin{equation}
\label{Q alt}
\mb{Q}= X_1 \mb{Q}_{\tu{p}} X_1^\top - X_1 X_1^\top + \Delta \Delta^\top.
\end{equation}
Write \eqref{data rel D0} as $X_1 = \smat{A_\star & B_\star} \smat{X_0\\U_0} + D_0$; this expression and $\mb{Q}_{\tu{p}}$ being a projection matrix shows that $\mb{Q}_{\tu{p}} X_1^\top - X_1^\top = \mb{Q}_{\tu{p}} (\smat{X_0\\U_0}^\top \smat{A_\star & B_\star}^\top  + D_0^\top) -  (\smat{X_0\\U_0} ^\top \smat{A_\star & B_\star}^\top  + D_0^\top) = (\mb{Q}_{\tu{p}} - I) D_0^\top$. 
Using this expression in~\eqref{Q alt} yields
\begin{equation}
\label{Q simplest}
\mb{Q} = D_0 (\mb{Q}_{\tu{p}} - I) D_0^\top + \Delta \Delta^\top \succeq - D_0 D_0^\top + \Delta \Delta^\top \succeq 0
\end{equation}
since $\mb{Q}_{\tu{p}} \succeq 0$ and $D_0 \in \mathcal{D}$ (i.e., $D_0D_0^\top \preceq \Delta \Delta^\top$).
\end{proof}

From $\mb{A} \succ 0$, we have the next desirable property of $\mathcal{C}$.
\begin{lemma}
\label{lemma:boundedness of set C}
Under Assumption~\ref{ass:pers exc}, $\mathcal{C}$ is bounded with respect to any matrix norm.
\end{lemma}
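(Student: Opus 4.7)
The plan is to exploit the matrix-ellipsoid representation of $\mathcal{C}$ in \eqref{set C: with A,Q,Zc} together with the positive definiteness of $\mb{A}$ guaranteed by Lemma~\ref{lemma:sign of A,Q}. The key observation is that when $\mb{A} \succ 0$, the matrix inequality $(Z-\Zc)^\top \mb{A} (Z-\Zc) \preceq \mb{Q}$ forces $Z-\Zc$ to lie in a bounded set, because $\mb{A}$ essentially acts as a full-rank ``shape'' matrix. Since all matrix norms on the finite-dimensional space $\real^{(n+m) \times n}$ are equivalent, it suffices to produce a uniform bound in one convenient norm.

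First I would let $\mb{A}^{1/2}$ denote the (well-defined, invertible) positive definite square root of $\mb{A}$, and rewrite the defining inequality of $\mathcal{C}$ as $(\mb{A}^{1/2}(Z-\Zc))^\top (\mb{A}^{1/2}(Z-\Zc)) \preceq \mb{Q}$. Taking traces on both sides (which preserves the inequality since both matrices are positive semidefinite) yields
\begin{equation*}
\|\mb{A}^{1/2}(Z-\Zc)\|_F^2 \le \tu{tr}(\mb{Q}),
\end{equation*}
where $\|\cdot\|_F$ is the Frobenius norm. Since $\mb{A}^{1/2}$ is invertible, I can bound
\begin{equation*}
\|Z-\Zc\|_F \le \|\mb{A}^{-1/2}\|\,\|\mb{A}^{1/2}(Z-\Zc)\|_F \le \|\mb{A}^{-1/2}\|\sqrt{\tu{tr}(\mb{Q})},
\end{equation*}
and the triangle inequality then gives a uniform bound on $\|Z\|_F$ and hence on $\|[A~~B]\|_F$ for every $(A,B) \in \mathcal{C}$.

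Finally, I would invoke the equivalence of matrix norms on finite-dimensional spaces (see, e.g., \cite[Cor.~5.4.5]{horn2013matrix}) to transfer the Frobenius-norm bound to any chosen matrix norm, concluding that $\mathcal{C}$ is bounded. No obstacle is expected to be substantive here; the only point worth stating carefully is that Assumption~\ref{ass:pers exc} is precisely what makes $\mb{A}$ invertible and thus allows the bound $\|\mb{A}^{-1/2}\|$ to be finite. Without $\mb{A} \succ 0$ the argument fails, reflecting the fact that directions in the kernel of $\mb{A}$ would correspond to unbounded directions in $\mathcal{C}$.
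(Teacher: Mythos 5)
Your proof is correct and follows essentially the same route as the paper's: both arguments use the matrix-ellipsoid form of $\mathcal{C}$ in \eqref{set C: with A,Q,Zc}, exploit $\mb{A}\succ 0$ from Assumption~\ref{ass:pers exc} to bound $Z-\Zc$ in one particular norm, and then invoke equivalence of matrix norms. The only (cosmetic) difference is that the paper extracts the scalar bound by evaluating the quadratic form on unit vectors, obtaining $\|Z-\Zc\|\le\lambda_{\min}(\mb{A})^{-1/2}\|\mb{Q}^{1/2}\|$ in the induced 2-norm, whereas you take traces to get a Frobenius-norm bound; both steps are valid.
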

\begin{proof}
Consider $\mathcal{C}$ in~\eqref{set C: with A,Q,Zc}, which is nonempty as $\Zc \in \mathcal{C}$. $Z \in \mathcal{C}$ if and only if for all $v \in \real^n$, $v^\top (Z-Z_{\tu{c}})^\top \mb{A} (Z-Z_{\tu{c}}) v \le  v^\top \mb{Q} v$. Denote $\lambda_{\min}(\mb{A})$ the minimum eigenvalue of (symmetric) $\mb{A}$.
By Lemma~\ref{lemma:sign of A,Q}, this implies
\begin{align*}
& \sqrt{\lambda_{\min}(\mb{A})} |(Z-\Zc) v| \le |\mb{Q}^{1/2} v| \text{ for all } v \colon |v|=1\\
& \implies \sqrt{\lambda_{\min}(\mb{A})} \sup_{|v|=1} |(Z-\Zc) v| \le \sup_{|v|=1}  |\mb{Q}^{1/2} v| \\
& \implies  \|Z-\Zc\| \le \lambda_{\min}(\mb{A})^{-1/2} \|\mb{Q}^{1/2} \| \\
& \implies  \|Z\| \le \| \Zc \| + \lambda_{\min}(\mb{A})^{-1/2} \|\mb{Q}^{1/2}\|
\end{align*}
where we used the definition of induced 2-norm and the reverse triangle inequality in the second and third implication, respectively. All quantities on the right hand side are finite, so each $Z \in \mathcal{C}$ has bounded 2-norm. 
Recall that any two matrix norms are equivalent \cite[p.~371]{horn2013matrix}, so for any given pair of matrix norms $\|\cdot\|_a$ and $\| \cdot \|_b$, there is a finite constant $C_{ab}>0$ such that $\|M\|_a \le C_{ab} \| M \|_b$ for all matrices $M$.
Hence, boundedness of $\mathcal{C}$ with respect to the induced 2-norm implies boundedness of $\mathcal{C}$ with respect to any other norm, as needed proving.
\end{proof}

\section{Data-driven control for linear systems}
\label{sec:data-driv lin:main results}

So far, we have rewritten the set $\mathcal{C}$ of dynamical matrices $(A,B)$ consistent with data as \eqref{set C: with A,Q,Zc}. 
To derive the main result from Petersen's lemma, a final reformulation of $\mathcal{C}$ is needed. 
We define
\begin{equation}
\label{set E}
\mathcal{E}:=\big\{\Zc+\mb{A}^{-1/2} \Upsilon \mb{Q}^{1/2} \colon \|\Upsilon\| \le 1 \big\},
\end{equation}
and show that it coincides with $\mathcal{C}$ in the next proposition.
\begin{proposition}
\label{proposition:set C cal = set E cal}
For $\mb{A} \succ 0$ and $\mb{Q} \succeq 0$, $\mathcal{C}=\mathcal{E}$.
\end{proposition}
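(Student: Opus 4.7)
The claim is a set equality, so the plan is to prove the two inclusions separately. Throughout, I will use the characterization $\|\Upsilon\| \le 1 \Leftrightarrow \Upsilon^\top \Upsilon \preceq I$ recalled in Section~\ref{sec:notation and petersen}, and the fact (from Lemma~\ref{lemma:sign of A,Q}) that $\mathbf{A}^{1/2}$ is invertible and $\mathbf{Q}^{1/2}$ is well-defined as a symmetric positive semidefinite root.

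For the easy inclusion $\mathcal{E} \subseteq \mathcal{C}$, I would pick an arbitrary $Z = \Zc + \mathbf{A}^{-1/2} \Upsilon \mathbf{Q}^{1/2}$ with $\|\Upsilon\| \le 1$, substitute into $(Z-\Zc)^\top \mathbf{A} (Z - \Zc)$, and use $\mathbf{A}^{-1/2} \mathbf{A} \mathbf{A}^{-1/2} = I$ to reduce it to $\mathbf{Q}^{1/2} \Upsilon^\top \Upsilon \mathbf{Q}^{1/2}$. Since $\Upsilon^\top \Upsilon \preceq I$, congruence by $\mathbf{Q}^{1/2}$ preserves the inequality and gives $\mathbf{Q}^{1/2} \Upsilon^\top \Upsilon \mathbf{Q}^{1/2} \preceq \mathbf{Q}^{1/2} \mathbf{Q}^{1/2} = \mathbf{Q}$, which is the characterization of $\mathcal{C}$ in \eqref{set C: with A,Q,Zc}.

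The reverse inclusion $\mathcal{C} \subseteq \mathcal{E}$ is the main challenge, because $\mathbf{Q}$ is only guaranteed to be positive \emph{semi}definite, so $\mathbf{Q}^{1/2}$ may fail to be invertible and I cannot simply solve for $\Upsilon$ by inversion. The plan is to introduce the shorthand $W := \mathbf{A}^{1/2}(Z - \Zc)$, so that $Z \in \mathcal{C}$ translates to $W^\top W \preceq \mathbf{Q}$, and to set the candidate $\Upsilon := W (\mathbf{Q}^{1/2})^\dagger$ using the Moore--Penrose pseudoinverse. Two things then have to be checked: (i) that this $\Upsilon$ actually reproduces $W$, i.e. $\Upsilon \mathbf{Q}^{1/2} = W$, which is equivalent to the rows of $W$ lying in the range of $\mathbf{Q}^{1/2}$, or equivalently to $\ker(\mathbf{Q}^{1/2}) \subseteq \ker(W)$; and (ii) that $\|\Upsilon\| \le 1$.

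For (i), I would take any $v \in \ker(\mathbf{Q}^{1/2})$ and note that $v^\top \mathbf{Q} v = 0$; the inequality $W^\top W \preceq \mathbf{Q}$ then forces $|Wv|^2 = v^\top W^\top W v \le 0$, hence $Wv = 0$, establishing the kernel inclusion. Using the characterization of the Moore--Penrose inverse, $\mathbf{Q}^{1/2}(\mathbf{Q}^{1/2})^\dagger \mathbf{Q}^{1/2} = \mathbf{Q}^{1/2}$, together with the symmetry of $\mathbf{Q}^{1/2}$, then yields $\Upsilon \mathbf{Q}^{1/2} = W$. For (ii), compute $\Upsilon^\top \Upsilon = (\mathbf{Q}^{1/2})^\dagger W^\top W (\mathbf{Q}^{1/2})^\dagger \preceq (\mathbf{Q}^{1/2})^\dagger \mathbf{Q} (\mathbf{Q}^{1/2})^\dagger$; using $\mathbf{Q} = \mathbf{Q}^{1/2} \mathbf{Q}^{1/2}$ and the defining axiom $(\mathbf{Q}^{1/2})^\dagger \mathbf{Q}^{1/2}$ is an orthogonal projector (hence $\preceq I$), conclude $\Upsilon^\top \Upsilon \preceq I$. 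Finally, $Z = \Zc + \mathbf{A}^{-1/2} W = \Zc + \mathbf{A}^{-1/2} \Upsilon \mathbf{Q}^{1/2} \in \mathcal{E}$, completing the proof.
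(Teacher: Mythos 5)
Your proof is correct and is essentially the paper's argument in different notation: since $\mb{Q}^{1/2}$ is symmetric positive semidefinite, its Moore--Penrose inverse equals $T_1\Lambda_p^{-1}T_1^\top$ in the paper's eigendecomposition, so your candidate $\Upsilon = \mb{A}^{1/2}(Z-\Zc)(\mb{Q}^{1/2})^\dagger$ coincides with the paper's selection \eqref{selection Upsilon}, and both verifications hinge on the same kernel inclusion $\ker \mb{Q}^{1/2} \subseteq \ker\big(\mb{A}^{1/2}(Z-\Zc)\big)$ derived from $v^\top \mb{Q} v = 0$. The pseudoinverse packaging merely replaces the explicit $T_1$, $T_2$ block computations with standard projector identities.
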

\begin{proof}
It is sufficient to prove $\mathcal{E} \subseteq \mathcal{C}$ and $\mathcal{C} \subseteq \mathcal{E}$. \newline
$(\mathcal{E} \subseteq \mathcal{C})$ 
Suppose $Z \in \mathcal{E}$, i.e.,  $Z=\Zc+\mb{A}^{-1/2} \Upsilon \mb{Q}^{1/2}$ for some matrix $\Upsilon$ with $\|\Upsilon\| \leq 1$. Hence, $(Z-\Zc)^\top \mb{A}(Z-\Zc)=(\mb{A}^{-1/2} \Upsilon \mb{Q}^{1/2})^\top \mb{A}(\mb{A}^{-1/2} \Upsilon \mb{Q}^{1/2})=\mb{Q}^{1/2} \Upsilon^\top \Upsilon \mb{Q}^{1/2} \preceq \mb{Q}$.
Thus $Z \in \mathcal{C}$.\newline
$(\mathcal{C} \subseteq \mathcal{E})$ Suppose $Z \in \mathcal{C}$, i.e.,
\begin{equation}
\label{Z in set C}
(Z - \Zc)^\top \mb{A} (Z - \Zc) \preceq \mb{Q}.
\end{equation}
We need to find a matrix $\Upsilon$ with $\|\Upsilon\| \leq 1$ such that $Z=\Zc+\mb{A}^{-1/2} \Upsilon \mb{Q}^{1/2}$, i.e.,
\begin{equation}
\label{Upsilon in set E:equat}
\Upsilon \mb{Q}^{1/2}=\mb{A}^{1/2}(Z-\Zc).
\end{equation} 
If $\mb{Q}^{1/2} = 0$, we have the trivial solution $\Upsilon =0$. 
Otherwise, $\mb{Q}^{1/2}$ has $p \in \{1,\dots, n\}$ positive eigenvalues that define $\Lambda_p := \diag(\lambda_1,\dots,\lambda_p) \succ 0$.
Since $\mb{Q}^{1/2}$ is symmetric, there exists a real orthogonal matrix $T$ (i.e., $T^\top T = T T^\top = I$) such that 
\begin{equation}
\label{eigendec of sqrt Q}
\mb{Q}^{1/2} = T \Lambda T^\top :=  T \smat{\Lambda_p & 0\\ 0 & 0} T^\top,
\end{equation}
which is an eigendecomposition of $\mb{Q}^{1/2}$ and admits $\Lambda = \Lambda_p$ if $p=n$ (i.e., $\mb{Q}^{1/2} \succ 0$). 
Writing $T=: \big[ T_1~ T_2 \big]$ yields
\begin{equation}
\label{real orthog of T}
\smat{ T_1^\top T_1 & T_1^\top T_2\\ T_2^\top T_1 & T_2^\top T_2}
=
\smat{I & 0\\ 0 & I} \text{ and } T_1 T_1^\top + T_2 T_2^\top = I
\end{equation}
from $T^\top T = I$ and $T T^\top = I$. 
Select
\begin{equation}
\label{selection Upsilon}
\Upsilon = \mb{A}^{1/2} (Z-\Zc) T_1 \Lambda_p^{-1} T_1^\top
\end{equation}
(which reduces to $\mb{A}^{1/2} (Z-\Zc) \mb{Q}^{-1/2}$ if $p=n$).
We first show $\| \Upsilon \| \le 1$:
\begin{align*}
& \Upsilon^\top \Upsilon = T_1 \Lambda_p^{-1} T_1^\top (Z-\Zc)^\top  \mb{A}^{1/2}  \mb{A}^{1/2} (Z-\Zc) T_1 \Lambda_p^{-1} T_1^\top\\
& \overset{\eqref{Z in set C} }{\preceq}  T_1 \Lambda_p^{-1} T_1^\top \cdoT \mb{Q} [\star]^\top \overset{\eqref{eigendec of sqrt Q}}{=} T_1 \Lambda_p^{-1} T_1^\top \smat{T_1 & T_2} \cdoT \smat{\Lambda_p^2 & 0\\ 0 & 0} [\star]^\top\\
& = T_1 \Lambda_p^{-1} T_1^\top T_1 \cdoT \Lambda_p^2 [\star]^\top
\overset{\eqref{real orthog of T}}{=} T_1 T_1^\top \overset{\eqref{real orthog of T}}{\preceq} I.
\end{align*}
Then, we show that \eqref{Upsilon in set E:equat} holds. \eqref{Upsilon in set E:equat} is equivalent to
\begingroup%
\thinmuskip=.8mu plus 4mu
\medmuskip=1.6mu plus 8mu
\thickmuskip=2.4mu plus 12mu
\begin{align*}
& \Upsilon \mb{Q}^{1/2} \overset{\eqref{eigendec of sqrt Q}}{=} \Upsilon \smat{T_1 & T_2} \smat{\Lambda_p & 0\\ 0 & 0} T^\top  =\mb{A}^{1/2}(Z-\Zc)\\
& \iff \smat{\Upsilon T_1 \Lambda_p & 0}  =\mb{A}^{1/2}(Z-\Zc) T =\mb{A}^{1/2}(Z-\Zc) \smat{T_1 & T_2}\\
& \iff  \big(  \Upsilon T_1 \Lambda_p =\mb{A}^{1/2}(Z-\Zc)T_1
\text{, }  0 =\mb{A}^{1/2}(Z-\Zc) T_2 \big).
\end{align*}
\endgroup
If we show the last two equalities, we have shown \eqref{Upsilon in set E:equat} and completed the proof.
The first equality holds by the selection of $\Upsilon$ since $\Upsilon T_1 \Lambda_p  \overset{\eqref{selection Upsilon}}{=}
\mb{A}^{1/2} (Z-\Zc) T_1 \Lambda_p^{-1} T_1^\top T_1 \Lambda_p \overset{\eqref{real orthog of T}}{=} \mb{A}^{1/2} (Z-\Zc) T_1$. 
The second equality holds since the columns of $T_2$ are in $\ker \mb{Q}^{1/2}$ and $\ker \mb{Q}^{1/2}  \subseteq \ker  (\mb{A}^{1/2} (Z-\Zc))$. The columns of $T_2$ are in $\ker \mb{Q}^{1/2}$ because
$
\mb{Q}^{1/2} T_2 \overset{\eqref{eigendec of sqrt Q}}{=}  T \smat{\Lambda_p & 0\\ 0 & 0} \smat{T_1^\top \\ T_2^\top} T_2 \overset{\eqref{real orthog of T}}{=} T \smat{\Lambda_p & 0\\ 0 & 0} \smat{0 \\ I} = 0$;
$\ker \mb{Q}^{1/2}  \subseteq \ker  (\mb{A}^{1/2} (Z-\Zc))$ because, if $v$ satisfies $\mb{Q}^{1/2} v = 0$, then
$
0= v^\top\!\mb{Q}v \overset{\eqref{Z in set C}}{\ge} v^\top (Z - \Zc)^\top \mb{A} (Z - \Zc)v = |\mb{A}^{1/2} (Z - \Zc)v|^2
$, hence $\mb{A}^{1/2} (Z - \Zc)v = 0$.
\end{proof}

Considering $\mb{Q} \succeq 0$ rather than $\mb{Q} \succ 0$ is motivated since it allows us to include seamlessly the relevant special case of ideal data, namely, when the disturbance is not present. This corresponds indeed to $\Delta = 0$ and $\mathcal{D} = \{ 0 \}$ in~\eqref{set D} and $\mb{Q} = 0$ in~\eqref{Q simplest} by $D_0 \in \mathcal{D}$.
With the equivalent parametrization $\mathcal{E}$ of set $\mathcal{C}$ and Petersen's lemma in Fact~\ref{fact:petersen-ext}, we reach the next main result.
\begin{theorem}
\label{thm:sol}
For data given by $U_0$, $X_0$, $X_1$ in~\eqref{data} satisfying Assumption~\ref{ass:pers exc} and yielding $\mb{A}$, $\mb{B}$, $\mb{C}$ in~\eqref{A B C consist ellips}, feasibility of \eqref{probl} is equivalent to feasibility of
\begin{subequations}
\label{sol}
\begin{align}
& \text{find} & & Y, P=P^\top \succ 0 \label{sol:find}\\
& \text{s.~t.} & &  
\bmat{-P-\mb{C} & 0 & \mb{B}^\top\\
0 & -P & \bmat{P & Y^\top} \\
\mb{B} & \bmat{P\\ Y} & -\mb{A}
} \prec 0. \label{sol:lmi}
\end{align}
\end{subequations}
If \eqref{sol} is solvable, the controller gain is $K=YP^{-1}$.
\end{theorem}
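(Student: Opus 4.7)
The plan is to construct a chain of equivalences from \eqref{probl:for all} to \eqref{sol:lmi}, using the matrix-ellipsoid reparametrization of $\mathcal{C}$ from Proposition~\ref{proposition:set C cal = set E cal} to isolate the noise uncertainty, Petersen's lemma (Fact~\ref{fact:petersen-ext}) to eliminate it, and Schur complements plus a scalar rescaling to massage the resulting inequality into the three-block LMI shape of \eqref{sol:lmi}.

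First I would make the standard change of variables $Y:=KP$ and Schur-complement the closed-loop Lyapunov inequality so that \eqref{probl:for all} is equivalent to
\[
\bmat{-P & [A~B]\smat{P\\Y} \\ \smat{P\\Y}^\top [A~B]^\top & -P} \prec 0
\text{ for all } (A,B)\in\mathcal{C},
\]
using $[A~B]\smat{P\\Y}=AP+BY=(A+BK)P$. Proposition~\ref{proposition:set C cal = set E cal} then allows $[A~B]$ to be replaced by $Z_{\tu{c}}^\top+\mb{Q}^{1/2}\Upsilon^\top \mb{A}^{-1/2}$ ranging over all $\Upsilon$ with $\Upsilon^\top\Upsilon\preceq I$. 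Splitting the matrix into a nominal part plus an uncertain perturbation puts the constraint in the canonical Petersen form with
$\mb{C}_0=\bmat{-P & Z_{\tu{c}}^\top\smat{P\\Y} \\ \smat{P\\Y}^\top Z_{\tu{c}} & -P}$, $\mb{E}=\smat{\mb{Q}^{1/2}\\0}$, $\mb{G}=\bmat{0 & \mb{A}^{-1/2}\smat{P\\Y}}$, $\mb{F}=\Upsilon^\top$, and $\overline{\mb{F}}=I$. Fact~\ref{fact:petersen-ext} then gives equivalently that some $\lambda>0$ satisfies $\mb{C}_0+\lambda\mb{E}\mb{E}^\top+\lambda^{-1}\mb{G}^\top\mb{G}\prec 0$.

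The final step eliminates $\lambda$ and restructures. Substituting $(P,Y)\to(P/\lambda,Y/\lambda)$ preserves $P\succ 0$ and leaves $K=YP^{-1}$ unchanged; after rescaling the whole inequality by $\lambda>0$, both $\lambda$ and $\lambda^{-1}$ become $1$, so one may take $\lambda=1$ without loss of generality in both directions. By $\mb{A}\succ 0$ (Assumption~\ref{ass:pers exc} and Lemma~\ref{lemma:sign of A,Q}), the resulting $2\times 2$-block inequality is the Schur complement with respect to $-\mb{A}$ of a $3\times 3$-block matrix whose $(3,3)$ block is $-\mb{A}$; writing that enlarged inequality out and substituting $\mb{Q}=\mb{B}^\top\mb{A}^{-1}\mb{B}-\mb{C}$ and $Z_{\tu{c}}=-\mb{A}^{-1}\mb{B}$ to express everything in terms of $(\mb{A},\mb{B},\mb{C})$ delivers exactly \eqref{sol:lmi}, up to a harmless sign flip of the off-diagonal blocks obtained by conjugation with $\diag(I,-I)$.

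The main obstacle I anticipate is purely bookkeeping: routing the Schur complements in the right directions so the Petersen canonical form is reached cleanly and then unwound into the three-block form of \eqref{sol:lmi}, and verifying that the scalar rescaling that removes $\lambda$ is legitimate in both directions of the equivalence. Once the data-consistency constraint has been put in the form $\mb{C}_0+\mb{E}\mb{F}\mb{G}+\mb{G}^\top\mb{F}^\top\mb{E}^\top\prec 0$, the Petersen step itself is an off-the-shelf biconditional and contributes no further difficulty.
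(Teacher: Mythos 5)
Your proposal is correct and follows essentially the same route as the paper's proof: Schur complement of the Lyapunov inequality with $Y=KP$, the $\mathcal{C}=\mathcal{E}$ parametrization from Proposition~\ref{proposition:set C cal = set E cal}, Petersen's lemma in the same canonical form (up to swapping the roles of $\mb{E}$ and $\mb{G}$, i.e.\ $\lambda\leftrightarrow\lambda^{-1}$, and a sign convention on the off-diagonal blocks), absorption of $\lambda$ into $(P,Y)$, and a final Schur complement after substituting \eqref{Zc,Q}. The bookkeeping you flag works out exactly as you anticipate.
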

\begin{proof}
Thanks to Proposition~\ref{proposition:set C cal = set E cal}, \eqref{probl:for all} is equivalent to the fact that  for all $(A,B) \in \mathcal E$
\begin{align*}
& (A+BK) P (A+BK)^\top - P \\
& = \bmat{A & B} \bmat{I\\K} P P^{-1} P \bmat{I\\K}^\top  \bmat{A & B}^\top - P \prec 0.
\end{align*}
Finding $P=P^\top\succ 0$, $K$ so that this matrix inequality holds for all $(A,B) \in \mathcal{E}$ is equivalent to finding $P=P^\top \succ 0$, $Y$ so that
\begin{equation}%
\label{aux lmi}
\bmat{- P & -\smat{A & B} \smat{P\\ Y}\\
-\smat{P\\Y}^\top  \smat{A & B}^\top & -P } \prec 0  \text{ for all } (A,B)\in \mathcal{E},
\end{equation}
by $P\succ 0$ and Schur complement. 
Note that, as claimed in the statement, $Y$ and $K$ are related by $Y= KP$, and $Y$ is preferred over $K$ as decision variable since $KP$ makes the matrix inequality nonlinear. 
$\big[ A~~B\big] = Z^\top \in \mathcal{E}$ if and only if $Z=\Zc+\mb{A}^{-1/2} \Upsilon \mb{Q}^{1/2}$ for some $\Upsilon$ with $\Upsilon^\top \Upsilon \preceq I$, by the parametrization in~\eqref{set E}. 
Hence, \eqref{aux lmi} is true if and only if \eqref{before petersen}, which is displayed below over two columns, holds for all $\Upsilon$ with $\Upsilon^\top \Upsilon \preceq I$.
\begin{figure*}[h]
\begingroup%
\thinmuskip=0mu plus 1mu
\medmuskip=0mu plus 2mu
\thickmuskip=1mu plus 3mu
\begin{equation}
\label{before petersen}
0 \succ \bmat{- P & -(\Zc+\mb{A}^{-1/2} \Upsilon \mb{Q}^{1/2})^\top \smat{P\\Y}\\
\star & -P } = 
\bmat{- P & -\Zc^\top \smat{P\\Y}\\
\star & -P }  + \bmat{0\\ - \smat{P\\Y}^\top \mb{A}^{-1/2}} \Upsilon \bmat{\mb{Q}^{1/2} & 0}  + \bmat{\mb{Q}^{1/2} \\ 0} \Upsilon^\top\hspace*{-.6mm} \bmat{0 & -\mb{A}^{-1/2} \smat{P\\Y} } 
\end{equation}
\hrule
\endgroup%
\end{figure*}
\eqref{before petersen} is written in a way that enables applying Petersen's lemma in Fact~\ref{fact:petersen-ext} with respect to the uncertainty $\Upsilon$. 
Indeed, simple computations yield that \eqref{before petersen} holds for all $\Upsilon$ with $\Upsilon^\top \Upsilon \preceq I$ if and only if there exists $\lambda > 0$ such that
\begin{equation}
\label{after petersen}
\bmat{- P + \lambda^{-1} \mb{Q} & -\Zc^\top \smat{P\\Y}\\
-\smat{P\\Y}^\top \Zc  & -P + \lambda \smat{P\\Y}^\top \mb{A}^{-1}   \smat{P\\Y} }  \prec 0.
\end{equation}
In summary, we have so far that \eqref{probl} is the same as
\begin{equation}
\label{proof feas prob aux 1}
\text{find } Y, P=P^\top \succ 0, \lambda>0 \text{ subject to } \eqref{after petersen}.
\end{equation}
Multiply both sides of \eqref{after petersen} by $\lambda>0$ and ``absorb'' it in $P$ and $Y$, so that \eqref{proof feas prob aux 1} is actually equivalent to
\begin{subequations}
\label{proof feas prob aux 2}
\begin{align}
& \text{find} & & Y, P=P^\top\succ 0 \\
& \text{s.~t.} & &  
\bmat{- P + \mb{Q} & -\Zc^\top \smat{P\\Y}\\
-\smat{P\\Y}^\top \Zc  & -P + \smat{P\\Y}^\top \mb{A}^{-1}   \smat{P\\Y} }  \prec 0. \label{proof feas prob aux 2:matr ineq}
\end{align}
\end{subequations}
Substitute in \eqref{proof feas prob aux 2:matr ineq} $\Zc$ and $\mb{Q}$ as in~\eqref{Zc,Q} to obtain
\begin{align*}
& \bmat{- P + \mb{B}^\top \mb{A}^{-1}\mb{B}-\mb{C} &  \mb{B}^\top \mb{A}^{-1} \smat{P\\Y}\\
\smat{P\\Y}^\top \mb{A}^{-1}\mb{B} & -P + \smat{P\\Y}^\top \mb{A}^{-1}   \smat{P\\Y} }  \\
& = \bmat{- P -\mb{C} &  0\\
0 & -P } + 
\bmat{ \mb{B}^\top \\ \smat{P\\Y}^\top } 
\mb{A}^{-1}
\bmat{ \mb{B} &  \smat{P\\Y}} 
\prec 0.
\end{align*}
Take a Schur complement of this inequality and replace by it the one in~\eqref{proof feas prob aux 2:matr ineq} to make \eqref{proof feas prob aux 2} equivalent to \eqref{sol}.
\end{proof}

Similarly, we use the set $\mathcal{E}$ in~\eqref{set E} and Petersen's lemma reported in Fact~\ref{fact:petersen-ext} to resolve \eqref{probl-ct} in the next theorem.
\begin{theorem}
\label{thm:sol-ct}
For data given by $U_0$, $X_0$, $X_1$ in~\eqref{data} satisfying Assumption~\ref{ass:pers exc} and yielding $\mb{A}$, $\mb{B}$, $\mb{C}$ in~\eqref{A B C consist ellips}, feasibility of \eqref{probl-ct} is equivalent to feasibility of
\begin{subequations}
\label{sol-ct}
\begin{align}
& \text{find} & & Y, P=P^\top \succ 0 \label{sol-ct:find} \\
& \text{s.~t.} & &  
\bmat{-\mb{C} & \mb{B}^\top - \bmat{P\\ Y}^\top \\
\mb{B} - \bmat{P\\ Y}  & -\mb{A}
} \prec 0. \label{sol-ct:lmi}
\end{align}
\end{subequations}
If \eqref{sol-ct} is solvable, the controller gain is $K=YP^{-1}$.
\end{theorem}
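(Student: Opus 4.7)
The proof naturally parallels that of Theorem~\ref{thm:sol}, with the simplification that the continuous-time Lyapunov inequality is linear (rather than quadratic) in $P$, so no preliminary Schur complement is needed to linearize it.

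\textbf{Step 1 (rewrite \eqref{probl-ct:for all} in terms of $Z$).} I would first set $Y := KP$ and rewrite the inequality $(A+BK)P + P(A+BK)^\top \prec 0$ as $Z^\top \smat{P\\Y} + \smat{P\\Y}^\top Z \prec 0$, where $Z^\top = [A~~B]$. By Proposition~\ref{proposition:set C cal = set E cal}, the condition ``for all $(A,B) \in \mathcal{C}$'' is equivalent to ``for all $Z = \Zc + \mb{A}^{-1/2}\Upsilon \mb{Q}^{1/2}$ with $\Upsilon^\top \Upsilon \preceq I$''.

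\textbf{Step 2 (apply Petersen's lemma).} Substituting this parametrization, the inequality becomes
\begin{equation*}
\Zc^\top \smat{P\\Y} + \smat{P\\Y}^\top \Zc + \mb{Q}^{1/2} \Upsilon^\top \mb{A}^{-1/2}\smat{P\\Y} + \smat{P\\Y}^\top \mb{A}^{-1/2}\Upsilon \mb{Q}^{1/2} \prec 0,
\end{equation*}
which has exactly the form of \eqref{petersen-ext:for all} with $\mb{E} = \smat{P\\Y}^\top \mb{A}^{-1/2}$, $\mb{G} = \mb{Q}^{1/2}$, and $\overline{\mb{F}} = I$. By Fact~\ref{fact:petersen-ext}, this holds for all admissible $\Upsilon$ if and only if there exists $\lambda>0$ with
\begin{equation*}
\Zc^\top \smat{P\\Y} + \smat{P\\Y}^\top \Zc + \lambda\, \smat{P\\Y}^\top \mb{A}^{-1}\smat{P\\Y} + \lambda^{-1} \mb{Q} \prec 0.
\end{equation*}

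\textbf{Step 3 (absorb $\lambda$ and simplify).} Scaling $P \mapsto \lambda P$, $Y \mapsto \lambda Y$ (which preserves $P \succ 0$ and the gain $K=YP^{-1}$) allows me to set $\lambda = 1$ without loss of generality. Then, substituting $\Zc = -\mb{A}^{-1}\mb{B}$ and $\mb{Q} = \mb{B}^\top \mb{A}^{-1}\mb{B} - \mb{C}$ from \eqref{Zc,Q} and completing the square yields
\begin{equation*}
\bigl(\smat{P\\Y} - \mb{B}\bigr)^\top \mb{A}^{-1}\bigl(\smat{P\\Y} - \mb{B}\bigr) - \mb{C} \prec 0.
\end{equation*}
A Schur complement (using $\mb{A} \succ 0$ from Lemma~\ref{lemma:sign of A,Q}) rewrites this as \eqref{sol-ct:lmi}, completing the equivalence.

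The only step requiring real care is verifying that the absorption of $\lambda$ into $P$ and $Y$ is bijective between the two feasibility sets, but this is immediate: one direction uses the scaling above, and the reverse direction simply takes $\lambda = 1$. Since no quadratic term in $P$ appears, there is no need for the additional Schur complement that was required in the discrete-time case to linearize $(A+BK)P(A+BK)^\top$; consequently the resulting LMI \eqref{sol-ct:lmi} has smaller dimension than \eqref{sol:lmi}.
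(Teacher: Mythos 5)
Your proposal is correct and follows essentially the same route as the paper's own proof: parametrize $\mathcal{C}$ via Proposition~\ref{proposition:set C cal = set E cal}, apply the strict Petersen's lemma to the $\Upsilon$-dependent terms, absorb the multiplier $\lambda$ into $P$ and $Y$, substitute \eqref{Zc,Q}, complete the square, and take a Schur complement. Your identification of $\mb{E}$, $\mb{G}$, the scaling argument for $\lambda$, and the observation that no preliminary Schur complement is needed (unlike the discrete-time case) all match the paper's reasoning.
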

\begin{proof}
\submission{%
The proof follows the same reasoning of the proof of Theorem~\ref{thm:sol} and has somehow simplified steps since we do not need to first apply a Schur complement. It is thus omitted, but can be found in \cite{arXivVersion}.
}
\arxiv{%
Thanks to Proposition~\ref{proposition:set C cal = set E cal}, \eqref{probl-ct:for all} is equivalent to the fact that  for all $(A,B) \in \mathcal E$
\begin{align*}
& (A+BK) P + P (A+BK)^\top  \\
& = \bmat{A & B} \bmat{I\\K} P + P \bmat{I\\K}^\top \bmat{A & B}^\top \prec 0.
\end{align*}
Finding $K$, $P=P^\top\succ 0$ so that this matrix inequality holds for all $(A,B) \in \mathcal{E}$ is equivalent to finding $Y$, $P=P^\top \succ 0$ so that
\begin{equation}%
\label{aux lmi-ct}
\bmat{A & B} \bmat{P\\ Y} + \bmat{P\\Y}^\top \bmat{A & B}^\top \prec 0  \text{ for all } (A,B)\in \mathcal{E}.
\end{equation}
$\big[ A~~B\big] = Z^\top \in \mathcal{E}$ if and only if $Z=\Zc+\mb{A}^{-1/2} \Upsilon \mb{Q}^{1/2}$ for some $\Upsilon$ with $\Upsilon^\top \Upsilon \preceq I$, by the parametrization in~\eqref{set E}. 
Hence, \eqref{aux lmi-ct} is true if and only if \eqref{before petersen-ct}, which is displayed below over two columns, holds for all $\Upsilon$ with $\Upsilon^\top \Upsilon \preceq I$.
\begin{figure*}[h]
\begingroup%
\thinmuskip=0mu plus 1mu
\medmuskip=0mu plus 2mu
\thickmuskip=1mu plus 3mu
\begin{equation}
\label{before petersen-ct}
(\Zc+\mb{A}^{-1/2} \Upsilon \mb{Q}^{1/2})^\top \smat{P\\ Y} + \smat{P\\Y}^\top (\Zc+\mb{A}^{-1/2} \Upsilon \mb{Q}^{1/2})
= 
\Zc^\top \smat{P\\ Y} + \smat{P\\Y}^\top \Zc
+\smat{P\\Y}^\top  \mb{A}^{-1/2} \Upsilon \mb{Q}^{1/2} 
+ \mb{Q}^{1/2} \Upsilon^\top\hspace*{-.6mm} \mb{A}^{-1/2} \smat{P\\Y} 
\prec 0
\end{equation}
\endgroup%
\hrule
\end{figure*}
As in the proof of Theorem~\ref{thm:sol}, we apply to~\eqref{before petersen-ct} Petersen's lemma in Fact~\ref{fact:petersen-ext} with respect to the uncertainty $\Upsilon$; by simple computations, \eqref{before petersen-ct} holds for all $\Upsilon$ with $\Upsilon^\top \Upsilon \preceq I$ if and only if there exists $\lambda > 0$ such that
\begin{equation}
\label{after petersen-ct}
\Zc^\top \smat{P\\Y} + \smat{P\\Y}^\top \Zc + \lambda \smat{P\\Y}^\top \mb{A}^{-1} \smat{P\\Y} + \lambda^{-1} \mb{Q} \prec 0.
\end{equation}
In summary, \eqref{probl-ct} is the same as
\begin{equation}
\label{proof feas prob aux 1-ct}
\text{find } Y, P=P^\top\succ 0, \lambda > 0 \text{ subject to } \eqref{after petersen-ct}.
\end{equation}
Multiply both sides of \eqref{after petersen-ct} by $\lambda>0$ and ``absorb'' it in $P$ and $Y$, so that \eqref{proof feas prob aux 1-ct} is actually equivalent to
\begin{subequations}
\label{proof feas prob aux 2-ct}
\begin{align}
& \text{find} & & Y, P=P^\top\succ 0 \\
& \text{s.~t.} & &  
\Zc^\top \smat{P\\Y} + \smat{P\\Y}^\top \Zc + \smat{P\\Y}^\top \mb{A}^{-1} \smat{P\\Y} + \mb{Q} \prec 0. \label{proof feas prob aux 2-ct:matr ineq}
\end{align}
\end{subequations}
Substitute in \eqref{proof feas prob aux 2-ct:matr ineq} the expression of $\Zc$ and $\mb{Q}$ in~\eqref{Zc,Q} to obtain
\begin{align*}
& - \mb{B}^\top \mb{A}^{-1} \smat{P\\Y} - \smat{P\\Y}^\top \mb{A}^{-1} \mb{B} + \smat{P\\Y}^\top \mb{A}^{-1} \smat{P\\Y} -\mb{C}\\
& \hspace*{3mm} + \mb{B}^\top \mb{A}^{-1}\mb{B}   = -\mb{C} + (\mb{B} - \smat{P\\Y})^\top \mb{A}^{-1} (\mb{B} - \smat{P\\Y}) \prec 0.
\end{align*}
Taking a Schur complement of this matrix inequality and replacing by it the one in~\eqref{proof feas prob aux 2-ct:matr ineq} makes \eqref{proof feas prob aux 2-ct} equivalent to \eqref{sol-ct}.}
\end{proof}

Suppose that the set $\mathcal{C}$ is given directly in the form~\eqref{set C: with A,Q,Zc} as a matrix-ellipsoid over-approximation of a less tractable set that is derived from data, which we discuss in Section~\ref{sec:C as overapprox}.
For this case, a better alternative to Theorems~\ref{thm:sol}-\ref{thm:sol-ct} is the next corollary.
\begin{corollary}
\label{cor:assumeOnlyAandQ}
For the set $\mathcal{C} = \big\{ \big[A~~B\big] = Z^\top \colon 
(Z - \Zc)^\top \mb{A} (Z - \Zc) \preceq \mb{Q}
\big\}$ as in~\eqref{set C: with A,Q,Zc}, assume $\mb{A} \succ 0$ and $\mb{Q} \succeq 0$. Then, feasibility of \eqref{probl} (resp., \eqref{probl-ct}) is equivalent to feasibility of \eqref{sol} (resp., \eqref{sol-ct}).
If \eqref{sol} (resp., \eqref{sol-ct}) is solvable, the controller gain is $K=YP^{-1}$.
\end{corollary}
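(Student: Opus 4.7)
The plan is to observe that Theorems~\ref{thm:sol} and~\ref{thm:sol-ct} invoke the data-driven setting only through the sign conclusions of Lemma~\ref{lemma:sign of A,Q} (namely $\mb{A}\succ 0$ and $\mb{Q}\succeq 0$), together with the matrix-ellipsoid identity of Proposition~\ref{proposition:set C cal = set E cal} and Petersen's lemma. Since the corollary posits the two sign conditions as hypotheses, the full chain of implications in those proofs should transfer with only a minor notational adjustment.

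First, I would reconstruct the matrices $\mb{B}$ and $\mb{C}$ that appear in~\eqref{sol} and~\eqref{sol-ct} but not in the matrix-ellipsoid description of $\mathcal{C}$. Inverting~\eqref{Zc,Q} suggests defining
\[
\mb{B} := -\mb{A}\Zc, \qquad \mb{C} := \Zc^\top \mb{A}\Zc - \mb{Q},
\]
so that the relations $\Zc = -\mb{A}^{-1}\mb{B}$ and $\mb{Q} = \mb{B}^\top\mb{A}^{-1}\mb{B} - \mb{C}$ used throughout the proofs of Theorems~\ref{thm:sol}--\ref{thm:sol-ct} remain valid, and~\eqref{sol},~\eqref{sol-ct} become well-defined in terms of the ellipsoid data $(\mb{A},\Zc,\mb{Q})$. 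With these identifications in place and the standing hypotheses $\mb{A}\succ 0$, $\mb{Q}\succeq 0$, Proposition~\ref{proposition:set C cal = set E cal} immediately yields $\mathcal{C} = \mathcal{E}$ for $\mathcal{E}$ defined as in~\eqref{set E}.

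From that point onwards I would simply replay the arguments of Theorems~\ref{thm:sol} and~\ref{thm:sol-ct}: rewrite the robust matrix inequality quantified over $(A,B)\in \mathcal{C}$ as one quantified over $\|\Upsilon\|\le 1$ via $\mathcal{E}$; apply the strict version of Petersen's lemma (Fact~\ref{fact:petersen-ext}) to eliminate $\Upsilon$; absorb the resulting scalar multiplier $\lambda>0$ into $P$ and $Y$; substitute $\Zc$ and $\mb{Q}$ back in terms of $\mb{A},\mb{B},\mb{C}$; and, for the discrete-time case, take a Schur complement to land on~\eqref{sol} (the continuous-time case lands directly on~\eqref{sol-ct} without the Schur step). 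The controller recovery $K = YP^{-1}$ then follows exactly as in those proofs.

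The main (and really only) obstacle is a conceptual one: convincing oneself that Assumption~\ref{ass:pers exc} and the data matrices themselves play no role in the proofs of Theorems~\ref{thm:sol}--\ref{thm:sol-ct} beyond supplying the hypotheses $\mb{A}\succ 0$ and $\mb{Q}\succeq 0$ needed to invoke Proposition~\ref{proposition:set C cal = set E cal} and to guarantee invertibility of $\mb{A}^{1/2}$. Once this is recognised, no new computation is required, and the corollary follows by inspection of the existing proofs.
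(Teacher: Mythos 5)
Your proposal is correct and matches the paper's own (one-line) proof: the paper likewise observes that Theorems~\ref{thm:sol}--\ref{thm:sol-ct} use the data only through Proposition~\ref{proposition:set C cal = set E cal}, which holds under the assumed $\mb{A}\succ 0$ and $\mb{Q}\succeq 0$. Your explicit reconstruction $\mb{B}:=-\mb{A}\Zc$, $\mb{C}:=\Zc^\top\mb{A}\Zc-\mb{Q}$ to make \eqref{sol} and \eqref{sol-ct} well-defined from the ellipsoid data is a detail the paper leaves implicit, and it is the right one.
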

\begin{proof}
Proposition~\ref{proposition:set C cal = set E cal} is true under $\mb{A} \succ 0$ and $\mb{Q} \succeq 0$, and only Propositon~\ref{proposition:set C cal = set E cal} is used in Theorems~\ref{thm:sol}-\ref{thm:sol-ct}.
\end{proof}

\arxiv{
\begin{remark}
The conditions in Theorems~\ref{thm:sol}-\ref{thm:sol-ct} depend on the parametrization of $\mathcal{C}$ through $\mb{A}$, $\mb{B}$, $\mb{C}$ in~\eqref{set C:with A,B,C}. 
Alternatively, we can give conditions depending on the parametrization of $\mathcal{C}$ through $\mb{A}$, $\Zc$, $\mb{Q}$ in~\eqref{set C: with A,Q,Zc} as
\begin{align}
&  \smat{-P + \mb{Q} & -\Zc^\top \smat{P\\ Y} & 0\\
- \smat{P \\ Y}^\top \Zc & -P & \smat{P \\ Y}^\top \\
0 & \smat{P\\ Y} & -\mb{A}
} \prec 0
\label{sol-dt:decrease alt}\\
& \smat{\smat{P \\ Y}^\top  \Zc + \Zc^\top \smat{P \\ Y} + \mb{Q} & \hspace*{3mm}\smat{P \\ Y}^\top \\
\smat{P \\ Y} & -\mb{A}
} \prec 0.
\label{sol-ct:decrease alt}
\end{align}
\eqref{sol-dt:decrease alt} and \eqref{sol-ct:decrease alt} are equivalent forms of \eqref{sol:lmi} and \eqref{sol-ct:lmi}, respectively, and can be obtained from \eqref{proof feas prob aux 2:matr ineq} and \eqref{proof feas prob aux 2-ct:matr ineq} by taking Schur complements.
These two conditions can be more convenient numerically.
\end{remark}
}
\submission{
\begin{remark}
\label{remark:alternative conditions}
Instead of parameters $\mb{A}$, $\mb{B}$, $\mb{C}$ of $\mathcal{C}$ in~\eqref{set C:with A,B,C}, we can write the conditions \eqref{sol:lmi} and \eqref{sol-ct:lmi} in Theorems~\ref{thm:sol} and \ref{thm:sol-ct} in terms of $\mb{A}$, $\Zc$, $\mb{Q}$ of $\mathcal{C}$ in~\eqref{set C: with A,Q,Zc} as
\begingroup%
\setlength\arraycolsep{2.5pt}%
\thinmuskip=0.5mu plus 1mu
\medmuskip=1mu plus 2mu
\thickmuskip=1.5mu plus 3mu
\begin{align*}
&  \smat{-P + \mb{Q} & \star & \star\\
- \smat{P \\ Y}^\top \Zc & -P & \star \\
0 & \smat{P\\ Y} & -\mb{A}
} \prec 0 
\text{ and } \smat{\smat{P \\ Y}^\top  \Zc + \Zc^\top \smat{P \\ Y} + \mb{Q} & \star\\
\smat{P \\ Y} & -\mb{A}
} \prec 0.
\end{align*}
\endgroup
These conditions, which are obtained by Schur complement (see~\eqref{proof feas prob aux 2:matr ineq}), are equivalent to \eqref{sol:lmi} and \eqref{sol-ct:lmi}, respectively, and can be more convenient numerically.
\end{remark}
}

\section{Discussion and interpretations}
\label{sec:lin discussion}

This section is devoted to giving an overall interpretation of the previous developments.

\subsection{Assumption~\ref{ass:pers exc} and persistence of excitation}
\label{sec:pers of exc}

Assumption~\ref{ass:pers exc} is intimately related to the notion of
persistence of excitation, as we now motivate. 
With full details in \cite[\S 4.2]{dpt2021Aut}, the result \cite[Cor.~2]{willems2005note}, which was given in the ideal case without disturbance $x^+ = A_\star x + B_\star u$, can show for the present case
\begin{equation*}
x^+ = A_\star x + B_\star u + d = A_\star x + \big[B_\star~~I\big]
\smat{u \vspace*{1.5mm}\\ d}
\end{equation*}
that: \textit{(i)}~controllability of 
$(A_\star,B_\star)$, \textit{(ii)}~an input sequence 
persistently exciting of order $n+1$ \footnote{See \cite[p.~327]{willems2005note} 
or \cite[Def.~1]{depersis2020tac} for a definition.}, and \textit{(iii)}~a disturbance sequence 
persistently exciting of order $n+1$ imply together that 
$\smat{X_0\\ U_0\\ D_0}$ has full row rank and so has $\smat{X_0\\ U_0}$, as required in Assumption~\ref{ass:pers exc}. 
In the ideal case, \textit{(i)} and \textit{(ii)} imply that $\smat{X_0\\ U_0}$ has full row rank \cite[Cor.~2]{willems2005note}.
In other words, Assumption~\ref{ass:pers exc} holds under a persistently exciting disturbance and the same conditions of the ideal case, which include a persistently exciting input.

\subsection{Ellipsoidal uncertainty, least squares and certainty-equivalence control}
\label{sec:least squares}

\arxiv{From \eqref{sol-dt:decrease alt} and \eqref{sol-ct:decrease alt}, t}\submission{T}he discrete- and continuous-time 
stability conditions of Theorems~\ref{thm:sol} and \ref{thm:sol-ct} are
equivalent\submission{, see Remark~\ref{remark:alternative conditions},} to
\begingroup%
\setlength\arraycolsep{2.5pt}%
\thinmuskip=0.3mu plus 1mu
\medmuskip=.6mu plus 2mu
\thickmuskip=.9mu plus 3mu
\begin{align}
& \hspace*{-8pt} \bmat{-P  & -\Zc^\top \smat{P\\ Y} \\
- \smat{P \\ Y}^\top \Zc & -P
} 
+
\bmat{\mb{Q} & 0 \\
0 & \smat{P \\ Y}^\top \mb{A}^{-1} \smat{P\\ Y}
} 
\prec 0~\text{ and } \hspace*{-2pt} 
\label{sol-dt:decrease alt LS}\\
& \hspace*{-8pt} 
\Bigg(\bmat{P \\ Y}^\top  \Zc + \Zc^\top \bmat{P \\ Y}\Bigg) + \Bigg( \mb{Q}
+ \bmat{P \\ Y}^\top \mb{A}^{-1} \bmat{P \\ Y} \Bigg) \prec 0, \hspace*{-2pt}  \label{sol-ct:decrease alt LS}
\end{align}
\endgroup
respectively, with $\mb{A}$ as in \eqref{A B C consist ellips} and $\Zc$, $\mb{Q}$ as in \eqref{Zc,Q}.
The matrix $\Zc$ appears only in the first term of the two matrix inequalities and it represents the center of the uncertainty set $\mathcal C$, see \eqref{set C: with A,Q,Zc}. 
On the other hand, the matrices $\mb{A}$, $\mb{Q}$ appearing in the second term of the two matrix inequalities determine the size of the uncertainty; in particular, the size of 
$\mathcal{C}$ is given by
$(\det \mb{Q} )^{(n+m)/2} (\det \mb{A} )^{-n/2}$,
see \cite[\S 2.2]{Bisoffi2021tradeoffs}. 
By Lemma~\ref{lemma:sign of A,Q}, the second terms in~\eqref{sol-dt:decrease alt LS} and \eqref{sol-ct:decrease alt LS} are positive semidefinite, and this means that the design problem can be interpreted as the problem of finding a controller that \emph{robustly} stabilizes the dynamics associated with the center $\Zc$ of the uncertainty set $\mathcal C$, where the uncertainty increases with the noise bound $\Delta$, see the expression of $\mb{Q}$ in \eqref{Q simplest}.

Quite interestingly, the center $\Zc$ of the uncertainty set $\mathcal C$ coincides with the (ordinary) least-squares estimate of the system dynamics, i.e., with the solution $(A_{\tu{ls}}, B_{\tu{ls}})$ to
\begin{equation*}
\min_{(A,B)} \|  X_1 - A X_0 - B U_0  \|_{\tu{F}}^2
\end{equation*}
where $\| \cdot  \|_{\tu{F}}$ denotes the Frobenius norm. Indeed,
\begin{align*}
\big[ A_{\tu{ls}}~~B_{\tu{ls}} \big] & := \arg \min_{(A,B)} \|  X_1 - A X_0 - B U_0  \|_{\tu{F}}^2 \\ 
& = X_1 \smat{X_0\\ U_0}^\dagger  = ( - \mb{A}^{-1} \mb{B} )^\top = \Zc^\top,
\end{align*}
see \cite[\S 2.6]{verhaegen2007filtering}. 
This justifies why \emph{certainty-equivalence} control works well in regimes of small uncertainty (when $\Delta$ is small), in agreement with what has been recently observed in \cite{mania2019certainty,dorfler2021certainty}. 
On the other hand, this also explains why with noisy data robust control is generally needed, which is also the main idea behind the \emph{robust indirect} control approaches
\cite{Dean2018journ,ferizbegovic2019learning,treven2021learning} under a stochastic noise description. 
Besides the noise description, a difference between our work and \cite{Dean2018journ,ferizbegovic2019learning,treven2021learning} is that our approach is direct in the sense that solving \eqref{sol} or \eqref{sol-ct} does not require to explicitly constructing any estimate of the system dynamics, which is distinctive of indirect methods.

\subsection{Comparison with alternative conditions in \cite{vanwaarde2020noisy}}
\label{sec:comparison}

Sections~\ref{sec:pers of exc} leads us 
to a comparison with the approach based on a matrix S-procedure in \cite{vanwaarde2020noisy}.
We recall its main result for data-based stabilization, \cite[Thm.~14]{vanwaarde2020noisy}, and rephrase it for the context of this paper in the next fact.

\begin{fact}{\cite[Thm.~14]{vanwaarde2020noisy}}
\label{fact:data driven by S-lemma}
Assume that the generalized Slater condition
\begin{equation*}
\bmat{I\\ \bar Z}^\top 
\bmat{\mb{C} & \mb{B}^\top \\ \mb{B} & \mb{A}}
\bmat{I\\ \bar Z} \prec 0
\end{equation*}
holds for some $\bar Z \in \real^{(n+m) \times n}$. 
Then, there exist a feedback gain $K$ and a matrix $P=P^\top \succ 0$ such that $(A+ B K) P (A + B K)^\top - P \prec 0$ for all $(A,B) \in \mathcal{C}$ if and only if the next program is feasible
\begin{align*}
& \text{find} & & P = P^\top \succ 0, Y,  \alpha \ge 0, \beta >0\\
& \text{s.~t.} & &  
\bmat{-P + \beta I & 0\\ 0 & \smat{P\\ Y} P^{-1} \smat{P\\ Y}^\top }
- \alpha 
\bmat{\mb{C} & \mb{B}^\top \\ \mb{B} & \mb{A}}
\preceq 0.
\end{align*}
If $P$ and $Y$ are a solution to it, then $K = Y P^{-1}$ is a stabilizing gain for all $(A,B)\in \mathcal{C}$.
\end{fact}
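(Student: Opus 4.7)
The plan is to recognize that both the closed-loop Lyapunov inequality and the data-consistency constraint are quadratic matrix inequalities in the unknown $Z^\top = \big[A~~B\big]$, and then invoke the matrix-valued S-procedure proved in \cite{vanwaarde2020noisy} that translates such implications into LMIs through a single scalar multiplier.

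First I would put the Lyapunov decrease into the same quadratic form as the definition of $\mathcal{C}$ in \eqref{set C:with A,B,C}. Setting $Y := KP$, the identity $(A+BK)P = Z^\top \smat{P\\Y}$ combined with a Schur complement on $P \succ 0$ turns $(A+BK) P (A+BK)^\top - P \prec 0$ into
\[
\bmat{I\\ Z}^\top \bmat{-P & 0\\ 0 & \smat{P\\Y} P^{-1}\smat{P\\Y}^\top} \bmat{I\\ Z} \prec 0,
\]
while $(A,B) \in \mathcal{C}$ is precisely $\smat{I\\Z}^\top \smat{\mb{C} & \mb{B}^\top\\ \mb{B} & \mb{A}} \smat{I\\Z} \preceq 0$. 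Hence \eqref{probl:for all} is exactly the implication between two quadratic matrix inequalities in $Z \in \real^{(n+m)\times n}$.

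Second, I would invoke the matrix S-lemma of \cite{vanwaarde2020noisy}: under the generalized Slater condition on the constraint matrix, this implication is equivalent to the existence of a multiplier $\alpha \geq 0$ such that the corresponding weighted combination of the two matrices is negative semidefinite. To bridge the strict Lyapunov inequality with the nonstrict data-consistency constraint, I would introduce a slack $\beta > 0$ and inject $\beta I$ into the $(1,1)$-block of the Lyapunov matrix: then negative semidefiniteness of the weighted combination, when sandwiched with $\smat{I\\ Z}$ for any feasible $Z$, yields the strict Lyapunov decrease by a margin of at least $\beta I$. The reverse implication --- feasibility of the LMI implies the robust stabilization requirement --- is a direct sandwich computation using $\alpha \geq 0$, so only the forward direction requires the S-lemma and the Slater hypothesis.

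The main obstacle is the matrix S-lemma itself. Unlike the scalar Yakubovich S-procedure, its matrix counterpart is generally lossy; \cite{vanwaarde2020noisy} exploits the specific structure of $\smat{I\\Z}$-parametrized families together with the generalized Slater condition to achieve losslessness via a convex-analytic argument on the image of the associated quadratic map. In my sketch I would treat that result as a black box. A secondary subtlety is the strict-versus-nonstrict interplay handled by the scalar $\beta > 0$: without this slack the LMI would only guarantee a nonstrict Lyapunov decrease, which is insufficient for Schur stability.
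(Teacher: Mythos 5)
The paper does not prove this statement: Fact~\ref{fact:data driven by S-lemma} is imported verbatim from \cite[Thm.~14]{vanwaarde2020noisy} purely for comparison with Theorem~\ref{thm:sol}, so there is no in-paper proof to measure your attempt against. That said, your outline faithfully reconstructs the route taken in the cited reference: rewriting the Lyapunov decrease with $Y=KP$ as $\smat{I\\ Z}^\top \smat{-P & 0\\ 0 & \smat{P\\ Y}P^{-1}\smat{P\\ Y}^\top}\smat{I\\ Z}\prec 0$ (note this is direct substitution, no Schur complement is actually needed to reach the form appearing in the program), identifying membership in $\mathcal{C}$ with the quadratic matrix inequality $\smat{I\\ Z}^\top\smat{\mb{C} & \mb{B}^\top\\ \mb{B} & \mb{A}}\smat{I\\ Z}\preceq 0$ from~\eqref{set C:with A,B,C}, handling the strict/nonstrict mismatch via the slack $\beta I$ in the $(1,1)$ block, and observing that the sufficiency direction is an elementary sandwich argument using $\alpha\ge 0$. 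Your sandwich computation for that direction is correct and yields the decrease with margin $\beta I$. The only substantive content you leave unproven is the losslessness of the matrix-valued S-procedure under the generalized Slater condition, which is precisely the hard part of \cite{vanwaarde2020noisy} and cannot be obtained from the scalar S-procedure; treating it as a black box is acceptable for reproducing a cited theorem, but be aware that your proposal is then a reduction to that result rather than a self-contained proof. It is also worth noting that the present paper deliberately avoids this machinery altogether, obtaining the equivalent condition~\eqref{sol:lmi} through Petersen's lemma under Assumption~\ref{ass:pers exc} instead of the Slater hypothesis.
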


Fact~\ref{fact:data driven by S-lemma} and Theorem~\ref{thm:sol} are two alternative approaches since both propose a necessary and sufficient condition for quadratic 
stabilization; indeed, \eqref{sol:lmi} in Theorem~\ref{thm:sol} is equivalent, by Schur complement and changing sign to off-diagonal terms, to
\begin{equation*}
\bmat{-P & 0\\ 0 & \smat{P\\ Y} P^{-1} \smat{P\\ Y}^\top }
- \bmat{\mb{C} & \mb{B}^\top \\ \mb{B} & \mb{A}}
\prec 0.
\end{equation*}
There are some interesting differences, though.
Fact~\ref{fact:data driven by S-lemma} operates under a Slater condition, whereas Theorem~\ref{thm:sol} under Assumption~\ref{ass:pers exc}.
The Slater condition can capture the case of an unbounded set $\mathcal{C}$, which cannot occur with Assumption~\ref{ass:pers exc} (see Lemma~\ref{lemma:boundedness of set C}); 
by contrast, the Slater condition cannot capture the case of ideal data \cite[\S II.C]{vanWaarde2021finsler}, which requires different arguments \cite{vanWaarde2021finsler}.
We believe the approach through Petersen's lemma is appealing due to the conceptual insights it provides on the data-based control laws, which we have expounded in this Section~\ref{sec:lin discussion}, and to its ease of applicability beyond the linear systems of Section~\ref{sec:data-driv lin:main results}, as we show for polynomial systems in Section~\ref{sec:data-driven polynomial}.

\subsection{$\mathcal{C}$ as an ellipsoidal over-approximation}
\label{sec:C as overapprox}

As \eqref{set C} shows, we have derived set $\mathcal{C}$ based on the disturbance bound in $\mathcal{D}$ and the relation data need to satisfy. 
On the other hand, the matrix-ellipsoid  form \eqref{set C: with A,Q,Zc} of set $\mathcal{C}$ can be fruitfully used as an over-approximation of sets of matrices consistent with data that are not matrix ellipsoids, since ellipsoidal sets are generally better tractable.
In that case, as long as matrices $\mb{A}$ and $\mb{Q}$ in~\eqref{set C: with A,Q,Zc} satisfy $\mb{A} \succ 0$ and $\mb{Q} \succeq 0$, one can use directly Corollary~\ref{cor:assumeOnlyAandQ}.
We describe succinctly a relevant case when this could be done based on \cite{Bisoffi2021tradeoffs}, to which we refer the reader for a more elaborate discussion.

With the definitions for $i = 0, 1, \dots , T-1$
\begin{align*}
\varkappa^\circ_i := x(t_{i+1}) \text{ or } \varkappa^\circ_i := \dot{x}(t_i), \varkappa_i:=x(t_i), \upsilon_i := u(t_i)
\end{align*}
that embed discrete or continuous time, consider the disturbance model
$\mathcal{D}_{\tu{i}} := \{ d \in \real^n \colon |d|^2 \le \delta \}$.
The corresponding set of matrices consistent with all data points $i=0,\dots, T-1$ is $\mathcal{I} := \bigcap_{i =0}^{T-1} \{ (A,B) \colon \varkappa^\circ_i = A \varkappa_i + B \upsilon_i +d , d\in \mathcal{D}_{\tu{i}} \}$ and, due to the intersection, its size remains equal or decreases with $T$.
$\mathcal{I}$ is not a matrix ellipsoid and the results in Section~\ref{sec:data-driv lin:main results} cannot be applied to it.
Still, a matrix ellipsoid $\mathcal{C} \supseteq \mathcal{I}$ as in~\eqref{set C:with A,B,C} can be readily obtained; its parameters $\mb{A}$, $\mb{B}$, $\mb{C} := \mb{B}^\top \mb{A}^{-1} \mb{B} - I$ follow from the optimization problem
\begin{subequations}
\label{opt overapprox ellips}
\begin{align}
& \text{min.} & & -\log\det \mb{A} \text{\quad(over $\mb{A}$, $\mb{B}$, $\tau_1$, \dots, $\tau_{T-1}$)}\\
& \text{s.~t.}  & & 
\hspace*{-5pt}\bmat{
- I - \sum_{i =0}^{T-1} \tau_i \gamma_i   
& \star
& \star\\
\mb{B} - \sum_{i =0}^{T-1} \tau_i \beta_i 
& \mb{A} - \sum_{i =0}^{T-1} \tau_i \alpha_i 
& \star \\
\mb{B}
& 0 
& - \mb{A}
} \preceq 0\\
& & & \mb{A} \succ 0, \, \tau_i\ge 0 \text{ for } i = 0, 1, \dots , T-1
\end{align}%
\end{subequations}
with data-related quantities
\begin{equation}
\label{gamma_i,beta_i,alpha_i}
\hspace*{-6pt} \gamma_i :=-\delta I +\varkappa^\circ_i {\varkappa^\circ_i}^\top\hspace*{-2pt},\hspace*{1pt}
\beta_i := -\bmat{\varkappa_i\\\upsilon_i}{\varkappa^\circ_i}^\top\hspace*{-2pt},\hspace*{1pt}
\alpha_i := \bmat{\varkappa_i \\ \upsilon_i}\hspace*{-1pt}\bmat{\varkappa_i\\\upsilon_i}^\top\hspace*{-2pt}\hspace*{-3pt}
\end{equation}
for $i=0,\dots, T-1$.
(This optimization problem is the natural extension to matrix ellipsoids of the one in~\cite[\S 3.7.2]{boyd1994linear} for classical ellipsoids.)
A feasible solution to~\eqref{opt overapprox ellips} guarantees by construction  $\mb{A} \succ 0$ and $\mb{Q} = \mb{B}^\top \mb{A}^{-1}\mb{B} - \mb{C} = I \succeq 0$ (by the selection of $\mb{C}$); hence,
Corollary~\ref{cor:assumeOnlyAandQ} can be applied to this $\mathcal{C}$.
A very desirable feature of this $\mathcal{C}$, inherited from $\mathcal{D}_{\tu{i}}$, is that its size generally decreases with $T$, and this requires, in turn, a lesser degree of robustness in the design of the controller if one collects more data.
In summary, when an instantaneous disturbance model $\mathcal{D}_{\tu{i}}$ is given, the results of Section~\ref{sec:data-driv lin:main results} cannot be applied to the corresponding set $\mathcal{I}$  but can be to the set $\mathcal{C}$ obtained by~\eqref{opt overapprox ellips}.
The tightness of the over-approximation is 
problem-dependent, and it might be convenient to work directly with $\mathcal{I}$ at the expense of an increase in the computational complexity \cite{Bisoffi2021tradeoffs}.

\section{Data-driven control for polynomial systems}
\label{sec:data-driven polynomial}

We illustrate in this section that Petersen's lemma proves useful also for polynomial systems, if applied pointwise. 
As an important class of a nonlinear input-affine system, consider the polynomial system
\begin{equation}
\label{sys-pol}
\dot x = f_\star(x) + g_\star(x) u + d = A_\star Z(x) + B_\star W(x) u + d
\end{equation}
where $x \in \real^n$ is the state, $u \in \real^m$ is the input, $d \in \real^n$ is a disturbance; $x \mapsto Z(x) \in \real^N$ is a \emph{known} regressor vector of monomials of $x$ and $x \mapsto W(x) \in \real^{M \times m}$ is a \emph{known} regressor matrix of monomials of $x$; the rectangular matrices $A_\star \in \real^{n \times N}$ and $B_\star \in \real^{n \times M}$ with the coefficients of the regressors are \emph{unknown} to us.
The selection of the regressors $Z$ and $W$ is a key aspect for feasibility of the optimization-based control law, and we comment this in detail in Section~\ref{sec:num ex:pol}.
We will handle data-driven control conditions for \eqref{sys-pol} through a sum-of-squares relaxation; since sum-of-square tools are most commonly used for continuous-time systems, we consider directly the continuous-time case in \eqref{sys-pol}.

As in Section~\ref{sec:probl form}, we perform an experiment on the system by applying an input sequence $u(t_0)$, \dots, $u(t_{T-1})$ of $T$ samples and measure the state and state-derivative sequences $x(t_0)$, \dots, $x(t_{T-1})$ and $\dot x(t_0)$, \dots, $\dot x(t_{T-1})$.
The unknown disturbance sequence $d(t_0)$, \dots, $d(t_{T-1})$ affects the evolution of the system, leading to noisy data. We collect the data points in the matrices
\begin{subequations}
\label{data-pol}
\begin{align}
V_0 & :=\bmat{W(x(t_0))u(t_0) & \cdots & W(x(t_{T-1}))u(t_{T-1})} \\
Z_0 & :=\bmat{Z(x(t_0)) & & \cdots & Z(x(t_{T-1}))} \\
X_1 & :=\bmat{\dot x(t_0) & \cdots & \dot x(t_{T-1})}.
\end{align}
\end{subequations}
With the unknown disturbance sequence in $D_0 := \bmat{d(t_0) & & \cdots & d(t_{T-1})}$, data satisfy
\begin{equation*}
X_1 = A_\star Z_0 + B_\star V_0 + D_0.
\end{equation*}
As in Section~\ref{sec:probl form}, the set of matrices consistent with data $X_1$, $Z_0$, $V_0$ and disturbance model $\mathcal{D}$ in~\eqref{set D} is
\begin{equation*}
\tilde{\mathcal{C}} := \{ (A,B) \colon X_1 = A Z_0 + B V_0 + D, D \in \mathcal{D} \}.
\end{equation*}
We can then follow closely the rationale of Section~\ref{sec:reform set C}, and we briefly outline only the key steps.
The set $\tilde{\mathcal{C}}$ can be reformulated as
\begin{align*}
& \tilde{\mathcal{C}} = \Big\{
(A,B)\colon
\bmat{
I & A & B
}
\cdoT
\left[ \begin{array}{c|c}
\tilde{\mb{C}} & \tilde{\mb{B}}^\top \\ 
\hline
\tilde{\mb{B}} & \tilde{\mb{A}}
\end{array} \right]
[\star]^\top
\preceq 0
\Big\} \\
& \left[ \begin{array}{c|c}
\tilde{\mb{C}} & \tilde{\mb{B}}^\top \\ 
\hline
\tilde{\mb{B}} & \tilde{\mb{A}}
\end{array} \right]
:=
\left[ \begin{array}{c|c}
X_1X_1^\top-\Delta\Delta^\top & -X_1 \smat{Z_0\\V_0}^\top \\
\hline
\phantom{\rule{0.1pt}{13pt}} 
-\smat{Z_0\\V_0}X_1^\top & \smat{Z_0\\ V_0} \smat{Z_0\\ V_0}^\top
\end{array} \right].
\end{align*}
The next assumption is analogous to Assumption~\ref{ass:pers exc}.
\begin{assumption}
\label{ass:pers ext-pol}
Matrix $\smat{Z_0\\ V_0}$ has full row rank.
\end{assumption}
$\tilde{\mb{A}} \succ 0$ by Assumption~\ref{ass:pers ext-pol}, and $\tilde{\mathcal{C}}$ can be rewritten as
\begin{align}
& \tilde{\mathcal{C}} = \big\{ \big[A~~B\big] = Z^\top \colon 
(Z - \tilde{\Zc})^\top \tilde{\mb{A}} (Z - \tilde{\Zc}) \preceq \tilde{\mb{Q}}
\big\} \label{set tilde C}\\
& \tilde{\Zc} := - \tilde{\mb{A}}^{-1} \tilde{\mb{B}},\, \tilde{\mb{Q}}:= \tilde{\mb{B}}^\top \tilde{\mb{A}}^{-1}\tilde{\mb{B}} - \tilde{\mb{C}}. \notag
\end{align}
The logical steps of Lemma~\ref{lemma:sign of A,Q}, Lemma~\ref{lemma:boundedness of set C} and Proposition~\ref{proposition:set C cal = set E cal} can be repeated in the same way after replacing $\smat{X_0 \\ U_0}$ with $\smat{Z_0\\ V_0}$, so their results are summarized in the next lemma without proof.
\begin{lemma}
Under Assumption~\ref{ass:pers ext-pol}, we have: $\tilde{\mb{A}} \succ 0$, $\tilde{\mb{Q}} \succeq 0$, $\tilde{\mathcal{C}}$ is bounded with respect to any matrix norm, and 
\begin{equation}
\label{set tilde C alt}
\tilde{\mathcal{C}} = \big\{\tilde{Z}_{\tu{c}}+\tilde{\mb{A}}^{-1/2} \Upsilon \tilde{\mb{Q}}^{1/2} \colon \|\Upsilon\| \le 1 \big\}.
\end{equation}
\end{lemma}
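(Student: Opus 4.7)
The plan is to transfer, point by point, the arguments of Lemma~\ref{lemma:sign of A,Q}, Lemma~\ref{lemma:boundedness of set C}, and Proposition~\ref{proposition:set C cal = set E cal} to the polynomial setting, replacing the pair $\smat{X_0 \\ U_0}$ by $\smat{Z_0\\V_0}$ and $(\mb{A},\mb{B},\mb{C},\Zc,\mb{Q})$ by $(\tilde{\mb{A}},\tilde{\mb{B}},\tilde{\mb{C}},\tilde{Z}_{\tu{c}},\tilde{\mb{Q}})$. None of those derivations uses any structural property of $X_0$, $U_0$ beyond the combination of (i) the full row rank condition now granted by Assumption~\ref{ass:pers ext-pol}, (ii) the data relation $X_1 = A_\star Z_0 + B_\star V_0 + D_0$, and (iii) the bound $D_0 \in \mathcal{D}$, i.e.\ $D_0 D_0^\top \preceq \Delta \Delta^\top$; so the same calculations should go through essentially verbatim.

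The positive definiteness of $\tilde{\mb{A}} = \smat{Z_0\\V_0}\smat{Z_0\\V_0}^\top$ follows directly from Assumption~\ref{ass:pers ext-pol} via \cite[Thm.~7.2.7(c)]{horn2013matrix}, exactly as in Lemma~\ref{lemma:sign of A,Q}. For $\tilde{\mb{Q}} \succeq 0$, I would introduce the projection $\tilde{\mb{Q}}_{\tu{p}} := \smat{Z_0\\V_0}^\top (\smat{Z_0\\V_0}\smat{Z_0\\V_0}^\top)^{-1} \smat{Z_0\\V_0}$, which is well-defined since $\tilde{\mb{A}} \succ 0$ and satisfies $\tilde{\mb{Q}}_{\tu{p}}^2 = \tilde{\mb{Q}}_{\tu{p}}$; substituting $X_1 = A_\star Z_0 + B_\star V_0 + D_0$ into the identity $\tilde{\mb{Q}} = X_1 \tilde{\mb{Q}}_{\tu{p}} X_1^\top - X_1 X_1^\top + \Delta\Delta^\top$, exactly as in the proof of Lemma~\ref{lemma:sign of A,Q}, yields $\tilde{\mb{Q}} = D_0(\tilde{\mb{Q}}_{\tu{p}} - I)D_0^\top + \Delta\Delta^\top \succeq -D_0 D_0^\top + \Delta\Delta^\top \succeq 0$.

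Boundedness of $\tilde{\mathcal{C}}$ is then a copy of the estimate in Lemma~\ref{lemma:boundedness of set C}: $\tilde{\mathcal{C}}$ is nonempty (contains $\tilde{Z}_{\tu{c}}$), and for any $Z \in \tilde{\mathcal{C}}$ and unit vector $v$ the defining inequality of~\eqref{set tilde C} gives $\sqrt{\lambda_{\min}(\tilde{\mb{A}})}\,|(Z - \tilde{Z}_{\tu{c}}) v| \le |\tilde{\mb{Q}}^{1/2} v|$; taking suprema over unit $v$ and applying the reverse triangle inequality yields $\|Z\| \le \|\tilde{Z}_{\tu{c}}\| + \lambda_{\min}(\tilde{\mb{A}})^{-1/2}\|\tilde{\mb{Q}}^{1/2}\|$, and norm equivalence extends the bound to any matrix norm.

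Finally, to establish~\eqref{set tilde C alt}, I would mirror Proposition~\ref{proposition:set C cal = set E cal}: the inclusion ``$\supseteq$'' is obtained by plugging $Z = \tilde{Z}_{\tu{c}} + \tilde{\mb{A}}^{-1/2}\Upsilon \tilde{\mb{Q}}^{1/2}$ with $\Upsilon^\top \Upsilon \preceq I$ into the defining inequality of $\tilde{\mathcal{C}}$; the reverse inclusion requires constructing $\Upsilon$ of norm at most $1$ such that $\Upsilon \tilde{\mb{Q}}^{1/2} = \tilde{\mb{A}}^{1/2}(Z - \tilde{Z}_{\tu{c}})$. The main (and only) delicate point is that $\tilde{\mb{Q}}$ need not be invertible, so $\Upsilon$ cannot just be read off as $\tilde{\mb{A}}^{1/2}(Z - \tilde{Z}_{\tu{c}})\tilde{\mb{Q}}^{-1/2}$; the remedy, as in Proposition~\ref{proposition:set C cal = set E cal}, is to use an eigendecomposition $\tilde{\mb{Q}}^{1/2} = T\,\diag(\Lambda_p,0)\,T^\top$, split $T = [T_1~T_2]$, and pick $\Upsilon = \tilde{\mb{A}}^{1/2}(Z - \tilde{Z}_{\tu{c}}) T_1 \Lambda_p^{-1} T_1^\top$ (with $\Upsilon = 0$ if $\tilde{\mb{Q}}^{1/2} = 0$). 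The bound $\|\Upsilon\|\le 1$ is verified by the same chain of (in)equalities as in Proposition~\ref{proposition:set C cal = set E cal}, and the equation $\Upsilon\tilde{\mb{Q}}^{1/2} = \tilde{\mb{A}}^{1/2}(Z - \tilde{Z}_{\tu{c}})$ reduces to checking $\ker \tilde{\mb{Q}}^{1/2} \subseteq \ker(\tilde{\mb{A}}^{1/2}(Z - \tilde{Z}_{\tu{c}}))$, which follows from the defining inequality of $\tilde{\mathcal{C}}$ via $|\tilde{\mb{A}}^{1/2}(Z - \tilde{Z}_{\tu{c}})v|^2 \le v^\top \tilde{\mb{Q}} v$. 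This rank-deficient eigendecomposition step is the only non-routine part; once handled as above, the claim is established.
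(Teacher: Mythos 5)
Your proposal is correct and matches the paper's intent exactly: the paper states this lemma without proof, remarking only that the logical steps of Lemma~\ref{lemma:sign of A,Q}, Lemma~\ref{lemma:boundedness of set C} and Proposition~\ref{proposition:set C cal = set E cal} carry over verbatim after replacing $\smat{X_0\\U_0}$ with $\smat{Z_0\\V_0}$, which is precisely the transfer you carry out. You correctly identify that the only ingredients used are the full row rank condition, the data relation, and the disturbance bound, and you handle the rank-deficient case of $\tilde{\mb{Q}}^{1/2}$ the same way the paper does in Proposition~\ref{proposition:set C cal = set E cal}.
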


As in Section~\ref{sec:data-driv lin:main results}, the matrix-ellipsoid parametrization in~\eqref{set tilde C alt} is key to apply Petersen's lemma, which allows us to obtain the next result for data-driven control of the polynomial system in~\eqref{sys-pol}.
\begin{proposition}
\label{prop:sol-pol}
Let Assumption~\ref{ass:pers ext-pol} hold.
Given positive definite\footnote{That is, zero at zero and positive elsewhere.} polynomials $\ell_1$, $\ell_2$ with $\ell_1$ radially unbounded\footnote{That is, $\ell_1(x) \to +\infty$ as $|x| \to +\infty$.}, suppose there exist polynomials $V$, $k$, $\lambda$ with $V(0)=0$ and $k(0)=0$ such that for each $x$
\begin{align}
& \hspace*{-2mm}V(x) - \ell_1(x) \ge 0 \label{sol-pol:pd and rad unbnd} \\
& \hspace*{-4mm}
\resizebox{.435\textwidth}{!}{
\text{$\bmat{
\ell_2(x) + \dVx \tilde{Z}_{\tu{c}}^\top \smat{Z(x) \\ W(x) k(x) } & \star &  \star\\
\tilde{\mb{A}}^{-1/2}\smat{Z(x) \\ W(x) k(x) } & -\lambda(x) I & \star\\
\lambda(x) \tilde{\mb{Q}}^{1/2}\dVx^\top & 0 & -4 \lambda(x) I\hspace*{-1pt}
} \preceq 0$\hspace*{-2mm}}
} \label{sol-pol:decrease}\\
& \hspace*{-2mm}\lambda(x) >0. \label{sol-pol:multipl}
\end{align}
Then, the origin of
\begin{equation*}
\dot x = A Z(x) + B W(x) k(x)  =: f_{A,B}(x)
\end{equation*}
is globally asymptotically stable for all $(A,B) \in \tilde{\mathcal{C}}$, and in particular for $(A_\star,B_\star) \in \tilde{\mathcal{C}}$, i.e., for the closed loop $\dot x =  f_{A_\star,B_\star}(x)$.
\end{proposition}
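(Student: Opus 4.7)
The plan is to use $V$ as a common Lyapunov function for the closed loop $\dot x = f_{A,B}(x)$ for every $(A,B) \in \tilde{\mathcal{C}}$, and to conclude global asymptotic stability uniformly in $(A,B)$ from a classical Lyapunov theorem such as \cite[Thm.~4.2]{khalil2002nonlinear}. Positive definiteness and radial unboundedness of $V$ are immediate from \eqref{sol-pol:pd and rad unbnd} together with $V(0)=0$, since these two properties are inherited from $\ell_1$. The entire content is therefore to show that \eqref{sol-pol:decrease}--\eqref{sol-pol:multipl} encode $\dot V(x) \le -\ell_2(x)$ along every closed-loop trajectory, which gives $\dot V(x)<0$ for $x\neq 0$ because $\ell_2$ is positive definite.

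The first step is to exploit the matrix-ellipsoid parametrization \eqref{set tilde C alt}: for every $(A,B)\in\tilde{\mathcal{C}}$, write $[A~B]^\top = \tilde{Z}_{\tu{c}} + \tilde{\mb{A}}^{-1/2}\Upsilon\,\tilde{\mb{Q}}^{1/2}$ with $\|\Upsilon\|\le 1$, and substitute into $\dot V(x) = \dVx(AZ(x) + BW(x)k(x))$. This splits $\dot V$ into a nominal part depending only on $\tilde{Z}_{\tu{c}}$ and an uncertain part that depends linearly on $\Upsilon$. With the shorthand $c(x) := \ell_2(x) + \dVx \tilde{Z}_{\tu{c}}^\top \smat{Z(x)\\W(x)k(x)}$, $\eta(x) := \dVx \tilde{\mb{Q}}^{1/2}$ and $\mu(x) := \tilde{\mb{A}}^{-1/2}\smat{Z(x)\\W(x)k(x)}$, the target $\dot V + \ell_2 \le 0$ reads, pointwise in $x$, as the scalar inequality $c(x) + \eta(x)\Upsilon^\top\mu(x) \le 0$ for all $\|\Upsilon\|\le 1$. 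Since a scalar equals its transpose, I would symmetrize it as $c(x) + \tfrac{1}{2}\eta(x)\Upsilon^\top\mu(x) + \tfrac{1}{2}\mu(x)^\top\Upsilon\eta(x)^\top \le 0$, putting it in exactly the form covered by Petersen's lemma.

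The second step is to apply Fact~\ref{fact:petersen-nonstrict} pointwise in $x$, with $\mb{E}=\eta(x)/2$, $\mb{G}=\mu(x)$, $\overline{\mb{F}}=I$ and $\mb{F}=\Upsilon^\top$. The ``multiplier implies for-all'' direction (which does not require the non-degeneracy hypotheses) yields the scalar inequality above whenever there exists $\lambda(x)>0$ with $c(x) + \tfrac{\lambda(x)}{4}\eta(x)\eta(x)^\top + \lambda(x)^{-1}\mu(x)^\top\mu(x) \le 0$. A Schur complement on the $(2,2)$ and $(3,3)$ blocks of \eqref{sol-pol:decrease}, both negative definite by \eqref{sol-pol:multipl}, returns exactly this scalar inequality; in particular, the factor $4$ in the $(3,3)$ block of \eqref{sol-pol:decrease} is what matches the $1/4$ produced by Petersen with $\mb{E}=\eta/2$. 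Combining the two steps, $\dot V(x)\le -\ell_2(x)<0$ for every $x\neq 0$ and every $(A,B)\in\tilde{\mathcal{C}}$, and the Lyapunov theorem concludes.

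The only delicate point I expect is the bookkeeping of constants in the symmetrization of the scalar and the Schur complement of \eqref{sol-pol:decrease}, namely that halving the ``$\mb{E}$'' factor in Petersen and eliminating the $-4\lambda I$ block conspire to produce exactly the same $\tfrac{\lambda}{4}\eta\eta^\top$ term. Everything else is assembly of ingredients already at hand — the parametrization of $\tilde{\mathcal{C}}$ in \eqref{set tilde C alt}, the nonstrict Petersen lemma in Fact~\ref{fact:petersen-nonstrict}, and a standard Lyapunov theorem applied uniformly in $(A,B)\in\tilde{\mathcal{C}}$.
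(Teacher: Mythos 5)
Your proposal is correct and follows essentially the same route as the paper's proof: parametrize $\tilde{\mathcal{C}}$ via \eqref{set tilde C alt}, symmetrize the scalar Lyapunov-decrease condition so that the sufficiency direction of the nonstrict Petersen lemma (Fact~\ref{fact:petersen-nonstrict}) applies pointwise in $x$ without the non-degeneracy hypotheses, recover the multiplier inequality from \eqref{sol-pol:decrease} by a Schur complement on the $\diag(-\lambda I,-4\lambda I)$ block, and conclude by the classical Lyapunov theorem; your constant bookkeeping ($\mb{E}=\eta/2$ versus the paper's equivalent relabeling with multiplier $1/\lambda$) checks out. The only cosmetic omission is the remark that $Z(0)=0$ and $k(0)=0$ make the origin an equilibrium of $f_{A,B}$ for every $(A,B)\in\tilde{\mathcal{C}}$, which the paper states up front.
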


Let us comment the conditions and the conclusion of Proposition~\ref{prop:sol-pol}. 
Condition~\eqref{sol-pol:pd and rad unbnd} imposes positive definiteness and radial unboundedness of the Lyapunov function $V$; condition~\eqref{sol-pol:multipl} is the positivity of the multiplier used in Petersen's lemma; condition~\eqref{sol-pol:decrease} imposes decrease of the Lyapunov function for all $(A,B)\in \tilde{\mathcal{C}}$. 
In particular, suppose $\tilde{Z}_{\tu{c}}^\top = \big[ A_\star~~B_\star \big]$ in~\eqref{sol-pol:decrease}; then, the block (1,1) alone of the matrix in~\eqref{sol-pol:decrease} would express a model-based condition for global asymptotic stability of $\dot x = A_\star Z(x) + B_\star W(x)$.
The conclusion is global asymptotic stability of the closed loop $\dot x = f_{A,B}(x)$ for all $(A,B) \in \tilde{\mathcal{C}}$. Similarly to the linear case (see comment below \eqref{probl}-\eqref{probl-ct}), this is relevant for the closed loop with disturbance $\dot x = f_{A_\star,B_\star}(x)+d$ obtained from~\eqref{sys-pol} because global asymptotic stability guarantees input-to-state stability with ``small disturbances'' as shown in \cite[Thm.~2]{sontag1990further}, to which we refer for precise characterizations.

\begin{proofof}{Proposition~\ref{prop:sol-pol}}
Note first that since $Z(0)=0$ ($Z$ is a regressor of monomials of $x$) and $k(0)=0$, the origin is an equilibrium of $f_{A,B}$ for all $(A,B) \in \tilde{\mathcal{C}}$.
Then, the proof consists of showing that $V$ is a Lyapunov function for all systems $\dot x = f_{A,B}(x)$, $(A,B) \in \tilde{\mathcal{C}}$. 
Specifically, we show that \textit{(i)} $V$ is positive definite and radially unbounded, and \textit{(ii)} its derivative along solutions satisfies
\begin{equation}
\label{sol-pol:decr cond orig}
\begin{split}
\langle \nabla V(x), f_{A,B}(x)\rangle & = \tfrac{\partial V}{\partial x} (x) \smat{A & B} \smat{Z(x) \\ W(x) k(x) } \\
& \le -\ell_2(x)~~\forall x, \forall (A,B) \in \tilde{\mathcal{C}}.
\end{split}
\end{equation}
If the previous properties \textit{(i)}-\textit{(ii)} hold, classical Lyapunov theory \cite[Thm.~4.2]{khalil2002nonlinear} yields the conclusion of the theorem. 
Positive definiteness of $V$ follows from $V(0)=0$, \eqref{sol-pol:pd and rad unbnd} and $\ell_1$ positive definite; radial unboundedness of $V$ follows from \eqref{sol-pol:pd and rad unbnd} and $\ell_1$ radially unbounded.
We then address the derivative along solutions of $V$.
Set $\big[A~~B\big] = Z^\top \in \tilde{\mathcal{C}}$ in~\eqref{sol-pol:decr cond orig} and substitute the parametrization of $Z$ from~\eqref{set tilde C alt}; \eqref{sol-pol:decr cond orig} holds if and only if, for each $x$,
\begingroup%
\thinmuskip=0.5mu plus 1mu
\medmuskip=1.mu plus 2mu
\thickmuskip=1.5mu plus 3mu
\begin{align}
& - \ell_2(x) \ge \langle \nabla V(x), f_{A,B}(x)\rangle  = \dVxS \tilde{Z}_{\tu{c}}^\top \smat{Z(x) \\ W(x) k(x) } 
\notag \\
& + \smat{Z(x) \\ W(x) k(x) }^\top \tilde{\mb{A}}^{-1/2} \Upsilon \tilde{\mb{Q}}^{1/2} \tfrac{1}{2} \dVxS^\top \label{sol-pol:decr for all Upsilon} \\
& + \tfrac{1}{2} \dVxS \tilde{\mb{Q}}^{1/2} \Upsilon^\top \tilde{\mb{A}}^{-1/2} \smat{Z(x) \\ W(x) k(x) }  \hspace*{5pt}\forall \Upsilon \text{ with } \|\Upsilon\|\le 1. \notag
\end{align}
\endgroup%
We now show that this is true thanks to~\eqref{sol-pol:decrease} and \eqref{sol-pol:multipl}.
By Schur complement for nonstrict inequalities \cite[p.~28]{boyd1994linear} and \eqref{sol-pol:multipl}, \eqref{sol-pol:decrease} is equivalent to 
\begin{align}
& -\ell_2(x) \ge  \dVx \tilde{Z}_{\tu{c}}^\top \smat{Z(x) \\ W(x) k(x) } \notag \\
& +  \smat{Z(x) \\ W(x) k(x) }^\top  \cdoT \frac{\tilde{\mb{A}}^{-1}}{\lambda(x)}  [\star]^\top + \dVx \cdoT \frac{\lambda(x)\tilde{\mb{Q}}}{4} [\star]^\top.  \label{sol-pol:decr before petersen}
\end{align}
In other words, we have by~\eqref{sol-pol:decrease} and \eqref{sol-pol:multipl} that for each $x$, there exists $1/\lambda(x) > 0$ such that \eqref{sol-pol:decr before petersen} holds.
Apply Fact~\ref{fact:petersen-nonstrict} pointwise (i.e., for each $x$) to~\eqref{sol-pol:decr before petersen} with $\mb{E}$ and $\mb{G}^\top$ corresponding respectively to $\smat{Z(x) \\ W(x) k(x) }^\top \hspace*{-2pt} \tilde{\mb{A}}^{-1/2}$ and $\frac{1}{2} \dVx \tilde{\mb{Q}}^{1/2}$; the fact that for each $x$, there exists $1/\lambda(x) > 0$ such that \eqref{sol-pol:decr before petersen} holds implies that for each $x$, \eqref{sol-pol:decr for all Upsilon} holds or, equivalently, that for each $x$, \eqref{sol-pol:decr cond orig} holds.
All properties required of $V$ have been shown, and the conclusion of the proposition follows.
\end{proofof}

When writing the Lyapunov derivative along solutions as in~\eqref{sol-pol:decr cond orig} and substituting the expression of $\tilde{\mathcal{C}}$ as in~\eqref{sol-pol:decr for all Upsilon}, the utility of Petersen's lemma beyond the case of linear systems becomes clear.
We use the nonstrict version of it in Fact~\ref{fact:petersen-nonstrict} (instead of the strict version in Fact~\ref{fact:petersen-ext}) in view of the next sum-of-squares relaxation and the subsequent numerical implementation, where only \emph{nonstrict} inequalities can effectively be implemented.
Polynomial positivity in the conditions of Proposition~\ref{prop:sol-pol} is impractical to verify, so we turn them into sum-of-squares conditions in the next theorem.
\begin{theorem}
\label{thm:sos}
Let Assumption~\ref{ass:pers ext-pol} hold.
Given positive definite polynomials $\ell_1$, $\ell_2$ with $\ell_1$ radially unbounded and  a positive scalar $\epsilon_\lambda$, suppose there exist polynomials $V$, $k$, $\lambda$ with $V(0)=0$ and $k(0)=0$ such that
\begin{subequations}
\label{sos}
\begin{align}
& V - \ell_1 \in \mathcal{S} \label{sos:pd and rad unbnd} \\
&
 - \bmat{
\ell_2 + \frac{\partial V}{\partial x} \tilde{Z}_{\tu{c}}^\top \smat{Z \\ W k} & \star &  \star\\
\tilde{\mb{A}}^{-1/2} \smat{Z \\ W k} & -\lambda I & \star\\
\lambda \tilde{\mb{Q}}^{1/2} \frac{\partial V}{\partial x}^\top & 0 & -4 \lambda I
} \in\mathcal{S}_{\tu{m}}
\label{sos:decrease}\\
& \lambda -\epsilon_\lambda \in \mathcal{S}. \label{sos:multipl}
\end{align}
\end{subequations}
Then, \eqref{sol-pol:pd and rad unbnd}-\eqref{sol-pol:multipl} and the conclusion of Proposition~\ref{prop:sol-pol} hold.
\end{theorem}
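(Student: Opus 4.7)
The plan is to argue that the three sum-of-squares certificates in~\eqref{sos} imply pointwise the three polynomial conditions~\eqref{sol-pol:pd and rad unbnd}--\eqref{sol-pol:multipl}, so that Proposition~\ref{prop:sol-pol} applies verbatim and yields the conclusion. This is a standard SOS-relaxation argument, and the proof is essentially a matter of recording which nonnegativity statement comes from which membership in $\mathcal{S}$ or $\mathcal{S}_{\tu{m}}$.

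First I would recall the elementary fact that any $p \in \mathcal{S}$ satisfies $p(x) \ge 0$ for all $x \in \real^n$ (being a sum of squared polynomials), and analogously any $P \in \mathcal{S}_{\tu{m}}$ satisfies $P(x) \succeq 0$ for all $x$ (since $P$ admits a decomposition $P = \sum_i M_i^\top M_i$ for matrix polynomials $M_i$, which is pointwise positive semidefinite). These are the only nontrivial facts needed.

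Applying this to~\eqref{sos:pd and rad unbnd} gives $V(x) - \ell_1(x) \ge 0$ for each $x$, which is exactly~\eqref{sol-pol:pd and rad unbnd}. Applying it to~\eqref{sos:decrease} yields, for each $x$, negative semidefiniteness of the symmetric matrix in~\eqref{sol-pol:decrease}. Applying it to~\eqref{sos:multipl} gives $\lambda(x) \ge \epsilon_\lambda$ for each $x$; since $\epsilon_\lambda > 0$ by hypothesis, this implies the strict inequality $\lambda(x) > 0$ required in~\eqref{sol-pol:multipl}. Together with the retained hypotheses $V(0)=0$, $k(0)=0$ and the polynomial nature of $V$, $k$, $\lambda$, this delivers all the premises of Proposition~\ref{prop:sol-pol}, whose conclusion is then invoked.

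There is essentially no obstacle here: the only point worth flagging is the need for the strictly positive buffer $\epsilon_\lambda$, which is what promotes the sum-of-squares certificate (a nonstrict object) into the strict lower bound on $\lambda$ required by Proposition~\ref{prop:sol-pol}. No buffer is needed on~\eqref{sos:decrease} because Proposition~\ref{prop:sol-pol} itself is formulated with a nonstrict matrix inequality (consistent with the use of the \emph{nonstrict} Petersen's lemma, Fact~\ref{fact:petersen-nonstrict}, inside its proof), and no buffer is needed on~\eqref{sos:pd and rad unbnd} because positive definiteness and radial unboundedness of $V$ are inherited from those of $\ell_1$ through the nonstrict bound $V \ge \ell_1$.
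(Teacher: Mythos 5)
Your proposal is correct and follows essentially the same route as the paper's proof: sum-of-squares membership implies pointwise nonnegativity (or pointwise negative semidefiniteness for the matrix condition), so \eqref{sos:pd and rad unbnd}--\eqref{sos:multipl} deliver \eqref{sol-pol:pd and rad unbnd}--\eqref{sol-pol:multipl} and Proposition~\ref{prop:sol-pol} applies. Your explicit remark that the buffer $\epsilon_\lambda>0$ is what upgrades the nonstrict SOS certificate to the strict positivity of $\lambda$ is a useful clarification that the paper leaves implicit.
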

\begin{proof}
\eqref{sos:pd and rad unbnd} and \eqref{sos:multipl} imply \eqref{sol-pol:pd and rad unbnd} and \eqref{sol-pol:multipl}, respectively. 
Call $x \mapsto Q(x)$ the matrix polynomial in~\eqref{sos:decrease}, so that \eqref{sos:decrease} rewrites $- Q \in \mathcal{S}_{\tu{m}}$. 
By definition of $\mathcal{S}_{\tu{m}}$, see \cite[Eq.~(9)]{chesi2010lmi}, we have that for each $x$, $Q(x) \preceq 0$, i.e., \eqref{sol-pol:decrease}.
\end{proof}

Let us comment Theorem~\ref{thm:sos}. 
Quantities $Z$ and $W$ are the known regressors; $\tilde{Z}_{\tu{c}}$, $\tilde{\mb{A}}$, $\tilde{\mb{Q}}$ are obtained from data $X_1$, $Z_0$, $V_0$; $\ell_1$, $\ell_2$ and $\epsilon_\lambda$ are design parameters; finally, $V$, $k$ and $\lambda$ are decision variables. 
Then, the blocks (1,1), (3,1) and (1,3) of the matrix in~\eqref{sos:decrease} entail products between decision variables, which make condition \eqref{sos:decrease} bilinear and the feasibility program in~\eqref{sos} not convex. 
A suboptimal strategy that is widely adopted in the sum-of-squares literature, see \cite{jarvis2005control}, is to alternately solve for $V$ with $k$ and $\lambda$ fixed, and solve for $k$ and $\lambda$ with $V$ fixed.
We illustrate this strategy in Section~\ref{sec:num ex}.

As in Section~\ref{sec:data-driv lin:main results}, when the set $\tilde{\mathcal{C}}$ is given directly in the form \eqref{set tilde C} as a matrix-ellipsoid over-approximation of a less tractable set (see  the discussion in Section~\ref{sec:C as overapprox}), a better alternative to Theorem~\ref{thm:sos} is the next corollary.
\begin{corollary}
\label{cor:assumeOnlyAandQ-pol}
Let $\tilde{\mb{A}} \succ 0$ and $\tilde{\mb{Q}} \succeq 0$ hold for the set $\tilde{\mathcal{C}} = \big\{ \big[A~~B\big] = Z^\top \colon 
(Z - \tilde{\Zc})^\top \tilde{\mb{A}} (Z - \tilde{\Zc}) \preceq \tilde{\mb{Q}}
\big\}$ in~\eqref{set tilde C}.
Given positive definite polynomials $\ell_1$, $\ell_2$ with $\ell_1$ radially unbounded and  a positive scalar $\epsilon_\lambda$, suppose there exist polynomials $V$, $k$, $\lambda$ with $V(0)=0$ and $k(0)=0$ satisfying \eqref{sos}. 
Then, \eqref{sol-pol:pd and rad unbnd}-\eqref{sol-pol:multipl} and the conclusion of Proposition~\ref{prop:sol-pol} hold.
\end{corollary}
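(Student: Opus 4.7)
The plan is to mimic the argument used for Corollary~\ref{cor:assumeOnlyAandQ} in the linear case: observe that Assumption~\ref{ass:pers ext-pol} is invoked in the chain of results leading to Theorem~\ref{thm:sos} only to secure $\tilde{\mb{A}} \succ 0$ and $\tilde{\mb{Q}} \succeq 0$, hence to enable the matrix-ellipsoid reparametrization~\eqref{set tilde C alt} of $\tilde{\mathcal{C}}$. Since the corollary postulates these sign-definiteness properties directly for the given $\tilde{\mathcal{C}}$, no further data-based assumption is needed.

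First, I would confirm that the polynomial-case analog of Proposition~\ref{proposition:set C cal = set E cal}, i.e., the equivalence between the ``quadratic form'' description~\eqref{set tilde C} of $\tilde{\mathcal{C}}$ and the ``ball-image'' description~\eqref{set tilde C alt}, is valid under $\tilde{\mb{A}} \succ 0$ and $\tilde{\mb{Q}} \succeq 0$ alone. This is immediate because Proposition~\ref{proposition:set C cal = set E cal} is itself proved for arbitrary $\mb{A} \succ 0$, $\mb{Q} \succeq 0$, and the polynomial counterpart differs only in notation.

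Second, I would revisit the proof of Proposition~\ref{prop:sol-pol} and check that $\tilde{\mathcal{C}}$ enters the reasoning exclusively through the parametrization~\eqref{set tilde C alt} (used to rewrite~\eqref{sol-pol:decr cond orig} as~\eqref{sol-pol:decr for all Upsilon}) and through the pointwise application of the nonstrict Petersen's lemma (Fact~\ref{fact:petersen-nonstrict}), neither of which relies on how the parameters $\tilde{\mb{A}}, \tilde{\mb{B}}, \tilde{\mb{C}}$ were obtained. Consequently the implication ``\eqref{sol-pol:pd and rad unbnd}-\eqref{sol-pol:multipl} imply global asymptotic stability of $\dot x = f_{A,B}(x)$ for every $(A,B)\in\tilde{\mathcal{C}}$'' remains valid verbatim in the present hypotheses.

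Finally, the sum-of-squares relaxation step of Theorem~\ref{thm:sos}---namely that \eqref{sos:pd and rad unbnd}--\eqref{sos:multipl} imply the corresponding polynomial-positivity conditions \eqref{sol-pol:pd and rad unbnd}--\eqref{sol-pol:multipl} through the definitions of $\mathcal{S}$ and $\mathcal{S}_{\tu{m}}$---is a purely algebraic fact about sums of squares that is entirely insensitive to the origin of $\tilde{\mathcal{C}}$. Chaining these three observations yields both \eqref{sol-pol:pd and rad unbnd}--\eqref{sol-pol:multipl} and the conclusion of Proposition~\ref{prop:sol-pol}. The only potential pitfall is overlooking an implicit use of Assumption~\ref{ass:pers ext-pol} somewhere in the proof chain; a careful inspection shows this assumption is invoked solely to produce $\tilde{\mb{A}}\succ 0$ and $\tilde{\mb{Q}}\succeq 0$, which are now hypotheses of the corollary, so no such hidden dependence arises.
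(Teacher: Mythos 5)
Your proposal is correct and matches the paper's reasoning exactly: the paper proves the analogous linear-case Corollary~\ref{cor:assumeOnlyAandQ} by noting that the persistence-of-excitation assumption enters only through $\mb{A}\succ 0$ and $\mb{Q}\succeq 0$, and it leaves the polynomial version to the same one-line observation, which is precisely what you supply (including the correct remark that the needed direction of Fact~\ref{fact:petersen-nonstrict} holds without its extra nondegeneracy hypotheses). No gaps.
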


In the next section, we obtain $\tilde{\mathcal{C}}$ as described in Section~\ref{sec:C as overapprox} and, in particular, through the optimization problem in~\eqref{opt overapprox ellips}. 
This provides a set $\tilde{\mathcal{C}}$ directly in the form \eqref{set tilde C}, so we will apply Corollary~\ref{cor:assumeOnlyAandQ-pol}.

Finally, we follow up on the comparison with~\cite{Guo2020} discussed in Section~\ref{sec:intro}. 
As the proof of Proposition~\ref{prop:sol-pol} shows, the data-based conditions \eqref{sol-pol:pd and rad unbnd}-\eqref{sol-pol:multipl} correspond naturally to enforcing model-based conditions \cite[Thm.~4.2]{khalil2002nonlinear} for all systems consistent with data. 
This makes this approach extendible to other cases such as \emph{local} asymptotic stability. 
Indeed, if we consider \cite[Thm.~4.1]{khalil2002nonlinear}, we obtain the next corollary.

\begin{corollary}
\label{cor:extens to LAS}
Let Assumption~\ref{ass:pers ext-pol} hold.
Given positive definite polynomials
$\ell_0$, $\ell_1$, $\ell_2$ and a positive scalar $c$ yielding $\mathcal{D}_c := \{ x\in \real^n \colon \ell_0(x) \le c\}$, suppose there exist polynomials $s_1$, $s_2$, $V$, $k$, $\lambda$ with $V(0)=0$ and $k(0)=0$ such that for each $x$
\begin{align}
& \hspace*{-2mm} s_1(x) \ge 0,~s_2(x) \ge 0, \label{las:s1 s2 positive}\\ 
& \hspace*{-2mm}V(x) - \ell_1(x) + s_1(x) (\ell_0(x) - c)\ge 0 \label{las:pd and rad unbnd} \\
& \hspace*{-4mm}
\resizebox{.435\textwidth}{!}{
\text{$\bmat{
\left\{ \begin{aligned}& \ell_2(x) + \dVxS \tilde{Z}_{\tu{c}}^\top \smat{Z(x) \\ W(x) k(x) } \\  & \hspace*{16mm}-s_2(x)(\ell_0(x)-c)\end{aligned} \right\}
& \star &  \star\\
\tilde{\mb{A}}^{-1/2} \smat{Z(x) \\ W(x) k(x) } & -\lambda(x) I & \star\\
\lambda(x) \tilde{\mb{Q}}^{1/2} \dVx^\top & 0 & -4 \lambda(x) I\hspace*{-1pt}
} \preceq 0$\hspace*{-2mm}}
}
\label{las:decrease} \\
& \hspace*{-2mm}\lambda(x) >0. \label{las:multipl}
\end{align}
Then, the origin of $\dot x = f_{A,B}(x)$ is locally asymptotically stable for all $(A,B) \in \tilde{\mathcal{C}}$, and in particular for $(A_\star,B_\star)$.
\end{corollary}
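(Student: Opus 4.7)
The plan is to mirror the proof of Proposition~\ref{prop:sol-pol}, replacing the global Lyapunov argument \cite[Thm.~4.2]{khalil2002nonlinear} with its local counterpart \cite[Thm.~4.1]{khalil2002nonlinear}, and using the S-procedure multipliers $s_1$ and $s_2$ to restrict the two Lyapunov conditions to the sublevel set $\mathcal{D}_c=\{x\colon \ell_0(x)\le c\}$. As before, $Z(0)=0$ and $k(0)=0$ ensure that the origin is an equilibrium of $\dot x = f_{A,B}(x)$ for every $(A,B)\in\tilde{\mathcal{C}}$.

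First I would verify positive definiteness of $V$ on $\mathcal{D}_c$. On this set, $\ell_0(x)-c\le 0$, and by \eqref{las:s1 s2 positive} $s_1(x)\ge 0$, so $s_1(x)(\ell_0(x)-c)\le 0$. Consequently, \eqref{las:pd and rad unbnd} gives $V(x)\ge \ell_1(x)-s_1(x)(\ell_0(x)-c)\ge \ell_1(x)$, which is positive on $\mathcal{D}_c\setminus\{0\}$ because $\ell_1$ is positive definite; together with $V(0)=0$ this yields positive definiteness of $V$ on $\mathcal{D}_c$.

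Second, I would establish decrease of $V$ along solutions on $\mathcal{D}_c$ by repeating the Petersen-lemma argument of Proposition~\ref{prop:sol-pol}. Applying Schur complement (for nonstrict inequalities) and \eqref{las:multipl} to \eqref{las:decrease}, then invoking Fact~\ref{fact:petersen-nonstrict} pointwise in $x$ with $\mathbf{E}\sim\smat{Z(x)\\ W(x)k(x)}^\top\tilde{\mb{A}}^{-1/2}$ and $\mathbf{G}^\top\sim \tfrac12\dVx\tilde{\mb{Q}}^{1/2}$ exactly as in \eqref{sol-pol:decr before petersen}--\eqref{sol-pol:decr for all Upsilon}, yields for every $x$ and every $(A,B)\in\tilde{\mathcal{C}}$
\begin{equation*}
\langle \nabla V(x),f_{A,B}(x)\rangle \le -\ell_2(x)+s_2(x)(\ell_0(x)-c).
\end{equation*}
On $\mathcal{D}_c$ the correction term is nonpositive (using $s_2(x)\ge 0$ and $\ell_0(x)-c\le 0$), so $\langle \nabla V(x),f_{A,B}(x)\rangle\le -\ell_2(x)$, which is negative definite on $\mathcal{D}_c\setminus\{0\}$.

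Finally, since $V$ is positive definite on a neighborhood of the origin contained in $\mathcal{D}_c$ and its derivative along the closed-loop vector field is negative definite there, \cite[Thm.~4.1]{khalil2002nonlinear} delivers local asymptotic stability of the origin for every $(A,B)\in\tilde{\mathcal{C}}$, and hence in particular for $(A_\star,B_\star)$. The only non-routine step is the clean pointwise application of Petersen's lemma together with the S-procedure relaxation; once those are set up, the Lyapunov bookkeeping is straightforward and parallels the global case.
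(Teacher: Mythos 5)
Your proposal is correct and follows essentially the same route as the paper's own (sketched) argument: restrict positive definiteness and decrease of $V$ to $\mathcal{D}_c$ via the multipliers $s_1,s_2$, obtain the decrease condition for all $(A,B)\in\tilde{\mathcal{C}}$ by the pointwise Schur-complement/Petersen's-lemma argument of Proposition~\ref{prop:sol-pol}, and conclude with the local Lyapunov theorem \cite[Thm.~4.1]{khalil2002nonlinear}. In fact, your write-up fills in the Lyapunov bookkeeping that the paper leaves implicit.
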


The proof would follow the same rationale as the proof of Proposition~\ref{prop:sol-pol}, so we sketch only the key steps to highlight that the conditions \eqref{las:s1 s2 positive}-\eqref{las:multipl} in Corollary~\ref{cor:extens to LAS} follow naturally from~\cite[Thm.~4.1]{khalil2002nonlinear}. 
\eqref{las:s1 s2 positive} and \eqref{las:pd and rad unbnd} imply that $V(x) \ge \ell_1$ for all $x \in \mathcal{D}_c$ and give \cite[Eq.~(4.2)]{khalil2002nonlinear}. 
To have \cite[Eq.~(4.4)]{khalil2002nonlinear}, we would like to impose for all $(A,B) \in \tilde{\mathcal{C}}$ that $\langle \nabla V(x) , f_{A,B}(x) \rangle \le -\ell_2(x)$ for all $x \in \mathcal{D}_c$.
This is implied by the fact that for all $x$, for all $\Upsilon$ with $\| \Upsilon \| \le 1$, $\ell_2(x) + \dVx (\tilde{Z}_{\tu{c}} + \tilde{\mb{A}}^{-1/2} \Upsilon \tilde{\mb{Q}} )^\top \smat{Z(x) \\ W(x) k(x) } -s_2(x)(\ell_0(x)-c) \le 0$. This condition is indeed obtained from~\eqref{las:s1 s2 positive}, \eqref{las:decrease}-\eqref{las:multipl} and Petersen's lemma. 
With Corollary~\ref{cor:extens to LAS}, it is immediate to write its sum-of-squares relaxation for decision variables $s_1$, $s_2$, $V$, $k$, $\lambda$ in the same way we wrote Theorem~\ref{thm:sos} with Proposition~\ref{prop:sol-pol}.

\section{Numerical examples}
\label{sec:num ex}

In this section we consider as a running example the system in~\cite[Example~14.9]{khalil2002nonlinear}, i.e.,
\begin{equation}
\label{sys-ex:pol}
\bmat{\dot x_1\\ \dot x_2} = 
\bmat{x_1^2 -x_1^3 +x_2\\0} + \bmat{0\\1}u + d.
\end{equation}
This continuous-time polynomial system can be cast in the form in~\eqref{sys-pol}. 
We consider it as such in Section~\ref{sec:num ex:pol} to illustrate Theorem~\ref{thm:sos}; we consider its linearization
\begin{equation}
\label{sys-ex:lin-ct}
\dot x= \bmat{0 & 1\\ 0 & 0 } x + \bmat{0\\ 1} u + d =: A_{\star}^{\tu{ct}} x + B_{\star}^{\tu{ct}} u + d
\end{equation}
in Section~\ref{sec:num ex:lin ct} to illustrate Theorem~\ref{thm:sol-ct}; we consider a discretization of~\eqref{sys-ex:lin-ct} for sampling time $\tau_{\tu{s}}$
\begin{equation}
\label{sys-ex:lin-dt}
x^+ = ( I + \tau_{\tu{s}} A_{\star}^{\tu{ct}} ) x +  \tau_{\tu{s}} B_{\star}^{\tu{ct}} u + d =: A_{\star}^{\tu{dt}} x + B_{\star}^{\tu{dt}} u + d
\end{equation}
in Section~\ref{sec:num ex:lin dt} to illustrate Theorem~\ref{thm:sol}.
We emphasize that these systems are used only for data generation, but the vector fields in~\eqref{sys-ex:pol}-\eqref{sys-ex:lin-dt} are not known to the data-based schemes.
For $\delta >0$, the disturbance $d$ in~\eqref{sys-ex:pol}-\eqref{sys-ex:lin-dt} is taken as $(\sqrt{\delta} \cos(2 \pi 0.4 t), \sqrt{\delta} \sin(2 \pi 0.4  t))$, where $t$ corresponds to integer multiples of $\tau_{\tu{s}}$ in discrete time.
Hence, $d$ satisfies $|d|^2 \le \delta$.
From this bound on $d$, the disturbance sequence $D$ in~\eqref{set D} satisfies then the bound $D D^\top \preceq T \delta I$. 
For the linear systems \eqref{sys-ex:lin-ct}-\eqref{sys-ex:lin-dt}, we convert the bound on $d$ into $\Delta := \sqrt{T \delta} I$ in~\eqref{set D}; for the polynomial system~\eqref{sys-ex:pol}, we retain the bound on $d$ and consider $\tilde{\mathcal{C}}$ as an ellipsoidal over-approximation of the type described in Section~\ref{sec:C as overapprox}.
We solve all numericals programs using YALMIP \cite{lofberg2004yalmip} with its sum-of-squares functionality \cite{lofberg2009sos}, MOSEK ApS and MATLAB\textsuperscript{\textregistered} R2019b.

\begin{figure}
\centerline{\includegraphics[scale=.65]{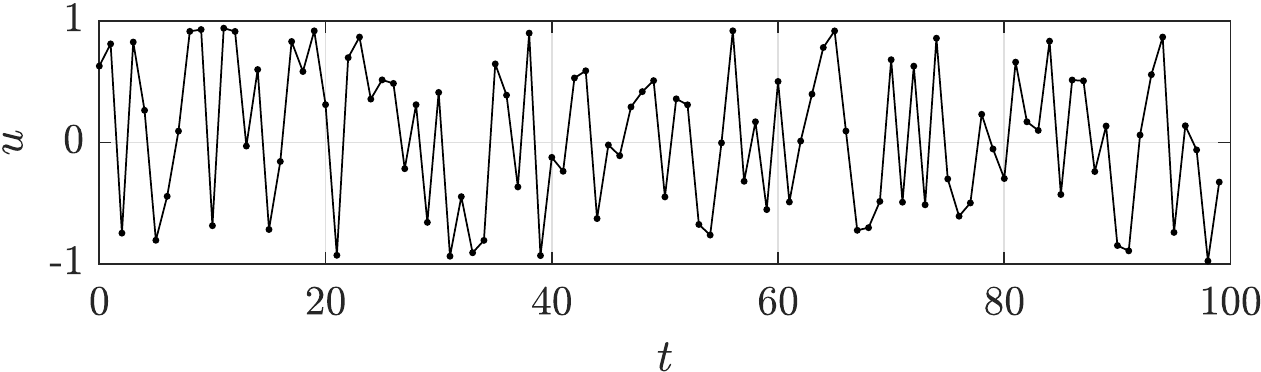}}
\centerline{\includegraphics[scale=.65]{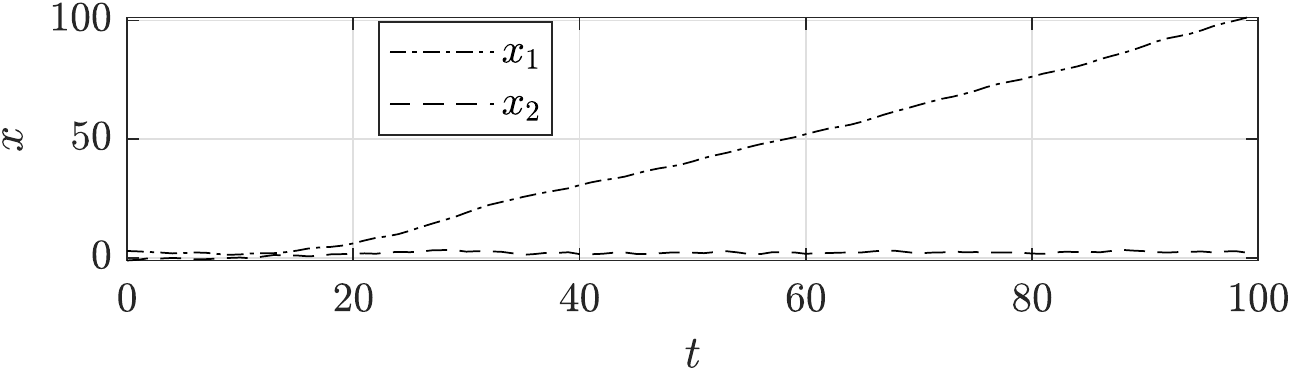}}
\caption{Experiment yielding data from~\eqref{sys-ex:lin-dt}: input and state.
Dots on the curve of $u$ indicate the discrete-time instants.
}
\label{fig:lin_dt_exp}
\end{figure}

\subsection{Linear system in discrete time}
\label{sec:num ex:lin dt}

Consider \eqref{sys-ex:lin-dt} with $\tau_{\tu{s}} = 0.5$, $T=100$ and $\delta =0.1$.
The experiment generating data $U_0$, $X_0$ and $X_1$ in~\eqref{data} is depicted in Fig.~\ref{fig:lin_dt_exp}. 
A uniform random variable in $[-1,1]$ is used as input $u$. 
Matrices $X_0$ and $U_0$ satisfy Assumption~\ref{ass:pers exc}.
Using the semidefinite program in Theorem~\ref{thm:sol}, a controller $K = \big[-0.1521~-1.3475\big]$ is designed, whose stabilization properties are certified by a Lyapunov function
$x^\top P^{-1} x = x^\top \smat{0.0043  &  0.0115\\0.0115  &  0.1000} x$.
The resulting closed-loop solutions for $d=0$ and the level sets of this Lyapunov function are depicted in Fig.~\ref{fig:lin_dt_pp}.
 
\begin{figure}
\centerline{\includegraphics[scale=.65]{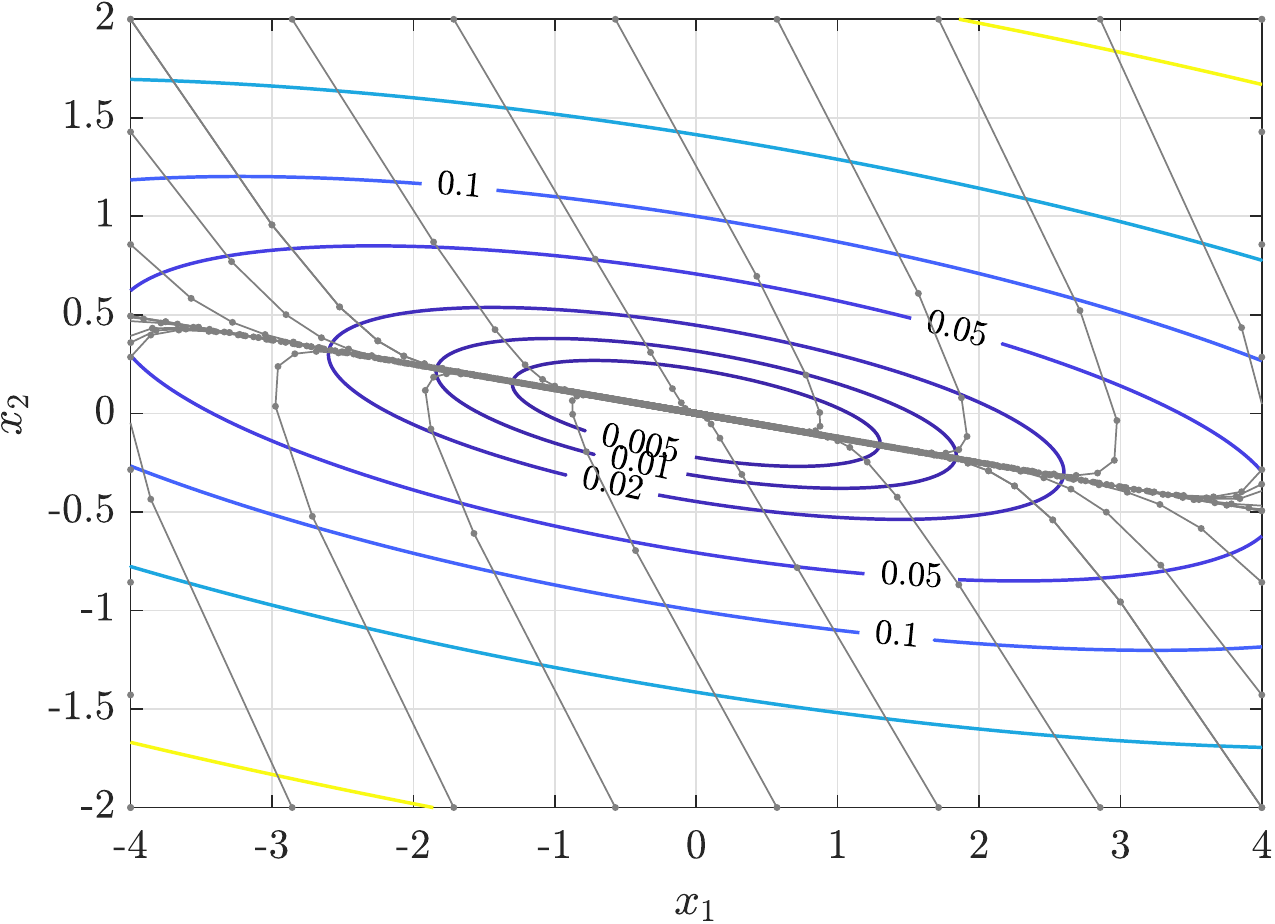}}
\caption{
Phase plot of \eqref{sys-ex:lin-dt} with $d=0$ in closed loop with the data-based controller $K$ obtained in Section~\ref{sec:num ex:lin dt}. Solutions are gray, where dots correspond to the different discrete-time instants.
The level sets of the Lyapunov function are colored and their corresponding values are annotated.
}
\label{fig:lin_dt_pp}
\end{figure} 
 
\subsection{Linear system in continuous time}
\label{sec:num ex:lin ct}

\begin{figure}
\centerline{\includegraphics[scale=.65]{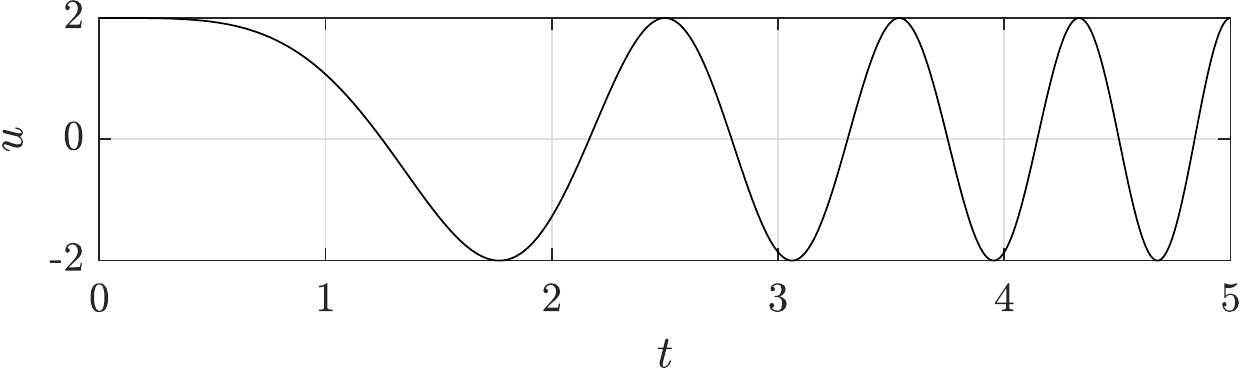}}
\centerline{\includegraphics[scale=.65]{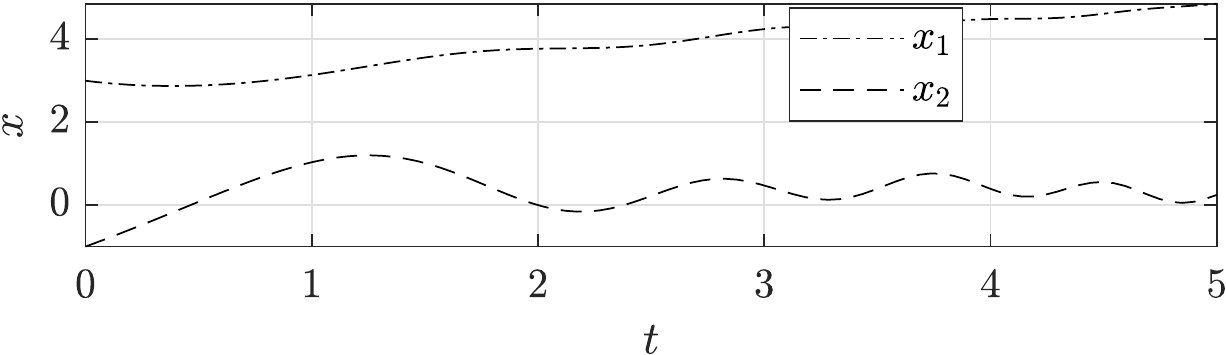}}
\caption{Experiment yielding data from~\eqref{sys-ex:lin-ct}: input and state.}
\label{fig:lin_ct_exp}
\end{figure}

Consider \eqref{sys-ex:lin-ct} with $T=100$ and $\delta =0.1$. 
The experiment generating data $U_0$, $X_0$ and $X_1$ in~\eqref{data} is depicted in Fig.~\ref{fig:lin_ct_exp}.
A sweeping sine with minimum \& maximum frequencies $0$ \& $0.8$ and amplitude $2$ is used as input $u$.
Matrices $X_0$ and $U_0$ satisfy Assumption~\ref{ass:pers exc}.
The times $t_0$, $t_1$, \dots, $t_{T-1}$ when state and state derivative are evaluated for $X_0$ and $X_1$ are uniformly spaced by $5/T=5/100$.
Using the semidefinite program in Theorem~\ref{thm:sol-ct}, a controller $K = \big[-21.4762~-9.2835\big]$ is designed, whose stabilization properties are certified by a Lyapunov function
$x^\top P^{-1} x = x^\top \smat{0.5214  &  0.1430\\0.1430  &  0.0590} x$.
The resulting closed-loop solutions for $d=0$ and the level sets of this Lyapunov function are depicted in Fig.~\ref{fig:lin_ct_pp}.

\begin{figure}
\centerline{\includegraphics[scale=.6]{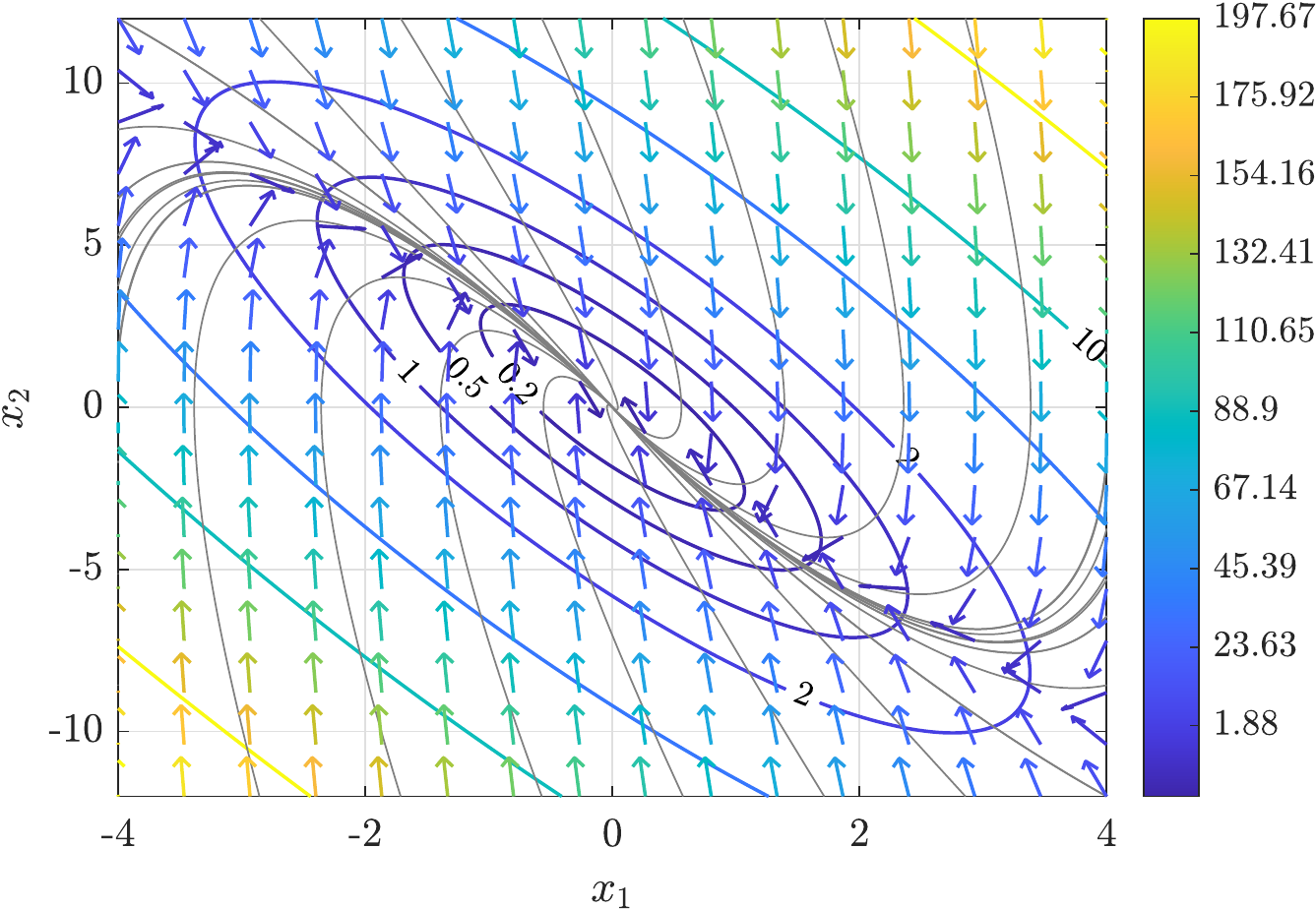}}
\caption{
Phase plot of \eqref{sys-ex:lin-ct} with $d=0$ in closed loop with the data-based controller $K$ obtained in Section~\ref{sec:num ex:lin ct}. Solutions are gray. 
The arrows represent the closed-loop vector field, and their color indicates their actual magnitude as in the right color bar.
The level sets of the Lyapunov function are colored and their corresponding values are annotated.
}
\label{fig:lin_ct_pp}
\end{figure}

\subsection{Polynomial system}
\label{sec:num ex:pol}

\begin{figure}
\centerline{\includegraphics[scale=.65]{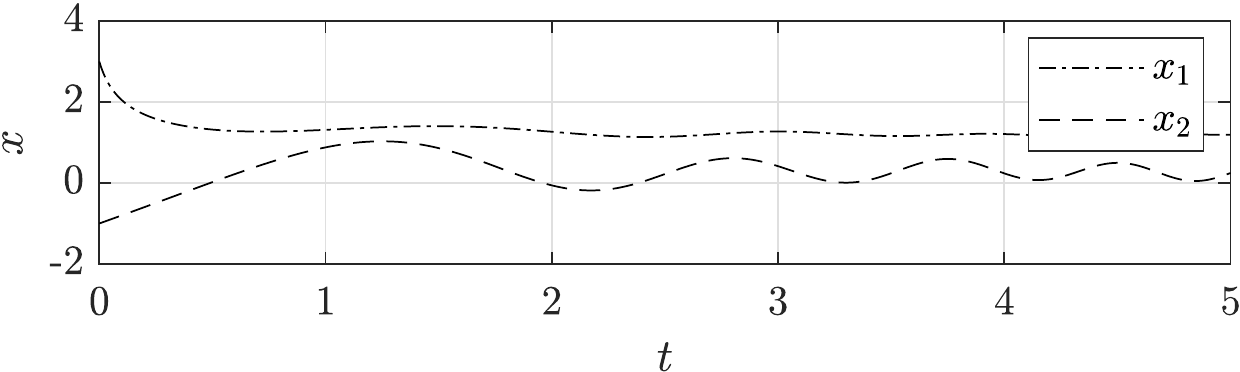}}
\caption{Experiment yielding data from~\eqref{sys-ex:pol}: state. The input is the same as in Fig.~\ref{fig:lin_ct_exp}.}
\label{fig:sos_exp}
\end{figure}

Consider \eqref{sys-ex:pol} with $T=1000$ and $\delta=0.01$. 
Input and disturbance of the experiment generating data $V_0$, $Z_0$ and $X_1$ in~\eqref{data-pol} take the same form as in Section~\ref{sec:num ex:lin ct} with times $t_0$, $t_1$, \dots, $t_{T-1}$ uniformly spaced by $5/T=5/1000$. 
Since \eqref{sys-ex:pol} is now nonlinear unlike \eqref{sys-ex:lin-ct}, these input and disturbance result in a different state evolution $x$, which is reported in Fig.~\ref{fig:sos_exp}.

Whereas the setup of the semidefinite programs in Theorems~\ref{thm:sol} and \ref{thm:sol-ct} is quite straightforward, the setup of the sum-of-squares program from Theorem~\ref{thm:sos} is less so,  and we illustrate its most relevant aspects.\smallskip

\begin{enumerate}[label=\textit{\arabic*)},left=0pt,noitemsep,nosep,wide]
\item The \emph{selection of the regressors} $Z$ and $W$ in~\eqref{sys-pol} is a key step.
On one hand, the system is unknown and some of the monomials in the regressors may not appear in the ``true'' vector fields; on the other hand, the more the monomials and the associated unknown coefficients are, the larger the uncertainty typically is in such coefficients and a too large uncertainty affects feasibility of the sum-of-square program in a critical way. 
Therefore, a parsimonious number of monomials is desirable; which monomials should be taken can be determined by trial and error by solving the program with different selections of regressors.
We select here $Z(x):=(x_2,x_1^2,x_2^2,x_1^3,x_2^3)$ and $W(x):=1$, which determine from \eqref{sys-pol} and \eqref{sys-ex:pol}
\begin{equation*}
\big[A_\star~|~B_\star\big]
:=
\left[
\begin{array}{c|c}
\begin{smallmatrix}
1 & 1 & 0 & -1 & 0\\
0 & 0 & 0 &  0 & 0
\end{smallmatrix}
& 
\begin{smallmatrix}
0\\
1
\end{smallmatrix}
\end{array}
\right].
\end{equation*}

\item The \emph{numerical experiment} is likewise important. Intuitively, the richer the data, the better; for this reason we selected as input a sweeping sine. 
Moreover, when the set $\tilde{\mathcal{C}}$ is obtained as an ellipsoidal over-approximation as described in Section~\ref{sec:C as overapprox}, the size of $\tilde{\mathcal{C}}$, the associated uncertainty, and the degree of robustness required in the design of the controller all decrease with $T$, in general; hence, more data points $T$ enlarge the feasibility set of the sum-of-squares program in~\eqref{sos}. 
We obtain the ellipsoidal over-approximation $\tilde{\mathcal{C}}$ by solving the optimization problem~\eqref{opt overapprox ellips} with \eqref{gamma_i,beta_i,alpha_i} and, for $i = 0, \dots , T-1$,
\begin{equation*}
\varkappa^\circ_i := \dot{x}(t_i), \varkappa_i:=Z(x(t_i)), \upsilon_i :=W(x(t_i)) u(t_i).
\end{equation*}
$\tilde{\mathcal{C}}$ is then defined by the  matrices $\tilde{\mb{A}}$, $\tilde{\mb{B}}$, $\tilde{\mb{C}} := \tilde{\mb{B}}^\top \tilde{\mb{A}}^{-1} \tilde{\mb{B}} - I$ returned by~\eqref{opt overapprox ellips} or, alternatively, by $\tilde{\mb{Q}}=I$ and $\tilde Z_{\tu{c}}:=- \tilde{\mb{A}}^{-1} \tilde{\mb{B}}$.
Matrix $\tilde Z_{\tu{c}}$ is especially relevant as the center of the ellipsoid $\tilde{\mathcal{C}}$. 
For the experiment in Fig.~\ref{fig:sos_exp}, we obtain 
\begin{align*}
\tilde{Z}_{\tu{c}}^\top =
\left[
\begin{array}{c|c}
\begin{smallmatrix}
0.9569  &  1.0243  &  0.0000 & -1.0084  & -0.0627\\
-0.0160 &  0.0146  & -0.0336 & -0.0037  &  0.0334
\end{smallmatrix} & 
\begin{smallmatrix}
0.0009\\
1.0101
\end{smallmatrix}
\end{array}
\right]
\end{align*}
which should be compared against $\big[A_\star~|~B_\star\big]$.\smallskip

\item To solve the sum-of-square program of Theorem~\ref{thm:sos}, we commented after it that we adopt the common practice of \emph{solving alternately two sum-of-square programs}. 
Specifically, we first solve \eqref{sos:decrease} and \eqref{sos:multipl} with respect to the controller $k$ and multiplier $\lambda$ with fixed Lyapunov function $V$; with the returned controller and multiplier, we solve \eqref{sos:pd and rad unbnd} and \eqref{sos:decrease} with respect to $V$ with fixed $k$ and $\lambda$.
To start up this procedure, we need an initial guess for $V$. 
In keeping with the data-based approach, we use the quadratic Lyapunov function that Theorem~\ref{thm:sol-ct} returns for the linearized system with same disturbance level, in this case $x^\top \smat{0.0278 & 0.0127\\ 0.0127 & 0.0216} x$. 
(This correspond to an experiment with small signals in a neighborhood of the origin, so that the linear approximation is trustworthy; the theoretical legitimacy of such an initial guess is based on \cite[Thm.~6]{depersis2020tac}.) The initialization of $V$ is all the more important whenever the feasibility set is small.
In the specific example, we run 15 iterations of this procedure (solving a total of 30 sum-of-squares programs).\smallskip

\item Finally, we mention two aspects regarding the solution of the  two alternate programs above. 
An important aspect for feasibility of each of those is the \emph{selection of the minimum \& maximum degrees of polynomials}, as lucidly explained in \cite[Appendix]{tan2006phdthesis}. 
In this example we select the minimum \& maximum degrees for $V$, $k$, $\lambda$ as respectively 2 \& 4, 1 \& 3, 0 \& 4. 
A minor aspect is that we can take in \eqref{sos} the \emph{parameter $\ell_2$ as decision variable} for greater flexibility since $\ell_2$ appears linearly anyway; when we solve for $k$ and $\lambda$, we also solve for $\ell_2$ and capture that it needs to be positive definite by imposing $\ell_2 \in \mathcal{S}$ (minimum \& maximum degrees equal to 2 \& 4).%
\end{enumerate}

With this procedure and design parameters $\ell_1(x):= 10^{-3}(x_1^2+x_2^2)$ and $\epsilon_\lambda:=10^{-3}$, the obtained $V$, $k$, $\lambda$ are in the next table; the corresponding closed-loop solutions for $d=0$ and the level sets of $V$ are depicted in Fig.~\ref{fig:sos_pp}.\newline
\scriptsize%
\resizebox{.48\textwidth}{!}{
\(
\begin{array}{ll}
\toprule
\text{Qty} & \text{Expression} \\
\midrule
\begin{aligned} V\\ \phantom{V} \\ \phantom{V} \\ \phantom{V} \end{aligned} &
\begin{aligned}
& 4.0698 x_1^2+4.3023 x_1 x_2+3.5364 x_2^2+0.003475 x_1^3\\
& \,+0.02465 x_1^2 x_2-0.01500 x_1 x_2^2+0.001575 x_2^3\\
& \,+0.008769 x_1^4+0.003686 x_1^3 x_2+0.01263 x_1^2 x_2^2\\
& \,+0.0006249 x_1 x_2^3+0.02279 x_2^4
\end{aligned} \\
\begin{aligned} k\\ \phantom{V} \end{aligned} & 
\begin{aligned}
& -1.0291 x_1-1.7292 x_2-0.8793 x_1^2+0.2927 x_1 x_2-0.07565 x_2^2\\
& \,-0.5511 x_1^3-1.6307 x_1^2 x_2+0.08336 x_1 x_2^2-2.5235 x_2^3\\
\end{aligned} \\
\begin{aligned} \lambda\\ \phantom{V} \end{aligned} & 
\begin{aligned}
& 0.04905-0.006151 x_1+0.002003 x_2+0.1106 x_1^2 \\
& \,+ 0.004398 x_1 x_2 +0.1123 x_2^2 
\end{aligned} \\
\bottomrule
\end{array}
\)
}
\normalsize

\begin{figure}
\centerline{\includegraphics[scale=.6]{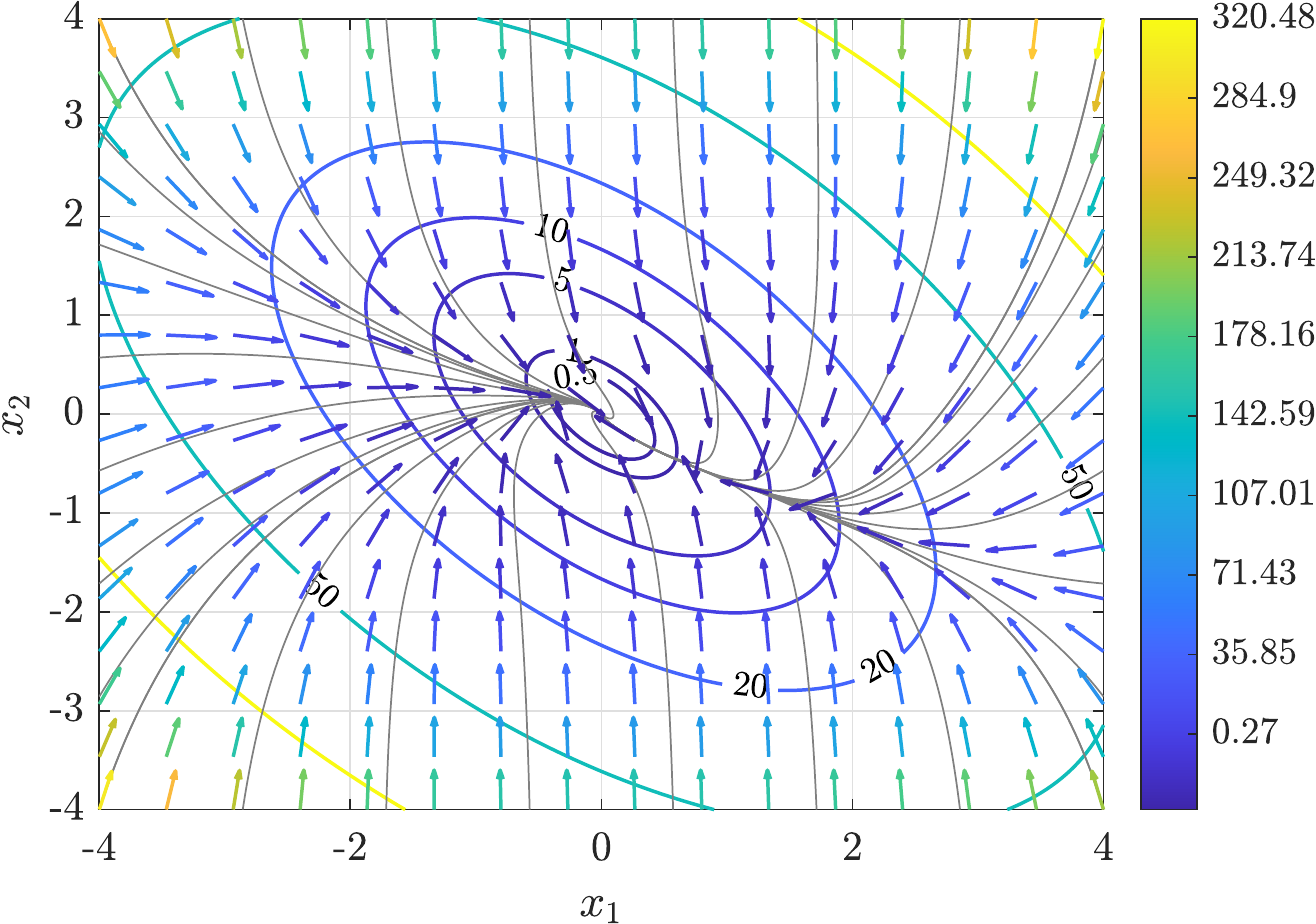}}
\caption{
Phase plot of \eqref{sys-ex:pol} with $d=0$ in closed loop with the data-based controller $x\mapsto k(x)$ obtained in Section~\ref{sec:num ex:pol}.
See the caption of Fig.~\ref{fig:lin_ct_pp} for the interpretation of this figure.}
\label{fig:sos_pp}
\end{figure}

\bibliographystyle{plain}
\bibliography{pubs-petersen}           

\appendix
\section*{Appendix}

\section{Proof of Fact~\ref{fact:petersen-ext}}

To give the proof, we need some auxiliary results. The first one is in the next fact.
\begin{fact}[{\cite[Lemma~A.4]{petersen1986riccati}}]
\label{fact:existence of multiplier}
Consider matrices $\mb{A}$, $\mb{C}$, $\mb{B}$ in $\real^{r \times r}$ with $\mb{A}=\mb{A}^\top \succeq 0$, $\mb{C}=\mb{C}^\top \succeq 0$ and $\mb{B}=\mb{B}^\top \prec 0$. Suppose further that
\begin{equation}
\label{quadratic relation}
(w^\top \mb{B} w)^2 - 4 w^\top \mb{A} w w^\top \mb{C} w > 0 \text{ for all } w \in \real^r \backslash \{0\}.
\end{equation}
Then $\lambda^2 \mb{A} + \lambda \mb{B} + \mb{C} \prec 0$ for some $\lambda > 0$.
\end{fact}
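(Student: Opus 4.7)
The plan is to construct a candidate $\lambda^* > 0$ for which $M(\lambda^*) := \lambda^{*2}\mb{A} + \lambda^* \mb{B} + \mb{C} \preceq 0$ and then perturb slightly to obtain a strict inequality. For each $w \in S^{r-1}$, let $q_w(\lambda) := \lambda^2 w^\top\! \mb{A} w + \lambda w^\top\! \mb{B} w + w^\top\! \mb{C} w$. When $w^\top\!\mb{A}w > 0$, the hypothesis gives $q_w$ strictly positive discriminant, hence two distinct real roots $\lambda_-(w) < \lambda_+(w)$; the sign conditions $\mb{A},\mb{C} \succeq 0$, $\mb{B} \prec 0$ force both roots to be nonnegative (their sum $-w^\top\!\mb{B}w/w^\top\!\mb{A}w > 0$ and product $w^\top\!\mb{C}w/w^\top\!\mb{A}w \geq 0$). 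When $w^\top\!\mb{A}w = 0$, $q_w$ degenerates to a line, and I set $\lambda_-(w) := -w^\top\!\mb{C}w/w^\top\!\mb{B}w \geq 0$, $\lambda_+(w) := +\infty$; a direct limit check shows $\lambda_-$ extends continuously and $\lambda_+$ lower-semicontinuously to the whole sphere. Set $\lambda^* := \sup_{|w|=1} \lambda_-(w)$, attained at some $w^*$ by compactness. (The degenerate cases $\mb{A} = 0$ and $\mb{C} = 0$ are trivial: take $\lambda$ large or small, respectively.)

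Combining the Lagrangian first-order condition $\nabla_w \lambda_-(w^*) \parallel w^*$ for the constrained maximum on $S^{r-1}$ with the implicit derivative $\nabla_w \lambda_-(w) = -2\,M(\lambda_-(w))w / q_w'(\lambda_-(w))$ obtained by differentiating $q_w(\lambda_-(w)) = 0$, I find that $M(\lambda^*)w^*$ is a scalar multiple of $w^*$; together with $w^{*\top} M(\lambda^*) w^* = q_{w^*}(\lambda^*) = 0$ this forces
\[
M(\lambda^*)\, w^* = 0.
\]
The main step is to show $\lambda^* \leq \lambda_+(v)$ for every $v \in S^{r-1}$, which together with the automatic $\lambda^* \geq \lambda_-(v)$ yields $q_v(\lambda^*) \leq 0$ for all $v$, i.e., $M(\lambda^*) \preceq 0$. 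Suppose instead that $\lambda^* > \lambda_+(v_0)$ for some $v_0$, so $q_{v_0}(\lambda^*) > 0$. Without loss of generality take $v_0 \perp w^*$ and form the great circle $w_\theta := \cos\theta\,w^* + \sin\theta\,v_0$. The identity $M(\lambda^*)w^* = 0$ kills the cross-term, so $q_{w_\theta}(\lambda^*) = \sin^2\theta\, q_{v_0}(\lambda^*) > 0$ for $\theta \neq 0$. Hence $\lambda^*$ must lie either to the left of $\lambda_-(w_\theta)$, which contradicts $\lambda^* = \sup \lambda_-$, or to the right of $\lambda_+(w_\theta)$; so $\lambda_+(w_\theta) < \lambda^*$. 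Letting $\theta \to 0$ and invoking lower semicontinuity of $\lambda_+$ gives $\lambda_+(w^*) \leq \lambda^* = \lambda_-(w^*)$, making $\lambda^*$ a double root of $q_{w^*}$---zero discriminant at $w^*$, contradicting the hypothesis.

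To upgrade $M(\lambda^*) \preceq 0$ to a strict inequality, pick $\lambda = \lambda^* + \epsilon$ with $\epsilon > 0$ small and expand $M(\lambda^*+\epsilon) = M(\lambda^*) + \epsilon M'(\lambda^*) + \epsilon^2\mb{A}$, where $M'(\lambda) = 2\lambda\mb{A} + \mb{B}$. On the orthogonal complement of $\ker M(\lambda^*)$, $M(\lambda^*)$ is strictly negative and continuity preserves this for small $\epsilon$. On $\ker M(\lambda^*) \cap S^{r-1}$, an analogous intermediate-value-along-a-great-circle argument inside $\ker M(\lambda^*)$ rules out $\lambda^* = \lambda_+(v)$ for any such $v$ (otherwise on a path joining $v$ to $w^*$ one obtains a unit vector $u$ with both $q_u(\lambda^*) = 0$ and $q_u'(\lambda^*) = 0$, giving a double root and hence zero discriminant at $u$, contradicting the hypothesis). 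Hence $\lambda^* = \lambda_-(v)$, so $v^\top M'(\lambda^*) v = q_v'(\lambda^*) = -\sqrt{(v^\top\!\mb{B}v)^2 - 4(v^\top\!\mb{A}v)(v^\top\!\mb{C}v)} < 0$ (the slope at the smaller root), giving $v^\top M(\lambda^*+\epsilon) v = \epsilon\, q_v'(\lambda^*) + O(\epsilon^2) < 0$. Combining the two pieces, $M(\lambda^*+\epsilon) \prec 0$. The main obstacle I expect is the careful handling of the boundary $w^\top\!\mb{A}w = 0$ (continuity/lsc of $\lambda_\pm$, validity of the Lagrange derivation when $\mb{A}w^* = 0$, and the limit $\theta \to 0$ when $\lambda_+(w^*) = +\infty$), together with making the great-circle and intermediate-value steps go through when $\ker M(\lambda^*)$ is multi-dimensional.
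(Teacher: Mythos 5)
Your argument is correct, but note that the paper does not actually prove this statement: Fact~\ref{fact:existence of multiplier} is imported verbatim from Petersen (1986, Lemma~A.4) with only a citation, so there is no in-paper proof to compare against. Your route also differs from Petersen's original one. His proof is by contradiction and connectedness: assuming no $\lambda>0$ works, the half-line $(0,\infty)$ is covered by the two closed sets of $\lambda$'s for which some unit $w$ gives $q_w(\lambda)\ge 0$ with $q_w'(\lambda)\le 0$, respectively $q_w'(\lambda)\ge 0$; connectedness forces a common $\lambda_0$ and an interpolation argument produces a unit $w_0$ with $q_{w_0}(\lambda_0)\ge 0$ and $q_{w_0}'(\lambda_0)=0$, i.e.\ a nonnegative value at the vertex of the parabola, which contradicts the positive-discriminant hypothesis. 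Your variational construction --- maximizing the smaller root $\lambda_-$ over the sphere, extracting $M(\lambda^*)w^*=0$ from the Lagrange condition together with $q_{w^*}(\lambda^*)=0$, and then using the great-circle expansion plus the extended-real continuity of $\lambda_+$ to force a zero discriminant --- reaches the same contradiction by a different mechanism, and has the advantage of producing an explicit candidate $\lambda^*$ (the largest ``left root'') rather than a pure existence statement. The places you flag as obstacles are all genuinely fine: $\lambda_-$ is continuous (indeed smooth, via the rationalized formula $2c/(-b+\sqrt{b^2-4ac})$ with denominator bounded below by $\min_{|w|=1}(-w^\top\mb{B}w)>0$), $\lambda_+$ is continuous into $(0,+\infty]$, and the implicit-function step is licensed by $q_w'(\lambda_\pm(w))\neq 0$, which is exactly the positive-discriminant hypothesis. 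The only step you leave genuinely unspelled is ``combining the two pieces'' at the end; it does require the standard Young/Schur-complement estimate to control the $O(\epsilon)$ off-diagonal coupling between $\ker M(\lambda^*)$ (where the diagonal is $-\epsilon c_1+O(\epsilon^2)$) and its complement (where it is $-\delta+O(\epsilon)$), but this is routine and closes without difficulty.
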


By \cite[Thm.~7.2.10]{horn2013matrix}, $\overline{\mb{F}} = \overline{\mb{F}}^\top$ is positive semidefinite if and only if there exists a $s$-by-$q$ matrix $\Phi$ such that $\overline{\mb{F}} = \Phi^\top \Phi$; hence, $\mathcal{F}$ in~\eqref{set cal F} rewrites as
\begin{equation}
\mathcal{F}:=\{\mb{F} \in \real^{p \times q}\colon \mb{F}^\top \mb{F} \preceq \Phi^\top \Phi \}. \label{set cal F alt}
\end{equation}

The second auxiliary result is in essence \cite[Lemma~3.1]{petersen1987stabilization}, of which however we need a slight extension to handle, in the set $\mathcal{F}$ in~\eqref{set cal F alt}, the condition $\mb{F}^\top \mb{F} \preceq \Phi^\top \Phi$ (with positive \emph{semidefinite} bound) instead of $\mb{F}^\top \mb{F} \preceq I$, which appears in \cite[Lemma~3.1]{petersen1987stabilization}.
This is done in the next lemma, for which we present also a short proof to account for the required modification.

\begin{lemma}
\label{lemma:var elim}
For vectors $x$ in $\mathbb{R}^{p}$, $y$ in $\mathbb{R}^{q}$ and set $\mathcal{F}$ in~\eqref{set cal F alt}, $\max _{\mb{F} \in \mathcal{F}}(x^\top \mb{F} y)^2=|x|^2\ |\Phi y|^2$.
\end{lemma}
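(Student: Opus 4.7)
The plan is to prove the claim by separately establishing an upper bound and exhibiting an explicit maximizer that attains it. The upper bound is an immediate consequence of the Cauchy--Schwarz inequality: for any $\mb{F} \in \mathcal{F}$,
\[
(x^\top \mb{F} y)^2 \le |x|^2 \, |\mb{F} y|^2 = |x|^2 \, y^\top \mb{F}^\top \mb{F} y \le |x|^2 \, y^\top \Phi^\top \Phi y = |x|^2 \, |\Phi y|^2,
\]
using the hypothesis $\mb{F}^\top \mb{F} \preceq \Phi^\top \Phi$ in the last inequality.

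For the matching lower bound, I would dispatch the degenerate cases first. If $x = 0$, both sides are zero. If $\Phi y = 0$, then for any $\mb{F} \in \mathcal{F}$ one has $|\mb{F} y|^2 = y^\top \mb{F}^\top \mb{F} y \le y^\top \Phi^\top \Phi y = 0$, so again $x^\top \mb{F} y = 0$ and both sides vanish. In the remaining case $x \ne 0$ and $\Phi y \ne 0$, I would construct the explicit candidate
\[
\mb{F}^\star := \frac{x \, (\Phi y)^\top \Phi}{|x|\,|\Phi y|}.
\]
A direct computation gives $\mb{F}^\star y = (|\Phi y|/|x|)\, x$, so that $x^\top \mb{F}^\star y = |x|\,|\Phi y|$ and hence $(x^\top \mb{F}^\star y)^2 = |x|^2 |\Phi y|^2$, matching the upper bound.

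The one nontrivial step is verifying that $\mb{F}^\star$ actually lies in $\mathcal{F}$, i.e.\ that $(\mb{F}^\star)^\top \mb{F}^\star \preceq \Phi^\top \Phi$. I would compute
\[
(\mb{F}^\star)^\top \mb{F}^\star = \frac{\Phi^\top \Phi y \, y^\top \Phi^\top \Phi}{|\Phi y|^2} = \frac{M y y^\top M}{y^\top M y}
\]
with $M := \Phi^\top \Phi \succeq 0$, and then show $M y y^\top M \preceq (y^\top M y)\,M$. Testing against an arbitrary vector $v$, this is equivalent to $(v^\top M y)^2 \le (v^\top M v)(y^\top M y)$, which is precisely the Cauchy--Schwarz inequality for the positive semidefinite bilinear form $(u,w) \mapsto u^\top M w$. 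Thus $\mb{F}^\star \in \mathcal{F}$, the maximum is attained, and the claim follows. The main (and essentially only) subtlety is this last Cauchy--Schwarz step for a possibly singular $M$; the other ingredients are routine.
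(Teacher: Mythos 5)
Your proof is correct and follows essentially the same route as the paper: the same Cauchy--Schwarz upper bound, the same treatment of the degenerate cases, and the identical maximizer $\mb{F}^\star = x\,y^\top\Phi^\top\Phi/(|x|\,|\Phi y|)$. The only cosmetic difference is in verifying $\mb{F}^{\star\top}\mb{F}^\star \preceq \Phi^\top\Phi$ — you invoke Cauchy--Schwarz for the semidefinite form $M=\Phi^\top\Phi$, whereas the paper bounds the rank-one factor $\Phi y y^\top \Phi^\top \preceq |\Phi y|^2 I$ and conjugates by $\Phi$; these are the same inequality.
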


\begin{proof} By Cauchy-Schwarz's inequality and \eqref{set cal F alt},
\[
|x^\top \mb{F} y | \le |x| |\mb{F} y| = |x| \sqrt{y^\top \mb{F}^\top \mb{F} y} \le |x| \sqrt{y^\top \Phi^\top \Phi y},
\]
that is, $|x^\top \mb{F} y | \le |x| |\Phi y |$. From this relation we have that the statement is true if $x=0$ or $\Phi y =0$.
The proof is complete if, for $x \neq 0$ and $\Phi y \neq 0$, we obtain $\mb{F} \in \mathcal{F}$ such that $|x^\top \mb{F} y | = |x| |\Phi y |$, as we do in the rest of the proof. Since $x \neq 0$ and $\Phi y \neq 0$, take the specific selection
\[
\mb{F} :=x y^\top \Phi^\top \Phi / \big(|x| |\Phi y| \big).
\]
First, we show $\mb{F} \in \mathcal{F}$. Indeed,
\[
\mb{F}^\top \mb{F}
=
\frac
{\Phi^\top ( \Phi y x^\top x y^\top \Phi^\top) \Phi}
{|x|^2|\Phi y|^2} 
=
\frac
{\Phi^\top(\Phi y y^\top \Phi^\top) \Phi}
{|\Phi y|^2}
\]
and $\mb{F} \in \mathcal{F}$ because $\Phi y y^\top \Phi^\top \preceq |\Phi y|^2 I$ (for all $v \in \real^s$, 
$v^\top \big( \Phi y y^{\top} \Phi^\top \big) v \le |v^{\top} \Phi y \| y^{\top} \Phi^\top v| \le |\Phi y|^{2} |v|^2$). Second, 
\[
(x^\top \mb{F} y)^2
=\Big(x^\top \frac{x y^\top \Phi^\top \Phi}{|x| |\Phi y|} y \Big)^2
=|x|^2 |\Phi y |^2
\]
so we have also shown $|x^\top \mb{F} y | = |x| |\Phi y |$.
\end{proof}

With Fact~\ref{fact:existence of multiplier} and Lemma~\ref{lemma:var elim}, we can prove Fact~\ref{fact:petersen-ext}. 
The direction \eqref{petersen-ext:multipl}$\implies$\eqref{petersen-ext:for all} is easy since for all $\mb{F} \in \mathcal{F}$ in~\eqref{set cal F alt},
\begin{align*}
0 & \succ \mb{C}+\lambda \mb{E} \mb{E}^\top +\lambda^{-1} \mb{G}^\top \Phi^\top \Phi \mb{G}\\
& \succeq \mb{C}+\lambda \mb{E} \mb{E}^\top +\lambda^{-1} \mb{G}^\top \mb{F}^\top \mb{F} \mb{G} \quad\text{(by $\lambda>0$)}\\
&  = \mb{C} + \mb{E} \mb{F} \mb{G} + \mb{G}^\top \mb{F}^\top\mb{E}^\top\\
& \hspace*{10mm} + 
(\sqrt{\lambda}  \mb{E}^\top  - \sqrt{\lambda}^{-1}  \mb{F} \mb{G})^\top(\sqrt{\lambda}  \mb{E}^\top - \sqrt{\lambda}^{-1}  \mb{F} \mb{G})\\
&  \succeq \mb{C} + \mb{E} \mb{F} \mb{G} + \mb{G}^\top \mb{F}^\top\mb{E}^\top.
\end{align*}
We turn then to the direction \eqref{petersen-ext:for all}$\implies$\eqref{petersen-ext:multipl}. 
\eqref{petersen-ext:for all} is equivalent to the fact that for all $x \neq 0$ and for all $\mb{F} \in \mathcal{F}$, $x^\top \mb{C} x + 2 x^\top \mb{E} \mb{F} \mb{G} x < 0$ and to the fact that for all $x \neq 0$, $0 > x^\top \mb{C} x + 2 \max_{\mb{F} \in \mathcal{F}} (x^\top \mb{E} \mb{F} \mb{G} x ) = x^\top \mb{C} x + 2 \max_{\mb{F} \in \mathcal{F}} |x^\top \mb{E} \mb{F} \mb{G} x |$  because there exists a value of $\mb{F} \in \mathcal{F}$, namely $\mb{F} = 0$, that makes $x^\top \mb{E} \mb{F} \mb{G} x$ nonnegative.
Apply Lemma~\ref{lemma:var elim} and obtain that for all $x \neq 0$, $x^\top \mb{C} x + 2 |\mb{E}^\top x| |\Phi \mb{G} x| < 0$. 
For this to hold, we necessarily have $x^\top \mb{C} x < 0$ for all $x \neq 0$, i.e., $\mb{C} \prec 0$. 
Under $\mb{C} \prec 0$, $x^\top \mb{C} x + 2 |\mb{E}^\top x| |\Phi \mb{G} x| < 0$ for all $x \neq 0$ is equivalent to the fact that for all $x \neq 0$, $(x^\top \mb{C} x)^2 > 4 |\mb{E}^\top x|^2 |\Phi\mb{G} x|^2$. 
This relation corresponds to \eqref{quadratic relation} and the hypothesis of Fact~\ref{fact:existence of multiplier} is verified. Hence, we conclude from Fact~\ref{fact:existence of multiplier} that $\lambda^2 \mb{E} \mb{E}^\top + \lambda \mb{C} + \mb{G}^\top \Phi^\top \Phi \mb{G} \prec 0$ for some $\lambda>0$, which is equivalent to $\lambda^2 \mb{E} \mb{E}^\top + \lambda \mb{C} + \mb{G}^\top \overline{\mb{F}} \mb{G} \prec 0$ and \eqref{petersen-ext:multipl}.

\arxiv{
\section{Proof of Fact~\ref{fact:petersen-nonstrict}}
We start with some preliminary claims. We report (a version of) the classical nonstrict S-procedure.
\begin{fact}[\cite{boyd1994linear}, p.~23]
\label{fact:nonstrictSproc}
Let $T_0$ and $T_1$ be $n$-by-$n$ symmetric matrices and assume there is some $\zeta_0$ such that $\zeta_0^\top T_1 \zeta_0>0$.
\begin{center}
$\zeta^\top T_0 \zeta \ge 0$ for all $\zeta$ such that $\zeta^\top T_1 \zeta \ge 0$
\end{center}
if and only if
\begin{center}
there exists $\tau_1 \ge 0$ such that $T_0 - \tau_1 T_1 \succeq 0$.
\end{center}
\end{fact}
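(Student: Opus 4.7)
The plan is to treat the two directions separately. The ``if'' direction is immediate: if $\tau_1\ge 0$ satisfies $T_0-\tau_1 T_1\succeq 0$ and $\zeta$ satisfies $\zeta^\top T_1\zeta\ge 0$, then $\zeta^\top T_0\zeta = \zeta^\top(T_0-\tau_1 T_1)\zeta + \tau_1\,\zeta^\top T_1\zeta \ge 0$; notably, the Slater-type hypothesis $\zeta_0^\top T_1\zeta_0>0$ plays no role in this implication.

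The ``only if'' direction is the substantive one, and I would approach it by a separating-hyperplane argument applied to the joint range of the two quadratic forms. Introduce the joint range $\mathcal{K}:=\{(\zeta^\top T_0\zeta,\,\zeta^\top T_1\zeta):\zeta\in\real^n\}\subseteq\real^2$ and the open convex set $\mathcal{U}:=\{(u,v)\in\real^2:u<0,\,v>0\}$. The quadratic implication in the hypothesis forces $\mathcal{K}\cap\mathcal{U}=\emptyset$, since any $(u,v)\in\mathcal{K}$ with $v>0$ (hence $v\ge 0$) must satisfy $u\ge 0$.

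The key external input, and the only step requiring nontrivial background, is Dines' theorem: for any two real symmetric matrices, the joint range $\mathcal{K}$ is a convex cone in $\real^2$. I would cite this as the pivotal fact; reproving it is the real obstacle, while everything else is routine convex geometry. Granted convexity of $\mathcal{K}$, and noting that $\mathcal{U}$ has nonempty interior, the standard separating-hyperplane theorem yields $(\alpha,\beta)\ne(0,0)$ and some constant $c$ with $\alpha u+\beta v\ge c$ on $\mathcal{K}$ and $\alpha u+\beta v\le c$ on $\mathcal{U}$. Since $(0,0)\in\mathcal{K}$ we get $c\le 0$; and since $\mathcal{K}$ is a cone (if $(u,v)\in\mathcal{K}$ then $(t^2 u,\,t^2 v)\in\mathcal{K}$ for every $t\in\real$), the inequality $t^2(\alpha u+\beta v)\ge c$ for all $t$ forces $\alpha u+\beta v\ge 0$ throughout $\mathcal{K}$.

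It remains to extract sign information on $(\alpha,\beta)$ and invoke the Slater point. Letting $u\to-\infty$ with $v>0$ fixed inside $\mathcal{U}$ forces $\alpha\ge 0$, and letting $v\to+\infty$ with $u<0$ fixed forces $\beta\le 0$. The assumption $\zeta_0^\top T_1\zeta_0>0$ now rules out $\alpha=0$: if $\alpha=0$ then $\beta<0$, and $\beta\zeta^\top T_1\zeta\ge 0$ for all $\zeta$ would give $\zeta^\top T_1\zeta\le 0$ everywhere, contradicting the Slater point $\zeta_0$. Hence $\alpha>0$, and setting $\tau_1:=-\beta/\alpha\ge 0$ converts $\alpha u+\beta v\ge 0$ on $\mathcal{K}$ into $\zeta^\top(T_0-\tau_1 T_1)\zeta\ge 0$ for all $\zeta\in\real^n$, i.e., $T_0-\tau_1 T_1\succeq 0$, completing the argument.
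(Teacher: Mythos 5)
Your proof is correct. Note, however, that the paper does not prove this statement at all: it is quoted verbatim as a known fact with a citation to Boyd et al.\ (p.~23), so there is no in-paper argument to compare against. What you have written is the standard losslessness proof of the one-constraint S-procedure: disjointness of the joint range $\mathcal{K}$ from the open quadrant $\mathcal{U}$, Dines' theorem for convexity of $\mathcal{K}$, separation, and the Slater point to pin down $\alpha>0$. All the individual steps check out — the cone property of $\mathcal{K}$ correctly upgrades $\alpha u+\beta v\ge c$ to $\alpha u+\beta v\ge 0$, the sign extraction from the unboundedness of $\mathcal{U}$ is sound, and the Slater hypothesis is used exactly where it must be. The only external input is Dines' theorem, which you are right to flag as the genuinely nontrivial ingredient and which is legitimate to cite rather than reprove.
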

As noted in the proof of Fact~\ref{fact:petersen-ext}, $\overline{\mb{F}} = \Phi^\top \Phi$ for some $s$-by-$q$ matrix $\Phi$ and $\mathcal{F}$ rewrites as
\begin{equation*}
\mathcal{F} =\{\mb{F} \in \real^{p \times q}\colon \mb{F}^\top \mb{F} \preceq \Phi^\top \Phi \}. 
\end{equation*}
\eqref{petersen-nonstrict:for all} is equivalent to
\begin{subequations}
\begin{align}
& x^\top( \mb{C} + \mb{E} \mb{F} \mb{G} + \mb{G}^\top \mb{F}^\top \mb{E}^\top ) x \le 0 \nonumber \\
& \hspace*{0mm}\text{ for all } x \in \mathbb{R}^{n} \text{ and } \mb{F} \in \mathbb{R}^{p \times q} \text{ with } \mb{F}^\top \mb{F} \preceq \Phi^\top \Phi. \label{petersen-nonstrict:for all alt}
\end{align}
\eqref{petersen-nonstrict:multipl} is equivalent, by Schur complement, to
\begin{align}
& \text{there exists } \lambda > 0 \colon 
0 \succeq \bmat{
\mb{C} + \frac{1}{\lambda} \mb{G}^{\top} \Phi^\top \Phi \mb{G}  & \mb{E} \\
\mb{E}^\top & - \frac{1}{\lambda} I} \notag \\
& =
\bmat{
\mb{C}  & \mb{E} \\
\mb{E}^\top & 0}
+ \frac{1}{\lambda}
\bmat{
\mb{G}^{\top} \Phi^\top \Phi \mb{G}  & 0 \\
0 & - I}. \label{petersen-nonstrict:multipl alt}
\end{align}
\end{subequations}
We can now prove the two directions of implication.

($\Longleftarrow$) Assume that \eqref{petersen-nonstrict:multipl} holds or, equivalently, \eqref{petersen-nonstrict:multipl alt}. We have then that for some $\lambda >0$
\begingroup
\setlength\arraycolsep{1.5pt}%
\thinmuskip=0.3mu plus 1mu
\medmuskip=0.6mu plus 2mu
\thickmuskip=0.9mu plus 3mu
\begin{equation}
\label{petersen-nonstrict:multipl alt alt}
\bmat{x\\ y}^\top 
\left(
\bmat{
\mb{C}  & \mb{E} \\
\mb{E}^\top & 0}
+ \frac{1}{\lambda}
\bmat{
\mb{G}^{\top} \Phi^\top \Phi \mb{G}  & 0 \\
0 & - I}
\right)
\bmat{x\\ y}
\le 0 \quad \forall (x, y).
\end{equation}
\endgroup
Consider now arbitrary $x$ and $\mb{F}$ such that $\mb{F}^\top \mb{F} \preceq \Phi^\top \Phi$. Select $y = \mb{F} \mb{G} x$ in~\eqref{petersen-nonstrict:multipl alt alt}, and obtain that for each $x$ and $\mb{F}$ such that $\mb{F}^\top \mb{F} \preceq \Phi^\top \Phi$
\begingroup
\thinmuskip=0.3mu plus 1mu
\medmuskip=0.6mu plus 2mu
\thickmuskip=0.9mu plus 3mu
\begin{align*}
0 &  \ge 
\bmat{x\\ \mb{F} \mb{G} x}^\top 
\left(
\bmat{
\mb{C}  & \mb{E} \\
\mb{E}^\top & 0}
+ \frac{1}{\lambda}
\bmat{
\mb{G}^{\top} \Phi^\top \Phi \mb{G}  & 0 \\
0 & - I}
\right)
\bmat{x\\ \mb{F} \mb{G} x}\\
& = x^\top \big( \mb{C}  + \mb{E}  \mb{F} \mb{G} + \mb{G}^\top \mb{F}^\top \mb{E}^\top
+ \frac{1}{\lambda}  \mb{G}^\top (\Phi^\top \Phi - \mb{F}^\top \mb{F}) \mb{G} \big) x\\
& \ge x^\top  ( \mb{C}  + \mb{E}  \mb{F} \mb{G} + \mb{G}^\top \mb{F}^\top \mb{E}^\top ) x
\end{align*}
\endgroup
since $\lambda > 0$ and $ \mb{F}^\top \mb{F} \preceq \Phi^\top \Phi$.
In summary, we have shown that if \eqref{petersen-nonstrict:multipl} or, equivalently, \eqref{petersen-nonstrict:multipl alt} hold, then for each $x$ and $\mb{F}$ with $\mb{F}^\top \mb{F} \preceq \Phi^\top \Phi$, $ x^\top  ( \mb{C}  + \mb{E}  \mb{F} \mb{G} + \mb{G}^\top \mb{F}^\top \mb{E}^\top ) x \le 0$, i.e., \eqref{petersen-nonstrict:for all alt} or, equivalently, \eqref{petersen-nonstrict:for all} hold. 
Since we have shown this \emph{without} invoking $\mb{E} \neq 0$ or $\overline{\mb{F}} \succ 0$ or $\mb{G} \neq 0$, we have also proven the last statement of Fact~\ref{fact:petersen-nonstrict}.

($\Longrightarrow$) First, we show that if \eqref{petersen-nonstrict:for all alt} holds, then
\begin{align}
& x^\top \mb{C} x + x^\top \mb{E} y + y^\top \mb{E}^\top x  \le 0 \notag \\
& \hspace*{5mm} \text{for all } (x,y) \colon y^\top y \le x^\top \mb{G}^\top \Phi^\top \Phi \mb{G} x. \label{cond_x_y}
\end{align}
Consider arbitrary $x$ and $y$ that satisfy $y^\top y \le x^\top \mb{G}^\top \Phi^\top \Phi \mb{G} x$, i.e., $|y|^2 \le |\Phi \mb{G} x|^2$. 
For such $x$ and $y$ we can select $\mb{F}$ such that $y=(y_1,\dots,y_p)=\mb{F} \mb{G} x$ and $\mb{F}^\top \mb{F} \preceq \Phi^\top \Phi$. 
Indeed, if $\Phi \mb{G} x = 0$, $y = 0$ and $\mb{F}=0$ in~\eqref{petersen-nonstrict:for all alt} implies that \eqref{cond_x_y} holds; if $\Phi \mb{G} x \neq 0$, select 
\[
\mb{F} = \bmat{y_1 \frac{ x^\top \mb{G}^\top \Phi^\top \Phi}{|\Phi \mb{G} x|^2}  \\ \vdots
\\
y_p \frac{ x^\top \mb{G}^\top \Phi^\top \Phi}{|\Phi \mb{G} x|^2}},
\]
which verifies $\mb{F} \mb{G} x = y$ and
\begin{align*}
\mb{F}^\top  \mb{F} &  = \frac{y^\top y}{|\Phi \mb{G} x|^4} \Phi^\top \big( \Phi \mb{G} x x^\top \mb{G}^\top \Phi^\top \big) \Phi \\
& \preceq \frac{x^\top \mb{G}^\top \Phi^\top \Phi \mb{G} x }{|\Phi \mb{G} x|^4} \Phi^\top \big( |\Phi \mb{G} x|^2 I \big) \Phi = \Phi^\top  \Phi
\end{align*}
since $x$ and $y$ satisfy $y^\top y \le x^\top \mb{G}^\top \Phi^\top \Phi \mb{G} x$.
We have shown that for all $x$ and $y$ such that $y^\top y \le x^\top \mb{G}^\top \Phi^\top \Phi \mb{G} x$, we can select $\mb{F}$ such that $y=\mb{F} \mb{G} x$ and $\mb{F}^\top \mb{F} \preceq \Phi^\top \Phi$, so that \eqref{petersen-nonstrict:for all alt} applies and gives
\begin{align*}
0 & \ge x^\top \mb{C} x + x^\top \mb{E}  \mb{F} \mb{G} x + x^\top  \mb{G}^\top \mb{F}^\top \mb{E}^\top x \\
& = x^\top \mb{C} x + x^\top \mb{E}  y + y^\top \mb{E}^\top x,
\end{align*}
i.e., \eqref{cond_x_y} holds as we wanted to show.

Second, we apply Fact~\ref{fact:nonstrictSproc} to~\eqref{cond_x_y} or, better, to its equivalent version given for variable $z:=(x,y)$ as
\begin{equation}
\label{applySproc}
z^\top \bmat{
\mb{C} & \mb{E}\\
\mb{E}^{\top}  & 0} z \le 0 
\quad \forall z \colon
z^\top \bmat{
-\mb{G}^\top \Phi^\top \Phi \mb{G} & 0 \\
0 & I} z \le 0.
\end{equation}
We start verifying the assumption of Fact~\ref{fact:nonstrictSproc} that $ z_0^\top \smat{-\mb{G}^\top \Phi^\top \Phi \mb{G} & 0 \\
0 & I} z_0 < 0$  for some $z_0$.
Suppose not by contradiction: i.e., for all $z_0$, $z_0^\top \smat{-\mb{G}^\top \Phi^\top \Phi \mb{G} & 0 \\
0 & I} z_0 \ge 0$. 
Then, for all $x_0$, $\smat{x_0\\0}^\top \smat{-\mb{G}^\top \Phi^\top \Phi \mb{G} & 0 \\
0 & I} \smat{x_0\\0} = -|\Phi \mb{G} x_0|^2 \ge 0$; i.e., for all $x_0$, $\Phi \mb{G} x_0 = 0$. 
We now use that since $\overline{\mb{F}} \succ 0$, $\Phi$ can be taken as the square root $\overline{\mb{F}}^{1/2}$ of $\overline{\mb{F}}$, which satisfies $\overline{\mb{F}}^{1/2} \succ 0$ \cite[Thm.~7.2.6]{horn2013matrix} and is invertible. 
Hence, it must be $\mb{G} x_0 = 0$ for all $x_0$ and this is a contradiction since $\mb{G} \neq 0$ by assumption and has thus at least a nonzero element which makes $\mb{G} x_0 \neq 0$ for a suitable $x_0$ (e.g., a unit versor).
Therefore, there is $z_0$ such that $z_0^\top \smat{-\mb{G}^\top \Phi^\top \Phi \mb{G} & 0 \\
0 & I} z_0 < 0$, we can apply Fact~\ref{fact:nonstrictSproc} to~\eqref{applySproc} and, by it, there exists $\lambda \ge 0$ such that
\begingroup
\setlength\arraycolsep{.5pt}%
\thinmuskip=0.1mu plus 1mu
\medmuskip=0.2mu plus 2mu
\thickmuskip=0.3mu plus 3mu
\begin{equation*}
0 \succeq
\bmat{
\mb{C} & \mb{E}\\
\mb{E}^{\top}  & 0}
-\lambda 
\bmat{
-\mb{G}^\top \Phi^\top \Phi \mb{G} & 0 \\
0 & I} 
=
\bmat{
\mb{C} + \lambda \mb{G}^\top \Phi^\top \Phi \mb{G} & \mb{E}\\
\mb{E}^{\top}  & - \lambda I}.
\end{equation*}
\endgroup
If $\lambda = 0$, we must have $\mb{E} = 0$, which contradicts the assumption; hence, the existing $\lambda$ is positive. By replacing $\lambda>0$ with $1/\lambda>0$, the last condition proves \eqref{petersen-nonstrict:multipl alt} or, equivalently, \eqref{petersen-nonstrict:multipl}.
}

\end{document}